	 \newcounter{theorem_c} 
	 \numberwithin{theorem_c}{section} 
	 \numberwithin{equation}{section} 
	\theoremstyle{plain} 
 	\newtheorem{theorem}[theorem_c]{Theorem}
 	\newtheorem{lemma}[theorem_c]{Lemma}
 	\newtheorem{corollary}[theorem_c]{Corollary}
	\theoremstyle{definition} 
 	\newtheorem{definition}[theorem_c]{Definition}
 	\newtheorem{example}[theorem_c]{Example}
 	\newtheorem{remark}[theorem_c]{Remark}
	\newcommand{\emptyArg}{\,\underline{\hspace{6px}}\,} 
	\newcommand{\integers}{\mathbb{Z}} 
	\newcommand{\reals}{\mathbb{R}} 
	\newcommand{\complexs}{\mathbb{C}} 
	\newcommand{\integersMod}[1]{\mathbb{Z}_{#1}} 
	\newcommand{\modclass}[2]{#1 \; (\text{mod } #2)} 
	\newcommand{\inject}{\hookrightarrow} 
	\newcommand{\Irreps}[1]{\operatorname{Irr}[#1]} 
	\newcommand{\eqdef}{\stackrel{def}{=}} 
	\newcommand{\suchthat}[2]{\left\{#1 \: \text{ s.t. } \: #2\right\}} 
		\newcommand{\ket}[1]{\vert #1 \rangle} 
		\newcommand{\bra}[1]{\langle #1 \vert} 
		\newcommand{\braket}[2]{\langle #1 \vert #2 \rangle} 
		\newcommand{\LtwoSym}{\operatorname{L}^2} 
		\newcommand{\Ltwo}[1]{\LtwoSym[#1]} 
		\newcommand{\SpaceH}{\mathcal{H}}
		\newcommand{\SpaceG}{\mathcal{G}}
		\newcommand{\isom}{\cong} 
		\newcommand{\monom}{\rightarrowtail} 
		\newcommand{\id}[1]{id_{#1}} 
		\newcommand{\tensor}{\otimes} 
		\newcommand{\tensorUnit}{I} 
		\newcommand{\Hom}[3]{\operatorname{Hom}_{\,#1}\left[#2,#3\right]} 
		\newcommand{\Endoms}[2]{\operatorname{End}_{\,#1}\left[#2\right]} 
		\newcommand{\SetCategory}{\operatorname{Set}} 
		\newcommand{\fdHilbCategory}{\operatorname{fdHilb}} 
		\newcommand{\ComMonCategory}{\operatorname{CMon}} 
		\newcommand{\RelCategory}{\operatorname{Rel}} 
		\newcommand{\fRelCategory}{\operatorname{fRel}} 
		\newcommand{\CategoryC}{\mathcal{C}}
		\newcommand{\CategoryD}{\mathcal{D}}
		\newcommand{\OpCategory}[1]{#1^{\operatorname{op}}} 
		\newcommand{\IntGrpCategory}[1]{\operatorname{Grp}[#1]} 
		\newcommand{\IntGrpStructCategory}[1]{\operatorname{GrpStruct}[#1]} 
	\newcommand{\unit}[1]{\eta_{#1}} 
	\newcommand{\mult}[1]{\mu_{#1}} 
	\newcommand{\timeunit}{\unit{}} 
	\newcommand{\timemult}{\mult{}} 
	\newcommand{\timeinversion}[1]{i_{#1}} 
	\newcommand{\timeinverse}{\timeinversion{}} 
	\newcommand{\Xcolour}{Red}
	\newcommand{\groupStructColour}{\Xcolour}
	\newcommand{\XmultSym}{\hbox{\input{modules/symbols/timemultSym.tex}}\!}
	\newcommand{\XcomultSym}{\hbox{\input{modules/symbols/timecomultSym.tex}}\!}
	\newcommand{\XunitSym}{\hbox{\input{modules/symbols/timeunitSym.tex}}\!}
	\newcommand{\XcounitSym}{\hbox{\input{modules/symbols/timecounitSym.tex}}\!}
	\newcommand{\Zcolour}{YellowGreen}
	\newcommand{\classicalStructColour}{\Zcolour}
	\newcommand{\ZmultSym}{\hbox{\input{modules/symbols/timematchSym.tex}}\!}
	\newcommand{\ZcomultSym}{\hbox{\input{modules/symbols/timediagSym.tex}}\!}
	\newcommand{\ZunitSym}{\hbox{\input{modules/symbols/timematchunitSym.tex}}\!}
	\newcommand{\ZcounitSym}{\hbox{\input{modules/symbols/trivialcharSym.tex}}\!}
	\newcommand{\Xaltcolour}{Purple}
	\newcommand{\internalgroupStructColour}{\Xaltcolour}
	\newcommand{\XaltcomultSym}{\hbox{\input{modules/symbols/internaltimecomultSym.tex}}\!}
	\newcommand{\XaltunitSym}{\hbox{\input{modules/symbols/internaltimeunitSym.tex}}\!}
	\newcommand{\XaltcounitSym}{\hbox{\input{modules/symbols/internaltimecounitSym.tex}}\!}
	\newcommand{\Zaltcolour}{Cyan}
	\newcommand{\internalclassicalStructColour}{\Zaltcolour}
	\newcommand{\ZaltmultSym}{\hbox{\input{modules/symbols/internaltimematchSym.tex}}\!}
	\newcommand{\ZaltunitSym}{\hbox{\input{modules/symbols/internaltimematchunitSym.tex}}\!}
	\newcommand{\hbox{\input{modules/symbols/mapSym.tex}}\!\!}{\hbox{\input{modules/symbols/mapSym.tex}}\!\!} 
	\newcommand{\hbox{\input{modules/symbols/mapconjSym.tex}}\!\!}{\hbox{\input{modules/symbols/mapconjSym.tex}}\!\!} 
	\newcommand{\tempSymLabel}{}
	\tikzset{->-/.style={decoration={markings,mark=at position #1 with {\arrow{>}}},postaction={decorate}}}
	\tikzset{-<-/.style={decoration={markings,mark=at position #1 with {\arrow{<}}},postaction={decorate}}}
\def\swangle{-145}
\def\seangle{-35}
\def\nwangle{145}
\def\neangle{35}
\tikzstyle{env}=[copoint,regular polygon rotate=0,minimum width=0.2cm, fill=black]
\tikzstyle{probs}=[shape=semicircle,fill=white,draw=black,shape border rotate=180,minimum width=1.2cm]
\tikzstyle{every picture}=[baseline=-0.25em,scale=0.5]
\tikzstyle{dotpic}=[] 
\tikzstyle{diredges}=[every to/.style={diredge}]
\tikzstyle{math matrix}=[matrix of math nodes,left delimiter=(,right delimiter=),inner sep=2pt,column sep=1em,row sep=0.5em,nodes={inner sep=0pt},text height=1.5ex, text depth=0.25ex]
\tikzstyle{inline text}=[text height=1.5ex, text depth=0.25ex,yshift=0.5mm]
\tikzstyle{label}=[font=\footnotesize,text height=1.5ex, text depth=0.25ex,yshift=0.5mm]
\tikzstyle{left label}=[label,anchor=east,xshift=1.5mm]
\tikzstyle{right label}=[label,anchor=west,xshift=-1.5mm]
\tikzstyle{braceedge}=[decorate,decoration={brace,amplitude=2mm,raise=-1mm}]
\tikzstyle{small braceedge}=[decorate,decoration={brace,amplitude=1mm,raise=-1mm}]
\tikzstyle{doubled}=[line width=1.6pt] 
\tikzstyle{boldedge}=[doubled,shorten <=-0.17mm,shorten >=-0.17mm]
\tikzstyle{boldedgegray}=[doubled,gray,shorten <=-0.17mm,shorten >=-0.17mm]
\tikzstyle{semidoubled}=[line width=1.4pt] 
\tikzstyle{semiboldedgegray}=[semidoubled,gray,shorten <=-0.17mm,shorten >=-0.17mm]
\tikzstyle{boldedgedashed}=[very thick,dashed,shorten <=-0.17mm,shorten >=-0.17mm]
\tikzstyle{vboldedgedashed}=[doubled,dashed,shorten <=-0.17mm,shorten >=-0.17mm]
\tikzstyle{left hook arrow}=[left hook-latex]
\tikzstyle{right hook arrow}=[right hook-latex]
\tikzstyle{sembracket}=[line width=0.5pt,shorten <=-0.07mm,shorten >=-0.07mm]
\tikzstyle{causal edge}=[->,thick,gray]
\tikzstyle{causal nondir}=[thick,gray]
\tikzstyle{timeline}=[thick,gray, dashed]
\tikzstyle{cedge}=[<->,thick,gray!70!white]
\tikzstyle{empty diagram}=[draw=gray!40!white,dashed,shape=rectangle,minimum width=1cm,minimum height=1cm]
\tikzstyle{empty diagram small}=[draw=gray!50!white,dashed,shape=rectangle,minimum width=0.6cm,minimum height=0.5cm]
\tikzstyle{dot}=[inner sep=0mm,minimum width=3mm,minimum height=3mm,draw,shape=circle,text depth=-0.1mm]
\tikzstyle{ddot}=[inner sep=0mm, doubled, minimum width=3.5mm,minimum height=3.5mm,draw,shape=circle]
\tikzstyle{black dot}=[dot,fill=black]
\tikzstyle{white dot}=[dot,fill=white,,text depth=-0.2mm]
\tikzstyle{green dot}=[white dot] 
\tikzstyle{gray dot}=[dot,fill=gray!40!white,,text depth=-0.2mm]
\tikzstyle{red dot}=[gray dot] 
\tikzstyle{black ddot}=[ddot,fill=black]
\tikzstyle{white ddot}=[ddot,fill=white]
\tikzstyle{gray ddot}=[ddot,fill=gray!40!white]
\tikzstyle{gray edge}=[gray!40!white]
\tikzstyle{small dot}=[inner sep=0.5mm,minimum width=0pt,minimum height=0pt,draw,shape=circle]
\tikzstyle{small black dot}=[small dot,fill=black]
\tikzstyle{small white dot}=[small dot,fill=white]
\tikzstyle{small gray dot}=[small dot,fill=gray!40!white]
\tikzstyle{causal dot}=[inner sep=0.4mm,minimum width=0pt,minimum height=0pt,draw=white,shape=circle,fill=gray!40!white]
\tikzstyle{phase dimensions}=[minimum size=5mm,font=\footnotesize,rectangle,rounded corners=2.5mm,inner sep=0.2mm,outer sep=-2mm,text height=1ex, text depth=0.25ex, yshift=0.5mm]
\tikzstyle{dphase dimensions}=[phase dimensions]
\tikzstyle{phase dot}=[dot,phase dimensions]
\tikzstyle{white phase dot}=[dot,fill=white,phase dimensions]
\tikzstyle{white phase ddot}=[ddot,fill=white,dphase dimensions]
\tikzstyle{white rect ddot}=[draw=black,fill=white,doubled,minimum size=5mm,font=\footnotesize,rectangle,rounded corners=2.5mm,inner sep=0.2mm]
\tikzstyle{gray rect ddot}=[draw=black,fill=gray!40!white,doubled,minimum size=6mm,font=\footnotesize,rectangle,rounded corners=3mm]
\tikzstyle{gray phase dot}=[dot,fill=gray!40!white,phase dimensions]
\tikzstyle{gray phase ddot}=[ddot,fill=gray!40!white,dphase dimensions]
\tikzstyle{grey phase dot}=[gray phase dot]
\tikzstyle{grey phase ddot}=[gray phase ddot]
\tikzstyle{cnot}=[fill=white,shape=circle,inner sep=-1.4pt]
\tikzstyle{hadamard}=[square box,inner sep=0 pt,font=\footnotesize,minimum height=4mm,minimum width=4mm]
\tikzstyle{dhadamard}=[hadamard,doubled]
\tikzstyle{antipode}=[white dot,inner sep=0.3mm,font=\footnotesize]
\tikzstyle{scalar}=[diamond,draw,inner sep=0.5pt,font=\small]
\tikzstyle{dscalar}=[diamond,doubled, draw,inner sep=0.5pt,font=\small]
\tikzstyle{small box}=[rectangle,inline text,fill=white,draw,minimum height=5mm,yshift=-0.5mm,minimum width=5mm,font=\small]
\tikzstyle{small gray box}=[small box,fill=gray!30]
\tikzstyle{medium box}=[rectangle,inline text,fill=white,draw,minimum height=5mm,yshift=-0.5mm,minimum width=10mm,font=\small]
\tikzstyle{square box}=[small box] 
\tikzstyle{medium gray box}=[small box,fill=gray!30]
\tikzstyle{semilarge box}=[rectangle,inline text,fill=white,draw,minimum height=5mm,yshift=-0.5mm,minimum width=12.5mm,font=\small]
\tikzstyle{large box}=[rectangle,inline text,fill=white,draw,minimum height=5mm,yshift=-0.5mm,minimum width=15mm,font=\small]
\tikzstyle{large gray box}=[small box,fill=gray!30]
\tikzstyle{gray square point}=[small box,fill=gray!50]
\tikzstyle{dphase box white}=[dbox]
\tikzstyle{dphase box gray}=[dbox,fill=gray!50!white]
\tikzstyle{point}=[regular polygon,regular polygon sides=3,draw,scale=0.75,inner sep=-0.5pt,minimum width=9mm,fill=white,regular polygon rotate=180]
\tikzstyle{copoint}=[regular polygon,regular polygon sides=3,draw,scale=0.75,inner sep=-0.5pt,minimum width=9mm,fill=white]
\tikzstyle{dpoint}=[point,doubled]
\tikzstyle{dcopoint}=[copoint,doubled]
\tikzstyle{wide copoint}=[fill=white,draw,shape=isosceles triangle,shape border rotate=90,isosceles triangle stretches=true,inner sep=0pt,minimum width=1.5cm,minimum height=6.12mm]
\tikzstyle{wide point}=[fill=white,draw,shape=isosceles triangle,shape border rotate=-90,isosceles triangle stretches=true,inner sep=0pt,minimum width=1.5cm,minimum height=6.12mm,yshift=-0.0mm]
\tikzstyle{wide point plus}=[fill=white,draw,shape=isosceles triangle,shape border rotate=-90,isosceles triangle stretches=true,inner sep=0pt,minimum width=1.74cm,minimum height=7mm,yshift=-0.0mm]
\tikzstyle{wide dpoint}=[fill=white,doubled,draw,shape=isosceles triangle,shape border rotate=-90,isosceles triangle stretches=true,inner sep=0pt,minimum width=1.5cm,minimum height=6.12mm,yshift=-0.0mm]
\tikzstyle{tinypoint}=[regular polygon,regular polygon sides=3,draw,scale=0.55,inner sep=-0.15pt,minimum width=6mm,fill=white,regular polygon rotate=180] 
\tikzstyle{white point}=[point]
\tikzstyle{white dpoint}=[dpoint]
\tikzstyle{green point}=[white point] 
\tikzstyle{white copoint}=[copoint]
\tikzstyle{gray point}=[point,fill=gray!40!white]
\tikzstyle{gray dpoint}=[gray point,doubled]
\tikzstyle{red point}=[gray point] 
\tikzstyle{gray copoint}=[copoint,fill=gray!40!white]
\tikzstyle{gray dcopoint}=[gray copoint,doubled]
\tikzstyle{black point}=[point,fill=black]
\tikzstyle{black copoint}=[copoint,fill=black]
\tikzstyle{tiny gray point}=[tinypoint,fill=gray!40!white]
\tikzstyle{diredge}=[->]
\tikzstyle{rdiredge}=[<-]
\tikzstyle{thickdiredge}=[->, very thick]
\tikzstyle{pointer edge}=[->,very thick,gray]
\tikzstyle{pointer edge part}=[very thick,gray]
\tikzstyle{dashed edge}=[dashed]
\tikzstyle{thick dashed edge}=[very thick,dashed]
\tikzstyle{thick gray dashed edge}=[thick dashed edge,gray!40]
\tikzstyle{thick map edge}=[very thick,|->]
\newcommand{\boxshape}[3]{%
\pgfdeclareshape{#1}{
\inheritsavedanchors[from=rectangle] 
\inheritanchorborder[from=rectangle]
\inheritanchor[from=rectangle]{center}
\inheritanchor[from=rectangle]{north}
\inheritanchor[from=rectangle]{south}
\inheritanchor[from=rectangle]{west}
\inheritanchor[from=rectangle]{east}
\backgroundpath{
\southwest \pgf@xa=\pgf@x \pgf@ya=\pgf@y
\northeast \pgf@xb=\pgf@x \pgf@yb=\pgf@y

\@tempdima=#2
\@tempdimb=#3

\pgfpathmoveto{\pgfpoint{\pgf@xa - 5pt + \@tempdima}{\pgf@ya}}
\pgfpathlineto{\pgfpoint{\pgf@xa - 5pt - \@tempdima}{\pgf@yb}}
\pgfpathlineto{\pgfpoint{\pgf@xb + 5pt + \@tempdimb}{\pgf@yb}}
\pgfpathlineto{\pgfpoint{\pgf@xb + 5pt - \@tempdimb}{\pgf@ya}}
\pgfpathlineto{\pgfpoint{\pgf@xa - 5pt + \@tempdima}{\pgf@ya}}
\pgfpathclose
}
}}
\tikzstyle{cloud}=[shape=cloud,draw,minimum width=1.5cm,minimum height=1.5cm]
\tikzstyle{map}=[draw,shape=NEbox,inner sep=2pt,minimum height=6mm,fill=white]
\tikzstyle{dashedmap}=[draw,dashed,shape=NEbox,inner sep=2pt,minimum height=6mm,fill=white]
\tikzstyle{mapdag}=[draw,shape=SEbox,inner sep=2pt,minimum height=6mm,fill=white]
\tikzstyle{mapadj}=[draw,shape=SEbox,inner sep=2pt,minimum height=6mm,fill=white]
\tikzstyle{maptrans}=[draw,shape=SWbox,inner sep=2pt,minimum height=6mm,fill=white]
\tikzstyle{mapconj}=[draw,shape=NWbox,inner sep=2pt,minimum height=6mm,fill=white]
\tikzstyle{medium map}=[draw,shape=NEbox,inner sep=2pt,minimum height=6mm,fill=white,minimum width=7mm]
\tikzstyle{medium map dag}=[draw,shape=SEbox,inner sep=2pt,minimum height=6mm,fill=white,minimum width=7mm]
\tikzstyle{medium map adj}=[draw,shape=SEbox,inner sep=2pt,minimum height=6mm,fill=white,minimum width=7mm]
\tikzstyle{medium map trans}=[draw,shape=SWbox,inner sep=2pt,minimum height=6mm,fill=white,minimum width=7mm]
\tikzstyle{medium map conj}=[draw,shape=NWbox,inner sep=2pt,minimum height=6mm,fill=white,minimum width=7mm]
\tikzstyle{semilarge map}=[draw,shape=NEbox,inner sep=2pt,minimum height=6mm,fill=white,minimum width=9.5mm]
\tikzstyle{semilarge map trans}=[draw,shape=SWbox,inner sep=2pt,minimum height=6mm,fill=white,minimum width=9.5mm]
\tikzstyle{semilarge map adj}=[draw,shape=SEbox,inner sep=2pt,minimum height=6mm,fill=white,minimum width=9.5mm]
\tikzstyle{semilarge map dag}=[draw,shape=SEbox,inner sep=2pt,minimum height=6mm,fill=white,minimum width=9.5mm]
\tikzstyle{semilarge map conj}=[draw,shape=NWbox,inner sep=2pt,minimum height=6mm,fill=white,minimum width=9.5mm]
\tikzstyle{large map}=[draw,shape=NEbox,inner sep=2pt,minimum height=6mm,fill=white,minimum width=12mm]
\tikzstyle{very large map}=[draw,shape=NEbox,inner sep=2pt,minimum height=6mm,fill=white,minimum width=17mm]
\tikzstyle{medium dmap}=[draw,doubled,shape=NEbox,inner sep=2pt,minimum height=6mm,fill=white,minimum width=7mm]
\tikzstyle{medium dmap dag}=[draw,doubled,shape=SEbox,inner sep=2pt,minimum height=6mm,fill=white,minimum width=7mm]
\tikzstyle{medium dmap adj}=[draw,doubled,shape=SEbox,inner sep=2pt,minimum height=6mm,fill=white,minimum width=7mm]
\tikzstyle{medium dmap trans}=[draw,doubled,shape=SWbox,inner sep=2pt,minimum height=6mm,fill=white,minimum width=7mm]
\tikzstyle{medium dmap conj}=[draw,doubled,shape=NWbox,inner sep=2pt,minimum height=6mm,fill=white,minimum width=7mm]
\tikzstyle{semilarge dmap}=[draw,doubled,shape=NEbox,inner sep=2pt,minimum height=6mm,fill=white,minimum width=9.5mm]
\tikzstyle{semilarge dmap trans}=[draw,doubled,shape=SWbox,inner sep=2pt,minimum height=6mm,fill=white,minimum width=9.5mm]
\tikzstyle{semilarge dmap adj}=[draw,doubled,shape=SEbox,inner sep=2pt,minimum height=6mm,fill=white,minimum width=9.5mm]
\tikzstyle{semilarge dmap dag}=[draw,doubled,shape=SEbox,inner sep=2pt,minimum height=6mm,fill=white,minimum width=9.5mm]
\tikzstyle{semilarge dmap conj}=[draw,doubled,shape=NWbox,inner sep=2pt,minimum height=6mm,fill=white,minimum width=9.5mm]
\tikzstyle{large dmap}=[draw,doubled,shape=NEbox,inner sep=2pt,minimum height=6mm,fill=white,minimum width=12mm]
\tikzstyle{large dmap conj}=[draw,doubled,shape=NWbox,inner sep=2pt,minimum height=6mm,fill=white,minimum width=12mm]
\tikzstyle{large dmap trans}=[draw,doubled,shape=SWbox,inner sep=2pt,minimum height=6mm,fill=white,minimum width=12mm]
\tikzstyle{very large dmap}=[draw,doubled,shape=NEbox,inner sep=2pt,minimum height=6mm,fill=white,minimum width=19.5mm]
\tikzstyle{muxbox}=[draw,shape=rectangle,minimum height=3mm,minimum width=3mm,fill=white]
\tikzstyle{dmuxbox}=[muxbox,doubled]
\tikzstyle{box}=[draw,shape=rectangle,inner sep=2pt,minimum height=6mm,minimum width=6mm,fill=white]
\tikzstyle{dbox}=[draw,doubled,shape=rectangle,inner sep=2pt,minimum height=6mm,minimum width=6mm,fill=white]
\tikzstyle{dmap}=[draw,doubled,shape=NEbox,inner sep=2pt,minimum height=6mm,fill=white]
\tikzstyle{dmapdag}=[draw,doubled,shape=SEbox,inner sep=2pt,minimum height=6mm,fill=white]
\tikzstyle{dmapadj}=[draw,doubled,shape=SEbox,inner sep=2pt,minimum height=6mm,fill=white]
\tikzstyle{dmaptrans}=[draw,doubled,shape=SWbox,inner sep=2pt,minimum height=6mm,fill=white]
\tikzstyle{dmapconj}=[draw,doubled,shape=NWbox,inner sep=2pt,minimum height=6mm,fill=white]
\tikzstyle{ddmap}=[draw,doubled,dashed,shape=NEbox,inner sep=2pt,minimum height=6mm,fill=white]
\tikzstyle{ddmapdag}=[draw,doubled,dashed,shape=SEbox,inner sep=2pt,minimum height=6mm,fill=white]
\tikzstyle{ddmapadj}=[draw,doubled,dashed,shape=SEbox,inner sep=2pt,minimum height=6mm,fill=white]
\tikzstyle{ddmaptrans}=[draw,doubled,dashed,shape=SWbox,inner sep=2pt,minimum height=6mm,fill=white]
\tikzstyle{ddmapconj}=[draw,doubled,dashed,shape=NWbox,inner sep=2pt,minimum height=6mm,fill=white]
\tikzstyle{smap}=[draw,shape=sNEbox,fill=white]
\tikzstyle{smapdag}=[draw,shape=sSEbox,fill=white]
\tikzstyle{smapadj}=[draw,shape=sSEbox,fill=white]
\tikzstyle{smaptrans}=[draw,shape=sSWbox,fill=white]
\tikzstyle{smapconj}=[draw,shape=sNWbox,fill=white]
\tikzstyle{dsmap}=[draw,dashed,shape=sNEbox,fill=white]
\tikzstyle{dsmapdag}=[draw,dashed,shape=sSEbox,fill=white]
\tikzstyle{dsmaptrans}=[draw,dashed,shape=sSWbox,fill=white]
\tikzstyle{dsmapconj}=[draw,dashed,shape=sNWbox,fill=white]
\tikzstyle{mmap}=[draw,shape=mNEbox]
\tikzstyle{mmapdag}=[draw,shape=mSEbox]
\tikzstyle{mmaptrans}=[draw,shape=mSWbox]
\tikzstyle{mmapconj}=[draw,shape=mNWbox]
\tikzstyle{mmapgray}=[draw,fill=gray!40!white,shape=mNEbox]
\tikzstyle{smapgray}=[draw,fill=gray!40!white,shape=sNEbox]
\pgfmathsetmacro{\pgf@shorten@left}{\pgfkeysvalueof{/tikz/shorten left}}
\pgfmathsetmacro{\pgf@shorten@right}{\pgfkeysvalueof{/tikz/shorten right}}
\pgfmathsetmacro{\pgf@shorten@left}{\pgfkeysvalueof{/tikz/shorten left}}
\pgfmathsetmacro{\pgf@shorten@right}{\pgfkeysvalueof{/tikz/shorten right}}
\tikzstyle{kpoint common}=[draw,fill=white,inner sep=1pt,minimum height=3mm]
\tikzstyle{kpoint}=[shape=cornerpoint,shorten left=5pt,kpoint common]
\tikzstyle{kpoint adjoint}=[shape=cornercopoint,shorten left=5pt,kpoint common]
\tikzstyle{kpoint conjugate}=[shape=cornerpoint,shorten right=5pt,kpoint common]
\tikzstyle{kpoint transpose}=[shape=cornercopoint,shorten right=5pt,kpoint common]
\tikzstyle{kpoint symm}=[shape=cornerpoint,shorten left=5pt,shorten right=5pt,kpoint common]
\tikzstyle{black kpoint}=[shape=cornerpoint,shorten left=5pt,kpoint common,fill=black]
\tikzstyle{black kpoint adjoint}=[shape=cornercopoint,shorten left=5pt,kpoint common,fill=black]
\tikzstyle{kpointdag}=[kpoint adjoint]
\tikzstyle{kpointadj}=[kpoint adjoint]
\tikzstyle{kpointconj}=[kpoint conjugate]
\tikzstyle{kpointtrans}=[kpoint transpose]
\tikzstyle{big kpoint}=[kpoint, minimum width=1.2 cm, minimum height=8mm, inner sep=4pt, text depth=3mm]
\tikzstyle{wide kpoint}=[kpoint, minimum width=1 cm, inner sep=2pt, text depth=-0.7 mm]
\tikzstyle{wide kpointdag}=[kpointdag, minimum width=1 cm, inner sep=2pt, text depth=0.7 mm]
\tikzstyle{wide kpointconj}=[kpointconj, minimum width=1 cm, inner sep=2pt, text depth=-0.7 mm]
\tikzstyle{wide kpointtrans}=[kpointtrans, minimum width=1 cm, inner sep=2pt, text depth=0.7 mm]
\tikzstyle{gray kpoint}=[kpoint,fill=gray!50!white]
\tikzstyle{gray kpointdag}=[kpointdag,fill=gray!50!white]
\tikzstyle{gray kpointadj}=[kpointadj,fill=gray!50!white]
\tikzstyle{gray kpointconj}=[kpointconj,fill=gray!50!white]
\tikzstyle{gray kpointtrans}=[kpointtrans,fill=gray!50!white]
\tikzstyle{gray dkpoint}=[kpoint,fill=gray!50!white,doubled]
\tikzstyle{gray dkpointdag}=[kpointdag,fill=gray!50!white,doubled]
\tikzstyle{gray dkpointadj}=[kpointadj,fill=gray!50!white,doubled]
\tikzstyle{gray dkpointconj}=[kpointconj,fill=gray!50!white,doubled]
\tikzstyle{gray dkpointtrans}=[kpointtrans,fill=gray!50!white,doubled]
\tikzstyle{white label}=[draw,fill=white,rectangle,inner sep=0.7 mm]
\tikzstyle{gray label}=[draw,fill=gray!50!white,rectangle,inner sep=0.7 mm]
\tikzstyle{black label}=[draw,fill=black,rectangle,inner sep=0.7 mm]
\tikzstyle{dkpoint}=[kpoint,doubled]
\tikzstyle{wide dkpoint}=[wide kpoint,doubled]
\tikzstyle{dkpointdag}=[kpoint adjoint,doubled]
\tikzstyle{dkcopoint}=[kpoint adjoint,doubled]
\tikzstyle{dkpointadj}=[kpoint adjoint,doubled]
\tikzstyle{dkpointconj}=[kpoint conjugate,doubled]
\tikzstyle{dkpointtrans}=[kpoint transpose,doubled]
\tikzstyle{kscalar}=[kpoint common, shape=EBox, inner xsep=-1pt, inner ysep=3pt,font=\small]
\tikzstyle{kscalarconj}=[kpoint common, shape=WBox, inner xsep=-1pt, inner ysep=3pt,font=\small]
 \tikzstyle{upground}=[circuit ee IEC,thick,ground,rotate=90,scale=2.5]
 \tikzstyle{downground}=[circuit ee IEC,thick,ground,rotate=-90,scale=2.5]
 \tikzstyle{bigground}=[regular polygon,regular polygon sides=3,draw=gray,scale=0.50,inner sep=-0.5pt,minimum width=10mm,fill=gray]
\tikzstyle{arrs}=[-latex,font=\small,auto]
\tikzstyle{arrow plain}=[arrs]
\tikzstyle{arrow dashed}=[dashed,arrs]
\tikzstyle{arrow bold}=[very thick,arrs]
\tikzstyle{arrow hide}=[draw=white!0,-]
\tikzstyle{arrow reverse}=[latex-]
\tikzstyle{cdnode}=[]
\newcommand\cat[1]{\ensuremath{\mathbf{#1}}}
\newcommand{\idm}[1]{\mbox{id}_{#1}}
\newcommand{\tinycounitblack}{\XcounitSym}
\newcommand{\tinycomultblack}{\XcomultSym}
\newcommand\whitefrob[1]{\ensuremath{(#1,\,\ZmultSym,\ZunitSym,\ZcomultSym,\ZcounitSym)}}
\newcommand\blackfrob[1]{\ensuremath{(#1,\,\XmultSym,\XunitSym,\XcomultSym,\XcounitSym)}}
\newcommand{\scpair}{(\hbox{\input{modules/symbols/ZdotSym.tex}}\!,\hbox{\input{modules/symbols/XdotSym.tex}}\!)}
\begin{document}
\bibliographystyle{plain}

\title{Fourier transforms from strongly complementary observables}

\author{Stefano Gogioso \qquad William Zeng \\
    Quantum Group, Department of Computer Science\\
    University of Oxford, UK\\
    \texttt{stefano.gogioso@cs.ox.ac.uk} \qquad \texttt{william.zeng@cs.ox.ac.uk}
}

\maketitle

\begin{abstract}
\noindent Ongoing work in quantum information emphasizes the need for a structural understanding of quantum speedups: in this work, we focus on the quantum Fourier transform and the structures in quantum theory that enable it. We work in the setting of dagger symmetric monoidal categories. Each such category represents a certain kind of process theory, of which the category of finite dimensional Hilbert spaces and linear maps (associated to quantum computation) is but one instance. We elucidate a general connection in any process theory between the Fourier transform and strongly complementary observables, i.e. Hopf algebras in dagger symmetric monoidal categories. 
We generalise the necessary tools of representation theory from $\fdHilbCategory$ to arbitrary dagger symmetric monoidal categories. We define groups, characters and representations, and we prove their relation to strong complementarity. The Fourier transform is then defined in terms of pairs of strongly complementary observables, in both the abelian and non-abelian case. In the abelian case, we draw the connection with Pontryagin duality and provide categorical proofs of the Fourier Inversion Theorem, the Convolution Theory, and Pontryagin duality.
Our work finds application in the novel characterisation of the Fourier transform for the category $\RelCategory$ of sets and relations. This is a result of interest for the study of categorical quantum algorithms, as the usual construction of the quantum Fourier transform in terms of Fourier matrices is shown to fail in $\RelCategory$. Despite this, the process theoretic perspective on the Fourier transform is sensible in this setting. Furthermore, our categorical setting provides a generalisation of the abelian Fourier transform from finite-dimensional Hilbert spaces to finite-dimensional modules over arbitrary semirings, as well as a further generalisation to finite non-abelian groups, including a fully categorical generalisation of the Gelfand-Naimark theorem.
\end{abstract}

\setcounter{tocdepth}{2} 

\section{Background}
\label{section_Background}
\noindent We work with the framework of dagger symmetric monoidal categories, and, in particular with dagger Frobenius algebras on their objects. Some basic background definitions are covered in Appendix~\ref{app:basic}. A reference for the basic relationship between these structures, quantum mechanics, and process theories is Coecke et al.~\cite{coecke2015generalised}, where they are called \emph{generalised compositional theories}.

The key connection between dagger Frobenius algebras and non-degenerate observables in quantum mechanics is provided by Coecke et al.~\cite{coecke2013new}, where it is proven that dagger special commutative Frobenius algebras ($\dagger$-SCFAs) in $\fdHilbCategory$ canonically correspond to orthonormal bases (via their unique basis of \emph{classical states}), and can thus be used to model a basis of eigenstates;  more generally, commutative $\dagger$-Frobenius algebras ($\dagger$-CFAs) correspond to orthogonal bases. In order to model possibly degenerate observables, Coecke and Pavlovic~\cite{CQM-QuantumMeasuNoSums} introduce spectra as coalgebras for the comonoid part of a classical structure. Throughout the paper we will be working in a slightly more general setting than that of classical structures, where specialness and commutativity are not neccessarily assumed. 

Strongly complementary pairs of classical structures appear in \cite{coecke2012strong}\cite{CQM-ZXCalculusComplete} to model non-locality in terms of non-commutative non-degenerate observables, while Kissinger~\cite{kissinger2012pictures} shows their correspondence to finite abelian groups in $\fdHilbCategory$. Some of the material on representation theory in this paper already appears in \cite{vicary-tqa} and \cite{zeng2014abstract} in relation to quantum algorithms for the hidden subgroup problem and the group homomorphism identification problem, respectively; furthermore, it forms the basis of the recent \cite{StefanoGogioso-CategoricalSemanticsSchrodingersEqn} on categorical quantum dynamics. Finally, in addition to~\cite{coecke2015generalised} the upcoming~\cite{zeng2015abstract},~\cite{CQM-QCSnotes}, and~\cite{CQM-CQMnotes} provide a comprehensive reference for many structures and results used here.

\section{The Fourier Transform}
\label{section_FourierTransformIntro}

\noindent While they happen to coincide for qubits and systems composed of qubits, the Fourier transform of a general system is mathematically distinct from a Fourier matrix.  In particular, a Fourier matrix is a Fourier transform equipped with the choice of an isomorphism that is, in general, non-canonical. These Fourier matrices then correspond to strongly complementary observables (with a choice of isomorphism) in the same way that complex Hadamard matrices correspond to complementary ones. The relationship between these concepts is illustrated in Figure~\ref{fig:FTtoHrelationship}, and will be introduced in detail in this section. 

\begin{figure}
\includegraphics[width=\linewidth]{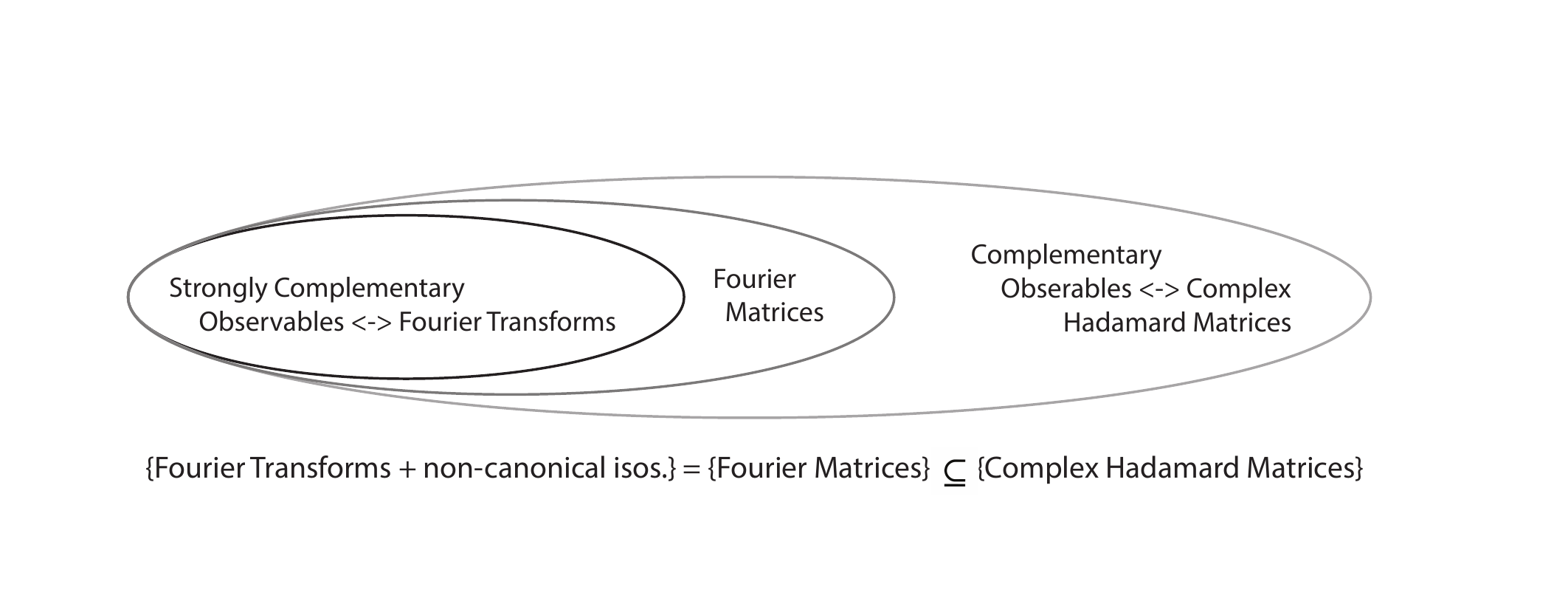}
\caption{Schematic of the relationship between Fourier transforms, Fourier matrices, Hadamard matrices, strongly complementary observables, and complementary observables.}
\label{fig:FTtoHrelationship}
\end{figure}

\noindent We begin with a quick review of Pontryagin duality and the Fourier transform as it relates to quantum computation. A number of different notions related to the Fourier transform on finite abelian groups can be found in mathematics, physics, computer science and quantum computation, so it is useful to clarify them:

\begin{enumerate}
  \item[1.] In mathematics, the Fourier transform is understood through Pontryagin duality.
  \item[2.] In physics and signal processing, the Fourier transform is understood as a transformation of fields/signals from time/space domain to energy\footnote{Or frequency.}/momentum domain.
  \item[3.] In quantum computing, we have Fourier matrices and (complex) Hadamard matrices that correspond to unitary quantum processes.
\end{enumerate}

\noindent This section is ultimately concerned with the first notion, where the Fourier transform is defined on locally compact groups. The other notions are relevant in the context of quantum information. We begin by explaining the relationship of Notion 1 with the others listed above. In what follows, $(G,\cdot,0)$ is a finite abelian group of order $N$, and $\complexs^{\times}=\complexs\setminus\{0\}$ is the multiplicative group of non-zero complex numbers. 

\noindent A \textbf{(multiplicative) character} of $G$ is a group homomorphism $\chi:G\to \complexs^\times$.\footnote{For finite $G$, $\chi$ maps into the subgroup $S^1 \subseteq C^\times$ of unit complex numbers.} If $G = \prod_j \integersMod{n_j}$,\footnote{Which is always true when $G$ is finite, for some family $(n_j)_j$ of positive integers.} the multiplicative characters of $G$ take the following form, for any $h_j \in G$:
\begin{equation}
\label{eqn:FTDefCharacters}
  g_j \mapsto \exp\left[\, i \, \sum_j \frac{2 \pi}{n_j} \left(\modclass{g_jh_j}{n_j}\right) \, \right].
\end{equation}

\noindent The set of characters, with pointwise multiplication defined as $(\chi\cdot\psi)(x):=\chi(x)\psi(x)$, forms a group; this is called the \textbf{Pontryagin dual} (or \textbf{dual group}) of $G$, and is denoted $G^\wedge$. In fact, the Pontryagin construction can be made (contravariantly) functorial on the category \cat{Grp} of groups and group homomorphisms. Define $f^\wedge : G^\wedge \rightarrow H^\wedge$, for any $f: H \rightarrow G$ morphism of abelian groups, as follows:
\begin{equation*}
  f^\wedge( \chi ) = \chi \circ f.
\end{equation*}

\noindent From~\eqref{eqn:FTDefCharacters} it is not hard to see that $G^\wedge \isom G$.\footnote{Note that if $G \isom H$, then there are always exactly as many isomorphisms $G \isom H$ as there are automorphisms $G \isom G$.} However, this isomorphism is \emph{not canonical}. This means that there is no natural way of identifying the multiplicative characters with group elements, and we must keep track of our choice of isomorphism $G^\wedge \isom G$. Remarkably though, there is a canonical isomorphism $G \isom (G^\wedge)^\wedge$ given as follows, making the functor $(-)^{\wedge}:$ \cat{Grp} $\to$ \cat{Grp} is its own (weak) inverse:
\begin{equation*}
  g \mapsto (\chi \mapsto \chi(g)).
\end{equation*}

\newcommand{\FourierTransformSym}[1]{\mathcal{F}_{#1}}
\newcommand{\InverseFourierTransformSym}[1]{\mathcal{F}_{#1}^{-1}}
\newcommand{\FourierTransform}[1]{\mathcal{F}_G[#1]}
\newcommand{\InverseFourierTransform}[1]{\mathcal{F}_G^{-1}[#1]}

\noindent We are now ready to introduce the Fourier transform in the context of Pontryagin duality: this is the most abstract among the notions, and the others are derived from it. Let $\Ltwo{G}$ denote the space of functions $f:G\to \complexs$. These functions are necessarily square-integrable (as $G$ is finite), and thus $\Ltwo{G}$ is an $N$-dimensional complex Hilbert space (and lives in the category $\fdHilbCategory$ of finite-dimensional complex Hilbert spaces and linear maps). 

\begin{definition}
The \textbf{Fourier transform} for a finite abelian group $G$ is a bijection $\mathcal{F}_G:\Ltwo{G}\rightarrow\Ltwo{G^\wedge}$, sending $f:G\to\complexs$ to the $\bar{f}:=\FourierTransform{f}:G^\wedge\to\complexs$ defined as follows:
\begin{equation}\label{eqn:DefTraditionalFT}
  \FourierTransform{f}(\chi) := \frac{1}{N}\sum_{g \in G}\chi^{-1}(g)f(g).
\end{equation}
The \textbf{Inverse Fourier transform} is the inverse bijection $\mathcal{F}_G^{-1}: \Ltwo{G^\wedge} \rightarrow \Ltwo{G}$ and is defined as follows:
\begin{equation}\label{eqn:DefTraditionalInverseFT}
  \InverseFourierTransform{\bar{f}}(g) := \sum_{\chi \in G^\wedge}\chi(g)\bar{f}(\chi).
\end{equation}
\end{definition}

\noindent The Fourier transform is natural.  This means it is invariant under automorphisms of abelian groups (note that the isomorphism $G \isom G^\wedge$ was not). Let $\Psi : G \rightarrow H$ be some isomorphism where $M_\Psi = \Ltwo{G} \rightarrow \Ltwo{H}$ is the corresponding unitary isomorphism that takes $f\mapsto f\circ \Psi$. We then always have:
\begin{equation}\label{eqn:FTcanonicity}
  M_{\Psi^\wedge} \circ \mathcal{F}_H \circ M_\Psi =  \mathcal{F}_G,
\end{equation}

\noindent There are a number of properties of interest for the Fourier transform, some rather straightforward and others more complicated to prove. One of specific interest to this work, because of its wide application and relationship with structures in $\dagger$-SMCs, is the Convolution Theorem. The space $\Ltwo{G}$ comes with a distinguished orthonormal basis, given by the \textbf{delta functions} $(\delta_g)_{g\in G}$ defined as follows.
\begin{equation}
\label{eqn:computationalBasis}
  \delta_g(h):=\begin{cases}
    1, & \text{if $h=g$}.\\
    0, & \text{otherwise}.
  \end{cases}
\end{equation}

\noindent We sometimes refer to this as the \textbf{computational basis}, the name usually given to it in the context of (group-theoretic) quantum algorithms. The computational basis comes with a monoid structure, defined below and with unit $\delta_0$:
\begin{equation}
  \left(\delta_g*\delta_h\right):=\delta_{g+h}.
\end{equation}

\noindent Linearly extended to $\Ltwo{G}$, this structure yields the \textbf{convolution operation} $\left( \Ltwo{G},*,\delta_0 \right)$.
\begin{align}
\label{eqn:convolutionOperation}
\left(f * f'\right) = \left(\sum_{g\in G} f(g) \delta_g \right) * \left( \sum_{g' \in  G} f'(g') \delta_{g'} \right) &= \sum_{g\in G} \sum_{g'\in G'} f(g) f'(g') \delta_{g+g'} \\ 
&= \sum_{h\in G} \left(\sum_{g'\in G} f(h-g') f'(g')\right) \delta_h
\end{align}

\noindent The Fourier transforms of the delta functions yield the following orthogonal basis for $\Ltwo{G^\wedge}$, which we refer to as the \textbf{basis of evaluation functions}:
\begin{align*}
\xi_{g} := \sqrt{N}\tilde{\delta}_{-g} = \left(\chi \mapsto \sum_{h \in G}\chi^{-1}(h)\delta_{-g}(h) \right)= \left(\chi \mapsto \chi(g)\right).
\end{align*}

\noindent The basis of evaluation functions also comes with a monoid structure, with unit $\xi_0: \chi \mapsto 1$:
\begin{equation*}
  \left(\xi_g\cdot\xi_h\right):= \chi \mapsto \xi_g(\chi)\xi_h(\chi) = \chi \mapsto \chi(g)\chi(h).
\end{equation*}

\noindent Functions $F \in \Ltwo{G^\wedge}$ on the dual group have the following expansion in terms of evaluation functions:
\begin{equation*}
  F = \sum_{g\in G} \left( \frac{1}{N}\sum_{\chi \in G^\wedge} F(\chi) \chi^{-1}(g) \right) \xi_g
\end{equation*}

\noindent Linearly extended to $\Ltwo{G^\wedge}$, the monoid structure above yields the \textbf{pointwise multiplication} $\left( \Ltwo{G^\wedge},\cdot,\xi_0 \right)$: 
\begin{align}
\label{eqn:PointwiseMultCharacters}
  \left(F \cdot F' \right) &= \tau \mapsto \sum_{\chi,\kappa \in G^\wedge}  F(\chi)  F'(\kappa) \left(\frac{1}{N} \sum_{g\in G} \chi^{-1}(g)\tau(g)\right) \left(\frac{1}{N} \sum_{g'\in G}\kappa^{-1}(g') \tau(g') \right) \\ &= \tau \mapsto F(\tau) F'(\tau)
\end{align}
We use the (easy to check) fact that, for any $\chi,\tau \in G^\wedge$, the expression $\frac{1}{N} \sum_{g\in G}\chi^{-1}(g) \tau(g)$ yields $1$ if $\tau = \chi$ and $0$ otherwise (this is usually referred to as \textbf{orthogonality of (multiplicative) characters}).

\begin{theorem}[Convolution Theorem]
The Fourier transform is a monoid isomorphism in $\fdHilbCategory$, from the convolution monoid $\left( \Ltwo{G},*,\delta_0 \right)$ to the pointwise multiplication monoid $\left( \Ltwo{G^\wedge},\cdot, \xi_0 \right)$. This statement amounts exactly to the following expression (for every $f\in \Ltwo{G}$), which is the usual formulation of the Convolution Theorem: 
\begin{equation}\label{eqn:ConvolutionTheorem}
  \mathcal{F}_G (f') \cdot \mathcal{F}_G (f) = \mathcal{F}_G (f * f').
\end{equation}
\end{theorem}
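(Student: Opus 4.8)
The plan is to use the fact, already available to us, that $\mathcal{F}_G$ is a linear bijection; it therefore remains only to verify that it is a \emph{monoid homomorphism} from $(\Ltwo{G},*,\delta_0)$ to $(\Ltwo{G^\wedge},\cdot,\xi_0)$, since a bijective homomorphism of monoids automatically has a homomorphism as its inverse. Concretely I must establish two things: that $\mathcal{F}_G$ intertwines the two products, which is exactly equation~\eqref{eqn:ConvolutionTheorem} (and where the order of the arguments is immaterial, both products being commutative), and that it sends the convolution unit $\delta_0$ to the pointwise unit $\xi_0$.

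For the product, the key observation is that both sides of~\eqref{eqn:ConvolutionTheorem} are \emph{bilinear} in the pair $(f,f')$: the convolution $*$ and the pointwise product $\cdot$ are bilinear by construction, and $\mathcal{F}_G$ is linear, so each side is a bilinear map $\Ltwo{G}\times\Ltwo{G}\to\Ltwo{G^\wedge}$. Hence it suffices to check the identity on the spanning set of delta functions, i.e. to prove $\mathcal{F}_G(\delta_g * \delta_h) = \mathcal{F}_G(\delta_g)\cdot\mathcal{F}_G(\delta_h)$ for all $g,h\in G$. Here the computation collapses: $\delta_g * \delta_h = \delta_{g+h}$ by definition, so the left-hand side is $\mathcal{F}_G(\delta_{g+h})$, and evaluating at a character $\chi$ gives a scalar multiple of $\chi^{-1}(g+h)$. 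The crucial input is then simply that $\chi$ is a \emph{group homomorphism}, so that $\chi^{-1}(g+h)=\chi^{-1}(g)\,\chi^{-1}(h)$; this factorises exactly into the evaluation of the right-hand side.

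Alternatively, and equivalently, I would run the computation directly on arbitrary $f,f'$: expand $\mathcal{F}_G(f*f')(\chi)$ using~\eqref{eqn:DefTraditionalFT} and~\eqref{eqn:convolutionOperation}, reindex the resulting double sum by the substitution $h = g + g'$, and use $\chi^{-1}(g+g')=\chi^{-1}(g)\chi^{-1}(g')$ to split it into a product of two independent sums, each of which is a Fourier coefficient (up to a constant). The unit condition is the same homomorphism fact specialised at $g=0$: $\mathcal{F}_G(\delta_0)$ is the constant function $\chi\mapsto\chi^{-1}(0)=1$ up to normalisation, which is precisely $\xi_0$.

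The main obstacle I anticipate is not conceptual but the careful tracking of the normalisation constant $\tfrac{1}{N}$ carried by $\mathcal{F}_G$: when two Fourier transforms are multiplied pointwise, the factors of $\tfrac{1}{N}$ must be reconciled against the single factor produced by the convolution sum, and one has to confirm that they combine consistently with the chosen conventions for $*$, $\cdot$, $\delta_0$ and $\xi_0$. The orthogonality of characters recalled just before the statement is the tool that pins these conventions down. Once the bilinear reduction is in place, this bookkeeping is the only thing standing between the homomorphism property of characters and the theorem.
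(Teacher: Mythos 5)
Your strategy is the standard classical one and it is sound, but note that the paper never proves this theorem where it is stated: in Section 2 it is quoted as background (with a pointer to Rudin), and the proof the paper actually supplies comes later, as the Categorical Convolution Theorem (Theorem \ref{thm_categoricalConvolutionTheorem}), whose proof is a two-step diagrammatic argument --- associativity of the multiplication $\XmultSym$, then a single application of the Frobenius law --- valid for any internal group in any $\dagger$-SMC, followed by the remark that in $\fdHilbCategory$ this reduces to Equation \eqref{eqn:ConvolutionTheorem}. Your route is genuinely different: you reduce \eqref{eqn:ConvolutionTheorem} by bilinearity to the delta functions, where it becomes the statement that characters are group homomorphisms, and you obtain the isomorphism claim from bijectivity of $\mathcal{F}_G$ plus unit preservation. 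What your approach buys is elementarity and self-containedness; what the paper's buys is summation-freeness and generality --- it survives in categories like $\fRelCategory$ where the classical states do not form a basis, so there is no spanning set of deltas and your bilinearity reduction has no analogue.

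The caveat is the normalisation point you defer at the end: it is not bookkeeping that "one has to confirm", because with the paper's stated conventions it actually fails. Taking \eqref{eqn:DefTraditionalFT} literally, $\mathcal{F}_G(\delta_g) = \frac{1}{N}\,\xi_{-g}$, so
\begin{equation*}
\mathcal{F}_G(\delta_g * \delta_h) = \tfrac{1}{N}\,\xi_{-(g+h)}, \qquad \mathcal{F}_G(\delta_g)\cdot\mathcal{F}_G(\delta_h) = \tfrac{1}{N^2}\,\xi_{-(g+h)},
\end{equation*}
where $\cdot$ is literal pointwise multiplication (which is exactly what Equation \eqref{eqn:PointwiseMultCharacters} shows the linearly-extended product to be); likewise $\mathcal{F}_G(\delta_0) = \frac{1}{N}\xi_0 \neq \xi_0$, so even your unit-preservation check is off by a factor of $N$. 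The identity \eqref{eqn:ConvolutionTheorem} holds on the nose only if the $\frac{1}{N}$ is moved from the forward transform to the inverse (or the product rescaled), so the final step of your plan is "repair a convention", not "verify consistency". This is an inconsistency in the paper's background section rather than an error in your argument --- and it is invisible in the paper's own categorical proof, because there the normalisation factors are carried by the quasi-special Frobenius algebras themselves.
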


\noindent This concludes our presentation of the Fourier transform in the context of Pontryagin duality. A further reference for details of the topics in this presentation is Rudin~\cite{rudin1962fourier}. The Fourier transform finds wide applicability in signal processing, physics, engineering and the applied sciences, but the full formulation based on Pontryagin duality is rarely used, if mentioned at all. In the engineering context, one usually considers periodic real-valued or complex-valued functions on a $D$-dimensional space, discretized in a rectangular $D$-dimensional lattice, and defines the (Discrete) Fourier transform as a transformation on them. Due to the periodicity conditions, complex-valued functions on a rectangular $D$-dimensional lattice can be equivalently seen as living in $\Ltwo{G}$, where $G = \prod_{j=1}^D \integersMod{n_j}$ and $n_j$ is the number of lattice sites along the $j$-th dimension. The  Fourier transform $\mathcal{G} : \Ltwo{G} \rightarrow \Ltwo{G^\wedge}$ defined above sends these functions onto functions on another, isomorphic $D$-dimensional lattice corresponding to $G^\wedge$. In order to obtain functions living back on the original lattice, one \emph{fixes an isomorphism} $\Psi : G \rightarrow G^\wedge$ (traditionally the one from Equation \ref{eqn:FTDefCharacters}), and defines the Discrete Fourier transform as the following transformation on $\Ltwo{G}$:
\begin{equation}\label{eqn:FTDefDFT}
  \mathbf{F} := f \mapsto \mathcal{F}_G(f) \circ \Psi.
\end{equation}

\noindent This definition has the advantage of working with functions on the same lattice, but the disadvantage of implicitly depending on the choice $\Psi$ of isomorphism.\footnote{This is a common issue in signal processing and physics, where it is related to the symmetry group of the underlying space and the choice of units of measure for energy/frequency. We will not discuss this further.} The transformation $\mathbf{F}$ from Equation \ref{eqn:FTDefDFT} is in fact a unitary automorphism of $\Ltwo{G}$. Its matrix $(\mathbf{F}_{hg})_{h,g \in G}$ in the computational basis is:
\begin{equation} \label{eqn:HadamardMatrixDef}
  \mathbf{F}_{hg} = \exp\left[\, i \, \sum_j \frac{2 \pi}{n_j} \left(\modclass{g_jh_j}{n_j}\right) \, \right] 
\end{equation}
and it is called a \textbf{Fourier matrix} in the context of quantum computing. Fourier matrices correspond to a Fourier transform along with a choice of the isomorphism. Thus the Fourier matrices, exactly like the definition of the Discrete Fourier transform above, are non-canonical, and depend on an implicit choice of isomorphism $\Psi$. This contrasts with the Fourier transform, which is itself canonical.

Fourier matrices are a subclass of more general \textbf{complex Hadamard matrices}: orthogonal matrices\footnote{Here an orthogonal matrix $H$ is a square matrix such that $H^TH=HH^T=\mathbbm{1}$.} whose complex entries are unimodular. In particular, (real) \textbf{Hadamard matrices} are orthogonal matrices with entries $\pm1$. Having defined these four different terms (the Fourier transform, Fourier matrices, Hadamard matrices, and complex Hadamard matrices, see Figure~\ref{fig:FTtoHrelationship}) we will clarify a few ways that they appear in quantum computation.

There is a particularly interesting reason the lack of canonicity of Fourier matrices is not usually an issue in quantum computing. Most of the algorithms are traditionally formulated for qubits, and the state-space of a $D$-qubit system is isomorphic to $\Ltwo{G}$ for $G = \prod_{j=1}^D \integersMod{2}$. The group $\integersMod{2}$ has a unique automorphism (the identity), and thus a unique isomorphism $\integersMod{2} \rightarrow \integersMod{2}^\wedge$, resulting in the familiar matrix
\begin{equation}
\label{hmat}
\begin{pmatrix}1 & 1 \\
1 & -1 \\
\end{pmatrix},
\end{equation}
which is both the only Fourier matrix on a two dimensional system and, in fact, a (real) Hadamard matrix. There is then a unique isomorphism $\Psi : \integersMod{2}^N \rightarrow (\integersMod{2}^N)^\wedge$ which can be obtained by local qubit operations only, namely the $N$-fold tensor product of the isomorphism in \ref{hmat}; if multi-qubit operations are allowed, however, the isomorphism is not unique. We stress that for general groups, i.e. for combinations of quantum systems where some have dimensions larger than two, the Fourier transform in terms of Pontryagin duality does not fix a unique Fourier matrix (not even requiring that it is obtained by local operations only). Furthermore, not all complex Hadamard matrices correspond to a Fourier matrix. In this work, we use Fourier transform to refer explicitly to the canonical one defined in terms of Pontryagin duality.\\

There are a number of existing generalizations in the literature of the Fourier transform presented here that we make contact with to varying degrees. 
\begin{enumerate}
\item[1.] Pontryagin theory can be extended from finite abelian groups to arbitrary locally compact abelian groups equipped with the Haar measure: the groups $G$ and $G^\wedge$ are not necessarily isomorphic (e.g. $\reals^\wedge = \reals$ but $\integers^\wedge = S^1$), but the Fourier transform is still a canonical isomorphism between $\Ltwo{G}$ and $\Ltwo{G^\wedge}$, and it's still true that $(G^\wedge)^\wedge = G$. 

\item[2.] The representation theory can be extended from abelian to arbitrary locally compact groups by observing that $\Ltwo{G}$ is always a $C^\star$ algebra, and considering the Gelfand-Naimark representation. In the abelian case, this representation coincides with the Fourier transform.

\item[3.] Tannaka-Krein duality provides a different generalisation from compact abelian groups to arbitrary compact groups: the finite-dimensional linear representations of a compact group $G$ form a symmetric monoidal category $\Pi(G)$, generalising $G^\wedge$, with representations $R:G \rightarrow \Endoms{}{V_R}$ as objects, intertwiners (linear maps $f: V_R \rightarrow V_S$ s.t. $f \circ R(g) = S(g) \circ f$ for all $g\in G$) as morphisms and tensor product of representations as monoidal tensor. The category $\Pi(G)$ comes with a complex conjugation operation on morphisms, and a theorem of Tannaka shows that the set $\Gamma(\Pi(G))$ of all self-conjugate monoidal natural transformations $\idm{\Pi(G)} \rightarrow \idm{\Pi(G)}$ forms (once equipped with composition of natural transformations and an appropriate topology) a compact group isomorphic to $G$. A generalisation of Tannaka-Krein duality to braided monoidal categories appears in the representation theory of Drinfeld-Jimbo quantum groups. In this work, we do not deal with either Tannaka-Krein theory or Drinfeld-Jimbo quantum groups.
For more on quantum groups and their connection to Hopf algebras see, 
e.g.~\cite{cartier2007primer,street2007quantum}.
\end{enumerate}

\section{Overview}
\label{section_Intro}

This work elucidates the connection between the Fourier transform - as described in the previous section - and strongly complementary observables, i.e. Hopf algebras in dagger symmetric monoidal categories. We summarize our outline and main results in the following paragraphs. Throughout we assume familiarity with the standard notions of Frobenius algebras in $\dagger$-symmetric monoidal categories, which can be reviewed in Appendix~\ref{app:basic}.

In Section \ref{section_InternalGroups} we cover the definition of strong complementarity~\cite{coecke2011interacting}, which has been used in the foundations of quantum mechanics to study non-locality~\cite{coecke2012strong, gogioso2015mermin}, quantum secret sharing~\cite{gogioso2015mermin, zamdzhiev2012abstract}, and blackbox quantum algorithms~\cite{vicary-tqa, zeng2014abstract, zeng2015models}. This allows a generalization beyond $\fdHilbCategory$ to strongly complementarity pairs of a quasi-Special $\dagger$-Frobenius Algebra ($\dagger$-qSFA or $\dagger$-qSCFA if commutative) and a $\dagger$-SCFA. We use this generalization to embed finite groups in arbitrary dagger symmetric monoidal categories.

In Section \ref{section_AbelianGroups_FourierTransform} the traditional definition of Fourier Transform from Section \ref{section_FourierTransformIntro} is lifted to general symmetric monoidal categories. We construct the accompanying general definitions for multiplicative characters and the abelian Fourier transform in this setting.  These results allow us to provide categorical versions, with abstract proofs, of the Fourier inversion theorem, the Convolution theorem, and Pontryagin duality that are all based on a strongly complementary pair of observables.

In Section \ref{section_RelFT}, we study $\RelCategory$ as an example setting for our categorical Fourier transform. This example is of particular interest as it often acts as a toy model for quantum theory~\cite{evans2009classifying, cqm-notes, Msc-ClassicalStructuresRel, zeng2015models}.  We find that, while a generalized Fourier matrix is not suitably defined, a Fourier transform can be.

In Section \ref{section_EnrichedPontryaginDuality}, enriched category theory is used to extend Pontryagin duality, i.e. the abelian Fourier transform, from isomorphism of $\LtwoSym$-spaces to isomorphism of free modules of arbitrary semirings.

In Section \ref{section_NonAbelianFourierTransform}, we look further into the non-abelian case of our construction, and provide a fully categorical generalisation of the Gelfand-Naimark theorem from finite-dimensional Hilbert spaces and C*-algebras to algebras on finite-dimensional free modules of arbitrary semirings.

These results both move Fourier theory into a new mathematical setting and capture the structural connection between quantum theory and the Fourier transform.  Though this connection has been much exploited in quantum algorithms, this work is the first abstract presentation that shows its place in the structure of quantum theory, i.e. alongside strongly complementary observables.

Finally, we note that throughout this work familiarity with the diagrammatic presentation of morphisms in symmetric monoidal categories is assumed. See Selinger~\cite{selinger2011survey} for general reference.

\section{Strong Complementarity and Internal Groups}
\label{section_InternalGroups}

\noindent We will eventually show that the Fourier transform is related to a special type of complementarity called strong complementarity. In this section we introduce the strongly complementary notion for quasi-special commutative Frobenius algebras. The use of quasi-special rather than the special Frobenius algebras that are often used by other authors is for convenience and allows us to lump scalars together rather than having to keep track of them at every step.

\begin{definition}\label{def:QuasiSpecial}
A \textbf{quasi-special} $\dagger$-Frobenius algebra \whitefrob{A} is one that satisfies the following equation for some invertible scalar $N$:
\begin{equation}\label{eqn:QuasiSpecialDef}
\begin{aligned}
\begin{tikzpicture}
\draw (0,0.25) to (0,1) to [out=\nwangle, in=down] (-0.5,1.5) to [out=up, in=\swangle] (0,2) to (0,2.75);
\draw (0,1) to [out=\neangle, in=down] (0.5,1.5) to [out=up, in=\seangle] (0,2);
\node at (0,1) [dot, fill = \classicalStructColour] {};
\node at (0,2) [dot, fill = \classicalStructColour] {};
\end{tikzpicture}
\end{aligned}
\quad=\quad
  \begin{aligned}
  \begin{tikzpicture}
  \node [dot, fill = \classicalStructColour] at (-1.5,1.5) {$N$};
  \draw (-0.5,0) to (-0.5,3);
  \end{tikzpicture}
  \end{aligned}
\end{equation}
We will use the shorthand $\dagger$-qSFA, and refer to $N$ as the \textbf{normalisation factor} for the $\dagger$-qSFA.
\end{definition}

\noindent These $\dagger$-qSFA's can be thought of as generalized orthogonal bases that are normalize-able (as long as the square root of the scalar $\sqrt{N}$ is invertible) even if they are not normalized.  While classical states can be defined in any $\dagger$-SMC, the following definition for matching families requires an appropriate \textit{zero scalar}. For this and other reasons, we consider categories enriched over commutative monoids,\footnote{Refer to~\cite{zeng2015abstract} for more detail on enriched process theories.} i.e. where homsets come with a commutative monoid structure $(\Hom{\CategoryC}{A}{B},+,0)$, and we require the appropriate distributivity laws between the tensor product and the monoidal structure:
\begin{align}
(f+g) \tensor h &= (f\tensor h) + (g \tensor h)\\
f \tensor (g+h) &= (f \tensor g) + (f \tensor h)\\
0 \tensor f &= 0\\
f \tensor 0 &= 0
\end{align}

\noindent We will refer to these as \textbf{distributively $\ComMonCategory$-enriched} $\dagger$-SMCs.

\begin{definition}
\label{def:matchables}
Let $\ket{x}_{x \in X}$ be a finite family of states $I \rightarrow \SpaceG$ in a $\dagger$-SMC which is distributively $\ComMonCategory$-enriched. A \textbf{matchable family} $\ket{x}_{x \in X}$ for a monoid $(\SpaceG,\XmultSym, \XunitSym)$ are those for which the following holds for all $x,y \in X$:
\begin{equation}\label{eqn:matchables}
    \XmultSym \circ \left( \ket{x} \tensor \ket{y} \right) = 
    \begin{cases} 
        \ket{x} \text{ if } \ket{x} = \ket{y}\\
        0 \text{ otherwise }
    \end{cases}
\end{equation}
\end{definition}

\noindent We re-emphasise that while Coecke et al. show that $\dagger$-qSCFA's correspond to bases in $\fdHilbCategory$ by \cite[Thm 5.1]{coecke2013new}, the general notion of a (orthogonal) basis is somewhat different.

\begin{definition}\label{def:basis}
A finite family of states $\ket{x}_{x\in X}: I \rightarrow \SpaceH$ is a \textbf{(orthogonal) basis} (for~$\SpaceH$) if it satisfies the following conditions:
\begin{enumerate}
\item[(i)] Orthogonality, i.e. $\langle y|x\rangle = 0$ if $x \neq y$ (where $\bra{y}$ stands for $\ket{y}^\dagger$).
\item[(ii)] Completeness, i.e. for every $f,g : \SpaceH \rightarrow \SpaceH'$ we have that $\forall x:X \, f \ket{x} = g \ket{x}$ implies $f=g$. 
\end{enumerate}
A finite family of co-states $\bra{x}_{x\in X}: \SpaceH \rightarrow I$ is a \textbf{(orthogonal) cobasis} (for $\SpaceH$) if the family of states $\ket{x}_{x\in X}: I \rightarrow \SpaceH$ is a basis.
\end{definition}
\noindent When the classical states for a classical structure form a basis in this manner, the algebra has ``enough classical points". In $\fdHilbCategory$, this is the usual linear-algebraic notion of orthogonal basis. Strong complementarity was originally introduced by Coecke and Duncan in~\cite{coecke2011interacting} as the additional rule that makes classical structures into a Hopf algebra.\footnote{ They are also studied in this form, though in a different framework, as a foundation for graphical linear algebra by Bonchi et al.~\cite{bonchi2014interacting}.}

\begin{definition}\label{def:StrongComplementarity}
A pair of $\dagger$-qSFAs \whitefrob{A} and \blackfrob{A}, henceforth written as \scpair, is \textbf{strongly complementary} if they are coherent (Definition \ref{def:coherence})  and  satisfy the following \textbf{bialgebra equation} (\ref{eqn:bialgebraEqns}):
\begin{equation}
\label{eqn:bialgebraEqns}
\input{modules/pictures/bialgebraEqns.tex}
\end{equation}
\end{definition}
\noindent Though this definition is usually given for classical structures (Appendix~\ref{app:basic}), we generalise to $\dagger$-qSFAs to include non-commutative algebras and, hence, our later construction of a generalized non-abelian Fourier transform. Coecke et al.~\cite[Thm 3.6]{coecke2012strong} also give a quick proof that the name is an apt one, i.e. that strongly complementary structures are also complementary in the sense of mutually unbiased bases.

\begin{definition}\label{def:Antipode} 
Given a strongly complementary pair of $\dagger$-FAs $\scpair$ on some object $\SpaceG$ in a $\dagger$-SMC, the \textbf{antipode} $\hbox{\input{modules/symbols/antipodeSym.tex}}\!:\SpaceG \rightarrow \SpaceG$ is defined to be the following map:
\begin{equation}
    \input{modules/pictures/Antipode.tex}
\end{equation}
\end{definition}

\begin{lemma}\label{lemma_AntipodeInverse}
Given a strongly complementary pair of $\dagger$-FAs \scpair~on some object $\SpaceG$ in a $\dagger$-SMC, the \textbf{antipode inverse} $\hbox{\input{modules/symbols/antipodeSym.tex}}\!^{-1}:\SpaceG \rightarrow \SpaceG$ is the following map:
\begin{equation}
    \input{modules/pictures/AntipodeInverse.tex}
\end{equation}
Furthermore, if at least one of the two $\dagger$-FAs has a finite matchable family that forms a basis, then the antipode is self-adjoint\footnote{Recall that under certain assumptions on the $\dagger$-FAs that are common in process theories, the antipode is self-adjoint~\cite[Lem. 7.2.6]{kissinger2012pictures}, though we will work in the more general setting.} and unitary, i.e. antipode and antipode inverse coincide.
\end{lemma}
\begin{proof}
    The fact that $\hbox{\input{modules/symbols/antipodeSym.tex}}\!^{-1}$ as defined is indeed the inverse of $\hbox{\input{modules/symbols/antipodeSym.tex}}\!$ is an immediate consequence of the Frobenius law (one application per colour). Now suppose without loss of generality that the matchable states $\ket{g}_{g \in G}$ of $\hbox{\input{modules/symbols/XdotSym.tex}}\!$ form a basis, and remember that $\hbox{\input{modules/symbols/ZdotSym.tex}}\!$ acts as some (possibly non-abelian) group $(G,\cdot,1)$ on them. Then $\bra{h}\, \hbox{\input{modules/symbols/antipodeSym.tex}}\! \ket{g} = \braket{h \cdot g}{1}$ and $\bra{h} \; \hbox{\input{modules/symbols/antipodeSym.tex}}\!^{-1} \ket{g} = \braket{1}{h \cdot g}$ and $\bra{h}\; \hbox{\input{modules/symbols/antipodeSym.tex}}\!^\dagger \ket{g} = \braket{1}{g \cdot h}$ and $\bra{h}\; (\hbox{\input{modules/symbols/antipodeSym.tex}}\!^{-1})^\dagger \ket{g} = \braket{g \cdot h}{1}$ coincide for all $g,h \in G$, proving that $\hbox{\input{modules/symbols/antipodeSym.tex}}\! = \hbox{\input{modules/symbols/antipodeSym.tex}}\!^{-1} = \hbox{\input{modules/symbols/antipodeSym.tex}}\!^{\dagger} = \ket{g} \mapsto \ket{g^{-1}}$ for all $g\in G$.
\end{proof}

\noindent Coecke and Duncan \cite{coecke2011interacting} showed that strongly complementary classical structures have a specific relationship between their phase groups and classical states.

\begin{theorem}
Let $\hbox{\input{modules/symbols/ZdotSym.tex}}\!$ and $\hbox{\input{modules/symbols/XdotSym.tex}}\!$ be a pair of strong complementary classical structures with finite numbers of classical states. Then $K_{\hbox{\input{modules/symbols/XdotSym.tex}}\!}\subseteq P_{\hbox{\input{modules/symbols/ZdotSym.tex}}\!}$, i.e. the classical states of the black classical structure form a subgroup of the phase group of the white classical structure. The converse is true when $\XcomultSym$ has enough classical points.
\end{theorem}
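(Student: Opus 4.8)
The plan is to split the statement into the forward implication (strong complementarity yields the subgroup containment) and its converse, reducing both to the action of the two structures on classical states. First I would fix that the phase group $P_{\ZdotSym}$ carries the multiplication induced by the white multiplication $\ZmultSym$ (composition of the phase shifts $\alpha \mapsto \ZmultSym \circ (\alpha \otimes \mathrm{id})$), with unit $\ZunitSym$. The set-level inclusion $K_{\XdotSym} \subseteq P_{\ZdotSym}$ is precisely mutual unbiasedness: a strongly complementary pair is in particular complementary by Coecke et al.~\cite{coecke2012strong}, so each black classical state is unbiased for, hence a phase of, the white structure.

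Next I would establish closure. Daggering the bialgebra equation~\eqref{eqn:bialgebraEqns} gives $\XcomultSym \circ \ZmultSym = (\ZmultSym \otimes \ZmultSym)\circ(\mathrm{id}\otimes\sigma\otimes\mathrm{id})\circ(\XcomultSym \otimes \XcomultSym)$; feeding two black classical states $\ket{g},\ket{h}$ (which $\XcomultSym$ copies) into this collapses the right-hand side to $\ZmultSym(\ket{g}\otimes\ket{h}) \otimes \ZmultSym(\ket{g}\otimes\ket{h})$, so the phase product $\ZmultSym(\ket{g}\otimes\ket{h})$ is again $\XcomultSym$-copyable and lies in $K_{\XdotSym}$. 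The daggered unit equation of~\eqref{eqn:bialgebraEqns} shows $\ZunitSym$ is $\XcomultSym$-copyable, so the phase identity lies in $K_{\XdotSym}$. Since $K_{\XdotSym}$ is finite, a unital submonoid of a group closed under its operation is automatically a subgroup --- inverses being supplied concretely by the antipode $\antipodeSym$ of Definition~\ref{def:Antipode}, $\ket{g}\mapsto\ket{g^{-1}}$ --- which gives $K_{\XdotSym} \le P_{\ZdotSym}$.

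For the converse I would assume $K_{\XdotSym}$ is a subgroup of $P_{\ZdotSym}$ and that $\XcomultSym$ has enough classical points, i.e. its classical states form a basis (Definition~\ref{def:basis}). By completeness (condition (ii)) it suffices to check each equation of~\eqref{eqn:bialgebraEqns} on inputs $\ket{g}\otimes\ket{h}$: there the main equation asserts exactly that $\ZmultSym(\ket{g}\otimes\ket{h})$ is $\XcomultSym$-copyable (the closure hypothesis), the unit equation asserts $\ZunitSym \in K_{\XdotSym}$ (the subgroup identity), and the counit equations follow because every element of $K_{\XdotSym}\subseteq P_{\ZdotSym}$ is a white phase with the appropriate $\ZcounitSym$-normalisation; completeness then lifts these pointwise identities to the required morphism equalities. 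The main obstacle I anticipate is the bookkeeping of the quasi-speciality scalars: the phase product $\ZmultSym(\ket{g}\otimes\ket{h})$ and the copied unit $\ZunitSym$ are classical/phase states only up to the normalisation factor $N$ (and $\sqrt{N}$), so care is needed to confirm that the renormalised product lands \emph{exactly} in $K_{\XdotSym}$ and that the phase-group and classical-state normalisations agree. Establishing that black classical points are genuinely white phases (rather than merely unbiased states) is the other delicate point, and is where the complementarity consequence of strong complementarity is essential; by contrast the group closure is a direct diagrammatic consequence of the (daggered) bialgebra law.
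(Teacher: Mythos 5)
A point of reference first: the paper does not actually prove this theorem --- it is imported from Coecke and Duncan \cite{coecke2011interacting} --- so your proposal can only be judged against the standard argument rather than an in-paper proof. Your forward direction \emph{is} that standard argument, and it is sound. The set-level inclusion legitimately rides on the cited fact that strong complementarity implies complementarity in the mutually-unbiased sense \cite{coecke2012strong} (the same citation the paper itself makes); your daggered-bialgebra computation showing that $\ZmultSym \circ (\ket{g}\tensor\ket{h})$ is again $\XcomultSym$-copyable is correct; the dagger of the counit law in Equation \ref{eqn:bialgebraEqns} does give $\XcomultSym \circ \ZunitSym = \ZunitSym \tensor \ZunitSym$, i.e.\ the phase-group identity lies in $K_{\XdotSym}$; and closing off inverses by ``a finite submonoid of a group is a subgroup'' is exactly where the hypothesis of finitely many classical states earns its keep.

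The converse, however, has a genuine gap at the sentence ``completeness then lifts these pointwise identities to the required morphism equalities.'' Completeness in Definition \ref{def:basis} is a one-variable notion: it cancels single states $\ket{x}: \tensorUnit \rightarrow \SpaceG$ against parallel morphisms \emph{out of} $\SpaceG$. But the bialgebra law and the counit law of Equation \ref{eqn:bialgebraEqns} are equations between morphisms out of $\SpaceG \tensor \SpaceG$, and in a general $\dagger$-SMC joint epicness of $\ket{g}_{g}$ on $\SpaceG$ does not imply joint epicness of the product family $\ket{g}\tensor\ket{h}$ on $\SpaceG\tensor\SpaceG$; ``enough classical points'' is not automatically stable under $\tensor$. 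Concretely, after cancelling in the first factor you are left with equalities of the form $f \circ (\id{\SpaceG}\tensor\ket{h}) = f'\circ(\id{\SpaceG}\tensor\ket{h})$ for all classical $h$, which is not of the shape that Definition \ref{def:basis}(ii) lets you cancel. The repair, inside this paper's own framework, is Lemma \ref{lemma_BasisResolutionPartition}: under distributive $\ComMonCategory$-enrichment with the family finite, orthogonal and normalisable, a basis of classical states is equivalent to a resolution of the identity $\frac{1}{N}\sum_g \ket{g}\bra{g} = \id{\SpaceG}$, whence by the distributivity axioms $\frac{1}{N^2}\sum_{g,h}\left(\ket{g}\tensor\ket{h}\right)\left(\bra{g}\tensor\bra{h}\right) = \id{\SpaceG\tensor\SpaceG}$, and this \emph{does} lift your pointwise identities to morphism equalities. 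You should also note that the unit law needs no lifting at all, being literally the hypothesis $\ZunitSym \in K_{\XdotSym}$. Your flagged worry about the quasi-speciality scalars is well placed, but that is bookkeeping; the tensor-stability of completeness is the structural hole, and it is the one step that would genuinely fail in an arbitrary $\dagger$-SMC as written.
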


\noindent This leads to Kissinger's classification \cite[Cor. 3.10]{coecke2012strong} of strongly complementary classical structures in $\fdHilbCategory$:
\begin{corollary}
\label{col:SCclassification}
Every pair of strongly complementary classical structures in $\fdHilbCategory$ is
of the following form: 
\begin{equation}
\left\{\begin{array}{cl}
\tinycomultblack   & :: \ket{g}\mapsto \ket{g}\otimes \ket{g}\vspace{1mm}\\
\tinycounitblack & :: \ket{g}\mapsto 1
\end{array}\right.
\quad
\left\{\begin{array}{cl}
\ZmultSym   & :: \ket{g}\otimes \ket{h} \mapsto {1\over\sqrt{D}} \ket{g + h}\vspace{1mm} \\
\ZunitSym & :: 1 \mapsto \sqrt{D} \ket{0}
\end{array}\right.
\end{equation}
where $(G =\{g, h, \ldots\}, +, 0)$ is finite abelian. Conversely, each such pair is strongly complementary.  
\end{corollary}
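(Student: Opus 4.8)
The plan is to pin down both structures on a single orthonormal basis, using the correspondence between classical structures and bases together with the subgroup theorem stated just above. First I would invoke \cite[Thm 5.1]{coecke2013new}: in $\fdHilbCategory$ the classical structure $\XdotSym$ is a $\dagger$-SCFA and hence corresponds to a unique orthonormal basis $\ket{g}_{g\in G}$ of classical states, on which its comultiplication acts by copying, $\tinycomultblack :: \ket{g} \mapsto \ket{g} \tensor \ket{g}$, and its counit by deleting, $\tinycounitblack :: \ket{g} \mapsto 1$. This immediately fixes the left-hand column of the classification and identifies the underlying set $G$ with the finite set of classical states $K_{\XdotSym}$.

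Next I would read the white structure off the preceding theorem. Since $K_{\XdotSym}$ is finite, that result gives $K_{\XdotSym} \subseteq P_{\ZdotSym}$: the classical states of $\XdotSym$ are closed under the white multiplication $\ZmultSym$ and form a group inside the phase group of $\ZdotSym$, with the white unit $\ZunitSym$ as identity. Writing the operation additively as $+$ with identity $0$, this says precisely that $\ZmultSym :: \ket{g} \tensor \ket{h} \mapsto (\text{scalar})\,\ket{g+h}$ and $\ZunitSym :: 1 \mapsto (\text{scalar})\,\ket{0}$. Commutativity of the white classical structure forces $(G,+,0)$ to be abelian, and Lemma~\ref{lemma_AntipodeInverse} realises the inverses $\ket{g}\mapsto\ket{g^{-1}}$ categorically via the antipode, so $G$ is genuinely a finite abelian group. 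The two scalars are then forced by specialness of $\ZdotSym$: solving $\ZmultSym \circ \ZcomultSym = \text{(identity)}$ on a basis vector gives the normalisation $1/\sqrt{D}$ on the multiplication, after which the white unit law forces $\sqrt{D}$ on the unit, where $D = |G|$ is the dimension of $\SpaceG$. This yields the right-hand column.

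For the converse I would start from an arbitrary finite abelian group $(G,+,0)$, take $\SpaceG$ to have orthonormal basis $\ket{g}_{g\in G}$, and define $\XdotSym$ as the copying structure on this basis and $\ZdotSym$ as the group-algebra structure with the stated normalisations. Both are $\dagger$-SCFAs (the copying structure is standard; specialness of the group algebra is the scalar computation above run backwards), and strong complementarity reduces to checking the bialgebra and coherence equations on basis states, which is routine once the normalisation constants are fixed. I expect the main difficulty to be bookkeeping rather than conceptual, since the substantive step — that $\ZmultSym$ restricts to a group law on the $\XdotSym$-classical states — is already delivered by the preceding theorem. The remaining care lies in using orthonormality and completeness from \cite{coecke2013new} to guarantee that products of classical states are honest basis vectors rather than superpositions, and in tracking the $\sqrt{D}$ normalisation scalars consistently so that the special law and the bialgebra equation hold together.
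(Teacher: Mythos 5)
Your route is the one the paper itself intends --- the paper offers no proof of its own beyond deriving the statement from the preceding theorem and citing Kissinger's \cite[Cor. 3.10]{coecke2012strong} --- and your first step, identifying $\XdotSym$ with copying of an orthonormal basis via \cite[Thm 5.1]{coecke2013new}, is fine. The genuine gap is the step ``the two scalars are then forced by specialness.'' After the subgroup argument, what you actually have is $\ZmultSym :: \ket{g}\tensor\ket{h} \mapsto c_{g,h}\ket{g+h}$ for \emph{some} scalars $c_{g,h}$, and specialness of $\ZdotSym$ constrains only moduli in aggregate: it yields $\sum_{g+h=k}|c_{g,h}|^2 = 1$, and says nothing about the phases of the $c_{g,h}$, nor even that they are independent of $(g,h)$. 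This is not pedantry, because there are honest counterexamples that pass every condition your argument uses. Pick any $f : G \to U(1)$ with $f(0)=1$ which is \emph{not} a character, and set $c_{g,h} = \frac{1}{\sqrt{D}}\,\overline{f(g)}\,\overline{f(h)}\,f(g+h)$. This twisted multiplication is the image of the standard one under the diagonal unitary $\ket{g}\mapsto f(g)\ket{g}$, hence a genuine commutative special $\dagger$-Frobenius algebra in $\fdHilbCategory$; it has unit exactly $\sqrt{D}\ket{0}$, and it sends basis states to scalar multiples of basis states --- yet it is not of the stated form, and it is not strongly complementary to $\XdotSym$ (its comultiplication applied to $\XunitSym$ is not even proportional to $\XunitSym \tensor \XunitSym$ unless $f$ is a character). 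So commutativity, specialness, unitality and ``closure up to scalar'' cannot prove the corollary; the input that kills the $(g,h)$-dependent phases must come from strong complementarity itself. Concretely, either read the subgroup statement scalar-exactly --- the phase-group product of two $\XdotSym$-classical states is again a classical state on the nose, not up to phase, which is what pins down $c_{g,h} = 1/\sqrt{D}$ in the cited classification --- or invoke a bialgebra/coherence equation directly: requiring $\ZcomultSym \circ \XunitSym \propto \XunitSym \tensor \XunitSym$ forces $\overline{f(g)}\,\overline{f(h)}\,f(g+h)$ to be constant in $(g,h)$, i.e. the twist to be trivial. That step is the real content of the corollary and it is missing from your proof.

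A secondary caution concerns your converse, which you call routine. With the stated normalisations $1/\sqrt{D}$ and $\sqrt{D}$, the paper's literal equations do not hold on the nose: one computes $\ZcomultSym \circ \XunitSym = \frac{1}{\sqrt{D}}\left(\XunitSym \tensor \XunitSym\right)$ and $\XcomultSym \circ \ZunitSym = \frac{1}{\sqrt{D}}\left(\ZunitSym \tensor \ZunitSym\right)$, so the bialgebra equations of Definition \ref{def:StrongComplementarity} and the coherence equations of Definition \ref{def:coherence} are only satisfied up to factors of $\sqrt{D}$. A careful basis-state check would therefore reveal a mismatch rather than close the proof. To make the converse literally true you must either adopt the scaled bialgebra conventions of \cite{coecke2012strong}, or verify the equations for the quasi-special normalisation $\ket{g}\tensor\ket{h}\mapsto\ket{g+h}$, $1 \mapsto \ket{0}$, and then present the corollary's pair as its rescaling; tracking which convention is in force is exactly the bookkeeping your proposal defers, and here it is not cosmetic.
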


\noindent We use strongly complementary structures to embed groups into an arbitrary $\dagger$-SMC.

\begin{definition}\label{def:AbClassicalGroup} An \textbf{internal group}, denoted by $(\,\mathcal{G} , \hbox{\input{modules/symbols/timemultSym.tex}}\!, \hbox{\input{modules/symbols/timeunitSym.tex}}\!, \tinycomultblack, \tinycounitblack$ ) or  $(\,\mathcal{G}, \hbox{\input{modules/symbols/ZdotSym.tex}}\!, \hbox{\input{modules/symbols/XdotSym.tex}}\!)$ when no confusion should arise), consists of a strongly complementary pair on the same object $\mathcal{G}$ of a $\dagger$-SMC and
\begin{enumerate}
\item A $\dagger$-qSFA, the \textbf{group structure}, denoted by $\hbox{\input{modules/symbols/ZdotSym.tex}}\!$.
\item A $\dagger$-qSCFA, the \textbf{point structure}, denoted by $\hbox{\input{modules/symbols/XdotSym.tex}}\!$.
\end{enumerate}
The multiplication and unit for the group structure are called \textbf{group multiplication} and \textbf{group unit}, and the antipode $\hbox{\input{modules/symbols/antipodeSym.tex}}\!$ for the pair is called the \textbf{group inverse}. An \textbf{abelian internal group} is one where the group structure is commutative.
\end{definition}

\noindent The internal groups in an $\dagger$-SMC form a category $\cat{Grp}[\cat{C}]$, with objects given by the strongly complementary pairs 
$(\,\SpaceG , \hbox{\input{modules/symbols/ZdotSym.tex}}\!, \hbox{\input{modules/symbols/XdotSym.tex}}\!)$, and morphisms 
$(\,\SpaceG , \hbox{\input{modules/symbols/ZdotSym.tex}}\!, \hbox{\input{modules/symbols/XdotSym.tex}}\!) \rightarrow (\,\SpaceG' , \hbox{\input{modules/symbols/ZaltdotSym.tex}}\!, \hbox{\input{modules/symbols/XaltdotSym.tex}}\!)$ given by 
$f: \SpaceG \rightarrow \SpaceG'$ in $\cat{C}$ that are co-monoid homomorphisms 
$f: (\tinycomultblack,\tinycounitblack) \rightarrow (\XaltcomultSym,\XaltcounitSym)$ and monoid homomorphisms 
$f: (\ZmultSym,\ZunitSym) \rightarrow ( \ZaltmultSym,\ZaltunitSym)$;
the abelian internal groups form a full subcategory $\cat{AbGrp}[\cat{C}]$. We will refer to these morphisms as \textbf{internal group homomorphisms}, both when seen as morphisms in $\cat{Grp}[\cat{C}]$ and in $\cat{C}$.

\begin{theorem}\label{thm_InteralGroupsTraditionalGroups} 
        If $(\,\SpaceG , \hbox{\input{modules/symbols/ZdotSym.tex}}\!, \hbox{\input{modules/symbols/XdotSym.tex}}\!)$  is an (abelian) internal group in any $\dagger$-SMC, then $(\hbox{\input{modules/symbols/timemultSym.tex}}\!, \hbox{\input{modules/symbols/timeunitSym.tex}}\!)$ acts as an (abelian) group $G$ on the classical points of $(\;\hbox{\input{modules/symbols/timediagSym.tex}}\!, \hbox{\input{modules/symbols/trivialcharSym.tex}}\!)$, henceforth the \textbf{group elements}. Furthermore, this correspondence yields an equivalence between the the category of (abelian) internal groups in $\fdHilbCategory$ and the category of finite (abelian) groups.
\end{theorem}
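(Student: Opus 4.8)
The plan is to separate the two assertions: the pointwise claim that the classical points carry a group structure (valid in any $\dagger$-SMC) and the categorical equivalence (specific to $\fdHilbCategory$). For the first, I would fix the classical points $\ket{g}_{g\in G}$ of the point structure, i.e.\ the states copied by its comultiplication and deleted by its counit. Closure of the group multiplication on these points is driven by the bialgebra law~\eqref{eqn:bialgebraEqns}: post-composing the group product of two classical points with the point-structure comultiplication, the bialgebra law rewrites this as copying each input and re-multiplying, so that $\ket{g}\cdot\ket{h}$ is again copied and is therefore a classical point, while the coherence conditions show it is deleted by the counit. The same coherence conditions exhibit the group unit as a classical point. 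Associativity and the unit law descend verbatim from the monoid axioms of the group-structure $\dagger$-qSFA, and commutativity in the abelian case from commutativity of that multiplication. For inverses I would invoke the antipode: by Lemma~\ref{lemma_AntipodeInverse} it acts as $\ket{g}\mapsto\ket{g^{-1}}$, and the Hopf-algebra antipode law supplied by strong complementarity restricts, on a classical point $\ket{g}$, to the identity $g^{-1}\cdot g = 1$. This settles the first sentence.

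For the equivalence I would define a functor from the category of (abelian) internal groups in $\fdHilbCategory$ to finite (abelian) groups, sending $(\SpaceG,\ZdotSym,\XdotSym)$ to its group of classical points and an internal group homomorphism to its restriction to the classical points. The two homomorphism conditions make this well-defined on morphisms: being a comonoid homomorphism for the point structures forces a morphism to carry classical points to classical points, while being a monoid homomorphism for the group structures forces the restriction to preserve multiplication and unit, hence to be a group homomorphism. In $\fdHilbCategory$ the classical points of the point structure form an orthogonal basis by \cite[Thm~5.1]{coecke2013new} and its quasi-special variant, so faithfulness is immediate from the completeness clause of Definition~\ref{def:basis}. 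For fullness I would take an arbitrary group homomorphism $\phi\colon G\to G'$, extend it linearly by $\ket{g}\mapsto\ket{\phi(g)}$, and check that the resulting map is simultaneously a comonoid homomorphism (it sends basis to basis, hence commutes with copying and deleting) and a monoid homomorphism (since $\phi$ preserves products and the identity); this is the internal group homomorphism inducing $\phi$.

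Essential surjectivity requires realising each finite group $G$ as an internal group. I would equip $\complexs^G$ with the copy/delete $\dagger$-SCFA of its standard basis as the point structure and with the group-algebra multiplication $\ket{g}\otimes\ket{h}\mapsto\ket{gh}$, unit $\ket{1}$, as the group structure, then verify strong complementarity---the bialgebra law is precisely the compatibility of copying with the group product, with normalisation factor $N=|G|$---and that the classical points recover $G$. In the abelian case Corollary~\ref{col:SCclassification} confirms that every abelian internal group is already of this form, closing the loop. I expect the main obstacle to lie in the non-abelian half of this last step: Corollary~\ref{col:SCclassification} covers only commutative classical structures, so for a non-commutative group structure I must argue separately that every $\dagger$-qSFA in $\fdHilbCategory$ strongly complementary to a $\dagger$-qSCFA is the group algebra of a finite group. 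The crux is to show that, under the bialgebra law, the matchable basis multiplies basis elements to single basis elements rather than to superpositions; closure, associativity, the unit, and the antipode-inverse then upgrade this multiplication to a genuine group, while the $\dagger$ and quasi-special conditions of Definition~\ref{def:QuasiSpecial} fix the normalisation. This structural rigidity in the non-commutative setting is the delicate point, and is exactly where the quasi-special generalisation earns its keep.
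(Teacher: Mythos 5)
Your monoid part is sound --- closure and the unit via the bialgebra and coherence laws of Definition~\ref{def:StrongComplementarity}, associativity and unit laws inherited from $\ZdotSym$ --- but the inverse step has a genuine hole, in two ways. First, citing Lemma~\ref{lemma_AntipodeInverse} for ``$\antipodeSym$ acts as $\ket{g}\mapsto\ket{g^{-1}}$'' is circular: that half of the lemma is proved only under the hypothesis that the matchable states form a basis (precisely what an arbitrary $\dagger$-SMC denies you), and its proof explicitly invokes the fact that ``$\ZdotSym$ acts as some (possibly non-abelian) group $(G,\cdot,1)$'' on those states --- it presupposes the statement you are proving, and the symbol $\ket{g^{-1}}$ has no meaning before inverses are known to exist. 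Second, even granting the Hopf law (legitimate, since the paper records that strong complementarity implies complementarity, citing \cite[Thm 3.6]{coecke2012strong}), what it gives you on a classical point $\ket{g}$ is an equation of the form $\ZmultSym\circ(\antipodeSym\ket{g}\tensor\ket{g})=\ZunitSym$: an inverse of $\ket{g}$ \emph{in $\SpaceG$}, not in your monoid. For the classical points to form a group you must still show that $\antipodeSym\ket{g}$ is itself copied by $\XcomultSym$ and deleted by $\XcounitSym$, i.e.\ that the antipode is a comonoid (anti-)homomorphism for $\XdotSym$ --- true, but a nontrivial diagrammatic fact you never establish. (In $\fdHilbCategory$ you could instead expand $\antipodeSym\ket{g}$ in the basis of classical points and extract a two-sided inverse inside the monoid; exactly this expansion is unavailable in a general $\dagger$-SMC.) Note that the paper gives no proof of this theorem at all: it rests on the unlabelled Coecke--Duncan theorem $K_{\XdotSym}\subseteq P_{\ZdotSym}$ quoted just before Corollary~\ref{col:SCclassification} (where inverses come for free from the phase group) and on Kissinger's classification. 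Your direct diagrammatic argument is therefore a genuinely different, more self-contained route, and the missing comonoid-homomorphism property of $\antipodeSym$ is its one gap.

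By contrast, the ``delicate point'' you end on is not a point of delicacy at all. For the equivalence you need your functor to be full, faithful and essentially surjective; essential surjectivity requires only the group-algebra construction you give, so no classification of \emph{all} internal groups in $\fdHilbCategory$ (Corollary~\ref{col:SCclassification} or a non-abelian analogue of it) is needed --- not even in the abelian case. Moreover, such a classification is an immediate byproduct of what you already did: in $\fdHilbCategory$ the classical points of the point structure form a basis by \cite{coecke2013new}, and your closure argument --- which nowhere uses commutativity of $\ZdotSym$ --- shows that $\ZmultSym$ sends pairs of basis vectors to basis vectors rather than to superpositions; by linearity the internal group then \emph{is} (up to the scalar bookkeeping the paper itself elects to suppress) the group algebra of its group of classical points, non-abelian case included. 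So the ``structural rigidity'' you defer to the end as the crux is exactly your own first-paragraph argument, and no separate non-commutative analysis is required.
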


\noindent In $\fdHilbCategory$, the point structure $(\hbox{\input{modules/symbols/XdotSym.tex}}\!)$ characterises the group elements $\ket{g}_{g\in G}$ as an orthonormal basis for $\SpaceG$.  This is the basis of delta functions from Equation \ref{eqn:computationalBasis}, with $\ket{g} := \delta_g$. The corresponding isomorphism $\Ltwo{G} \isom \SpaceG$ sends any square-integrable $f: G \rightarrow \complexs$ to the vector $\ket{f} \in \SpaceG$ defined by $\ket{f} = \sum_{g\in G} f(g) \ket{g}$. Also under this isomorphism, the multiplicative fragment $(\, \SpaceG,\hbox{\input{modules/symbols/timemultSym.tex}}\!,\hbox{\input{modules/symbols/timeunitSym.tex}}\!)$ of the internal group structure acts as the convolution operation from Equation \ref{eqn:convolutionOperation}. Simply put, an internal group  $\mathbb{G} = (\,\SpaceG,\hbox{\input{modules/symbols/ZdotSym.tex}}\!,\hbox{\input{modules/symbols/XdotSym.tex}}\!)$ in $\fdHilbCategory$ consists of:
\begin{enumerate}
\item[(i)] a space $\SpaceG$
\item[(ii)] a distinguished orthogonal basis, encoded by the $\dagger$-qSCFA $\hbox{\input{modules/symbols/XdotSym.tex}}\!$
\item[(iii)] a group structure on that basis, encoded by the $\dagger$-qSFA $\hbox{\input{modules/symbols/ZdotSym.tex}}\!$
\end{enumerate}

\noindent From the point of view of the category $\cat{Grp}[\cat{C}]$, $\mathbb{G}$ should be understood as the group $G$ encoded by $\hbox{\input{modules/symbols/ZdotSym.tex}}\!$, while from the point of view of the category $\fdHilbCategory$ it should be considered as endowing $\SpaceG$ with the structure of $\Ltwo{G}$.\footnote{In this correspondence, the orthogonal basis in $\SpaceG$ corresponds to the basis of delta functions in $\Ltwo{G}$, as given in Equation \ref{eqn:computationalBasis}. The groups structure given by $\hbox{\input{modules/symbols/ZdotSym.tex}}\!$ corresponds to the convolution operation from Equation \ref{eqn:convolutionOperation}.} As we abstract away from Hilbert spaces, we will take this conceptual standpoint. Sometimes, when talking about an internal group $\mathbb{G} = (\,\SpaceG,\hbox{\input{modules/symbols/ZdotSym.tex}}\!,\hbox{\input{modules/symbols/XdotSym.tex}}\!)$, we will refer to states $\ket{f} : I \rightarrow \SpaceG$ as \textbf{states of $\mathbb{G}$}, generalising square-integrable functions $f\in \Ltwo{G}$.


\section{Abelian Fourier transform}
\label{section_AbelianGroups_FourierTransform}

The previous section provides us with the basic tools to do group theory in arbitrary $\dagger$-SMCs. This section contains the first of the main results of this work. It defines general Pontryagin duals, introduces the Fourier Transform and connects it to the more traditional theory presented in Section \ref{section_FourierTransformIntro}. We begin by introducing multiplicative characters as co-states, building on ideas in~\cite{vicary-tqa} and~\cite{zeng2014abstract}.

The use of the $\Ltwo{G}$ notation in this section is consistent with the fact that $\LtwoSym$-spaces over finite groups are exactly finite-dimensional Hilbert spaces that come with a canonical choice of basis (the group elements) and a group operation over them. Throughout, we have identified $\Ltwo{\hat{G}}\isom \Ltwo{G}^\star$ as the multiplicative characters are a basis of $\Ltwo{G}^\star$.

\begin{definition}\label{def_MultiplicativeCharacters}
A \textbf{multiplicative character} for a monoid $(\, \SpaceG, \XmultSym, \XunitSym)$ in a $\dagger$-SMC is a monoid homomorphism from $(\,\hbox{\input{modules/symbols/timemultSym.tex}}\!, \hbox{\input{modules/symbols/timeunitSym.tex}}\!)$ to the canonical monoid on the trivial object $\tensorUnit$ induced by the unitors, or equivalently it is a co-state $
		\renewcommand{\tempSymLabel}{}
		\hbox{\input{modules/symbols/multCharacterEffectSym.tex}}
	: \SpaceG \rightarrow \tensorUnit$ satisfying the following equations:
	\begin{equation}\label{eqn:MultCharDef}
		\input{modules/pictures/MultCharDef.tex}
	\end{equation}
\end{definition}

\begin{lemma}\label{thm_AbCopiablesMultiplicativeCharacters}
If $(\,\SpaceG,\hbox{\input{modules/symbols/XdotSym.tex}}\!,\hbox{\input{modules/symbols/ZdotSym.tex}}\!)$ is an internal group in a $\dagger$-SMC, then the classical states of the group structure are exactly the (adjoints of its) multiplicative characters. In the case of $\fdHilbCategory$, the group structure of an abelian internal group thus characterises the (group theoretic) multiplicative characters of $G$ as a co-basis for $\SpaceG$, all characters having norm $N$, the normalisation factor for $\hbox{\input{modules/symbols/XdotSym.tex}}\!$.
\end{lemma}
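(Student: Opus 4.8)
The plan is to establish the two sentences of the statement in turn: the first is a structural fact available in any $\dagger$-SMC and follows from a single application of the dagger, while the second is a concrete computation specific to $\fdHilbCategory$.

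For the first sentence I would start from the defining equations of a classical state $\psi : \tensorUnit \to \SpaceG$ of the group structure, namely that $\psi$ is copied by the group comultiplication and deleted by the group counit. Applying the dagger to both equations and using that in a $\dagger$-Frobenius algebra the comultiplication and counit are precisely the adjoints of the multiplication and unit, the copy equation turns into $\psi^\dagger \circ (\text{group mult}) = \psi^\dagger \tensor \psi^\dagger$ and the delete equation into $\psi^\dagger \circ (\text{group unit}) = \id{\tensorUnit}$. These are exactly the two equations of Definition~\ref{def_MultiplicativeCharacters} for the group structure's monoid, so $\psi^\dagger$ is a multiplicative character; since the dagger is involutive the implication reverses, yielding the claimed bijection. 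I would stress that this argument never invokes commutativity, so it covers the non-abelian case as well.

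For the second sentence I would pass to $\fdHilbCategory$ and use Theorem~\ref{thm_InteralGroupsTraditionalGroups} together with the classification of Corollary~\ref{col:SCclassification} to realise $\SpaceG \iso \Ltwo{G}$, with the group elements $\ket{g}_{g \in G}$ as the distinguished orthogonal basis and the group structure acting by convolution. Writing a candidate classical state as $\psi = \sum_{g} \psi_g \ket{g}$ and expanding the copy equation in this basis reduces, after absorbing the convolution scaling, to the multiplicativity relation $\psi_{g+h} = \psi_g \psi_h$; the delete equation then normalises the solution by fixing its value at the group identity to $1$, ruling out the degenerate all-zero solution, and the surviving solutions are exactly the group homomorphisms $\chi : G \to \complexs^\times$, i.e. the group-theoretic multiplicative characters of $G$. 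By the first part their adjoint co-states are the abstract multiplicative characters, which completes the identification. The conceptual point worth isolating here is that the two strongly complementary structures carry dual families of classical points: the group elements for the point structure and the characters for the group structure.

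It then remains to verify the co-basis and norm claims. There are $\Dim{\SpaceG} = |G|$ characters, and they are pairwise orthogonal by the orthogonality of characters recalled just before the Convolution Theorem in Section~\ref{section_FourierTransformIntro}; hence the corresponding states form an orthogonal basis and the characters themselves a cobasis in the sense of Definition~\ref{def:basis}. The norm is then a direct evaluation of $\braket{\chi}{\chi}$ in the group-element basis, using $|\chi(g)| = 1$ and the self-overlap of the basis vectors fixed by the normalisation factor, giving $\braket{\chi}{\chi} = N$ for every character. I expect the one genuinely fiddly part to be the bookkeeping of normalisation constants --- the quasi-special factor $N$ of $\XdotSym$, the convolution scaling from Corollary~\ref{col:SCclassification}, and the induced rescaling of the classical states --- so that the norm comes out as exactly $N$ rather than some multiple of it; the structural content, by contrast, is entirely contained in the one-line dagger duality of the second paragraph.
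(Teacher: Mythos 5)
Your proposal is correct, but it is worth comparing it with the paper's own proof, which reads, in its entirety: ``The first part is immediate, the second follows from the results in \cite{coecke2013new}.'' For the first sentence, your dagger-duality argument is exactly the content behind the paper's ``immediate'': since the group structure is a $\dagger$-Frobenius algebra, its comultiplication and counit are the adjoints of its multiplication and unit, so daggering the copy and delete equations of a classical state yields precisely the two equations of Definition~\ref{def_MultiplicativeCharacters}, and involutivity of the dagger reverses the implication; your remark that commutativity plays no role matches the fact that this half of the lemma is stated for arbitrary internal groups. One caveat: the paper's Definition~\ref{def:copyables} of classical states imposes only the copying equation, whereas the biconditional you prove (and the lemma itself) needs the deletion equation too --- otherwise the zero state is copyable but its adjoint fails $\chi \circ \XunitSym = 1$. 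Your reading, which builds the counit condition into classicality, is the one under which the statement is actually true, and the paper is no more careful on this point. For the second sentence you take a genuinely different route. The paper simply cites the correspondence of \cite{coecke2013new} between commutative dagger Frobenius algebras and orthogonal bases in $\fdHilbCategory$: the group structure of an abelian internal group is a $\dagger$-qSCFA, so its classical states automatically form an orthogonal basis with self-overlap equal to the normalisation factor, and part one plus linear extension identifies these states with the group-theoretic characters. You instead solve the copying equation explicitly in the group-element basis to get the multiplicativity relation $\psi_{g+h} = \psi_g\psi_h$, identify the solutions with homomorphisms $G \to \complexs^\times$, and then assemble the co-basis claim from classical character theory: $|G^\wedge| = |G|$ for finite abelian $G$, the orthogonality relations, and a dimension count. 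Both arguments are sound; the paper's route buys brevity and gets orthogonality, completeness and the norm claim for free from the Frobenius-algebraic theorem, while yours is more elementary and self-contained categorically, at the price of importing the classical orthogonality relations and of the normalisation bookkeeping you rightly flag as the only delicate step.
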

\begin{proof} 
  The first part is immediate, the second follows from the results in \cite{coecke2013new}.
\end{proof}

\noindent In $\fdHilbCategory$, the multiplicative characters of an internal group $(\,\SpaceG,\hbox{\input{modules/symbols/XdotSym.tex}}\!,\hbox{\input{modules/symbols/ZdotSym.tex}}\!)$ are co-states $\SpaceG \rightarrow \complexs$, while the multiplicative characters defined in Section \ref{section_FourierTransformIntro} are group homomorphisms $G \rightarrow \complexs^\times$. Under the isomorphism $\Ltwo{G} \isom \SpaceG$ given by the point structure, the multiplicative characters of the internal group are exactly the linear extensions to $\Ltwo{G}$ of the multiplicative characters of $G$. If the internal group is abelian, then the multiplicative character are exactly the adjoints of the unique orthogonal basis associated with the $\hbox{\input{modules/symbols/XdotSym.tex}}\!$ structure.

\begin{theorem}\label{thm_PontryaginDualsSMC}
Let $\mathbb{G} = (\,\SpaceG,\hbox{\input{modules/symbols/XdotSym.tex}}\!,\hbox{\input{modules/symbols/ZdotSym.tex}}\!)$ be an abelian internal group in a $\dagger$-SMC $\CategoryC$.
Then $\mathbb{G}^\wedge := (\,\SpaceG,\hbox{\input{modules/symbols/ZdotSym.tex}}\!,\hbox{\input{modules/symbols/XdotSym.tex}}\!)$ is an abelian internal group in the $\dagger$-SMC $\OpCategory{\CategoryC}$, and we shall refer to it as the \textbf{Pontryagin dual} of $(\,\SpaceG,\hbox{\input{modules/symbols/XdotSym.tex}}\!,\hbox{\input{modules/symbols/ZdotSym.tex}}\!)$. The group elements of $(\,\SpaceG,\hbox{\input{modules/symbols/ZdotSym.tex}}\!,\hbox{\input{modules/symbols/XdotSym.tex}}\!)$ are exactly the multiplicative characters of $(\,\SpaceG,\hbox{\input{modules/symbols/XdotSym.tex}}\!,\hbox{\input{modules/symbols/ZdotSym.tex}}\!)$ -- this is to say that $\hbox{\input{modules/symbols/timediagSym.tex}}\!$ acts as a group, the \textbf{pointwise multiplication} group, on the multiplicative characters, with the \textbf{trivial character} $\hbox{\input{modules/symbols/trivialcharSym.tex}}\!$ as unit. The antipode acts again as group inverse. 
\end{theorem}

\noindent It is worth clarifying that the pointwise multiplication of Equation \ref{eqn:PointwiseMultCharacters} is different from the pointwise multiplication of Theorem \ref{thm_PontryaginDualsSMC}: the former is a pointwise product of functions of characters, and would correspond to the co-monoid $(\;\hbox{\input{modules/symbols/timecomultSym.tex}}\!,\hbox{\input{modules/symbols/timecounitSym.tex}}\!)$ (because the group multiplication duplicates multiplicative characters), while the latter is a pointwise product of functions of group elements, and thus corresponds to the co-monoid $(\;\hbox{\input{modules/symbols/timediagSym.tex}}\!,\hbox{\input{modules/symbols/trivialcharSym.tex}}\!)$ (which duplicates group elements). Also, note that $(\mathbb{G}^\wedge)^\wedge = \mathbb{G}$, like in the traditional formulation of Pontryagin duality.

The usual formulation of the Fourier Transform, and of its properties, involves several summations, but a careful analysis shows that they all boil down to appropriate resolutions of the identity, like $\frac{1}{N} \sum_{g\in G} \ket{g}\bra{g} = \id{\Ltwo{G}}$, and to various formulations of orthogonality of characters. The following lemma shows that, from a categorical perspective, the two are equivalent.

\begin{lemma}\label{lemma_BasisResolutionPartition}
Let $\hbox{\input{modules/symbols/ZdotSym.tex}}\!$ be a $\dagger$-qSFA on an object $\SpaceG$ in a $\dagger$-SMC which is distributively $\ComMonCategory$-enriched, and let $\ket{x}_{x\in X}$ be a finite family of classical states for the co-monoid $(\ZcomultSym,\ZcounitSym)$ such that
\begin{enumerate}
\item[(a.)] the family is \textbf{orthogonal}, i.e. $\braket{x'}{x} = 0$ (the zero scalar) for all $x \neq x'$
\item[(b.)] the family is \textbf{normalisable}, i.e. $\braket{x}{x}$ is an invertible scalar for all $x$.
\end{enumerate} 
Then the following are equivalent:
\begin{enumerate}
\item[(i)] The classical states $\ket{x}_{x\in X}$ form a (orthogonal) basis, as per Definition \ref{def:basis}.
\item[(ii)] The classical states $\ket{x}_{x\in X}$ form an  \textbf{(orthogonal) resolution of the identity}:
	\begin{equation}\label{eqn:ResolutionId}
		\hbox{\input{modules/pictures/MultCharResolutionId.tex}}
	\end{equation}
\item[(iii)] The adjoints of the classical states form an \textbf{(orthogonal) partition of the counit}, i.e. they satisfy the following equation:
\begin{equation}\label{eqn:PartitionCounit}
\frac{1}{N}\sum_x \!\!\!\!\text{
		\renewcommand{\tempSymLabel}{x}
		\hbox{\input{modules/symbols/effectSym.tex}}
	}\!\! = \ZcounitSym
\end{equation}
\end{enumerate}
\end{lemma}

\begin{proof}

Since we have assumed that $\braket{x}{x}$ is invertible, $\braket{x}{x} = N$ as observed in Section \ref{section_InternalGroups}.
\begin{itemize}

\item $(i) \implies (ii)$ Suppose that the classical states form a basis, i.e. further to orthogonality of the family suppose that $\forall \chi , f \circ \ket{x} = g \circ \ket{x}$ implies $f=g$ (completeness). Then we get, for all $x'$:
\begin{equation*}
\left(\frac{1}{N} \sum_x \, \ket{x}\bra{x}\right) \circ \ket{x'} = \frac{1}{N} \ket{x'} \braket{x'}{x'} = \ket{x'} = \id{\SpaceG} \circ \ket{x'}
\end{equation*}
where we have used orthogonality. We conclude $(ii)$ by completeness.

\item $(ii) \implies (iii)$ By using the fact that $\ZcounitSym \circ \ket{x} = 1$ for all $x:X$ we immediately get $(iii)$:
\begin{equation*}
\ZcounitSym \circ \left(\frac{1}{N} \sum_x \ket{x} \bra{x} \right) = \frac{1}{N} \sum_x \left( \ZcounitSym \circ \ket{x} \right) \bra{x} = \frac{1}{N} \sum_x \bra{x}
\end{equation*}
 
\item $(ii) \implies (i)$ All we have to prove is completeness, as orthogonality of the family $\ket{x}_{x\in X}$ was assumed as a hypothesis of the lemma. Assume $f \circ \ket{x} = g \circ \ket{x}$ for all $x$, then we get:
\begin{equation*}
\frac{1}{N} \sum_x f \circ \ket{x} \bra{x} = \frac{1}{N} \sum_x g \circ \ket{x} \bra{x}
\end{equation*}
But the LHS is $f \circ \id{\SpaceG}$, i.e. $f$, and the RHS is $g \circ \id{\SpaceG}$, i.e. $g$.

\item $(iii) \implies (ii)$ Assume that $\ZcounitSym = \frac{1}{N} \sum_x \bra{x}$, then we get (using Frobenius law in the first equality):
\begin{equation*}
\id{\SpaceG} = (\ZcounitSym \tensor \id{\SpaceG})\circ (\ZmultSym \tensor \id{\SpaceG}) \circ (\id{\SpaceG} \tensor \ZcomultSym) \circ (\id{\SpaceG} \tensor \ZunitSym) = \frac{1}{N} \sum_x \frac{1}{N} \sum_{x'} \ket{x'} \braket{x'}{x} \bra{x} = \frac{1}{N} \sum_x \ket{x} \bra{x}
\end{equation*}
 
\end{itemize}

\noindent As a final remark, please note that orthogonality, assumed separately, is already included in the definition of basis used in point $(i)$; it is, however, necessary to assume it explicitly in points $(ii)$ and $(iii)$. As for point $(iii)$, a counterexample can be found in $\fRelCategory$, by replacing an orthogonal family with the one obtained by repeating some element $\bra{x}$, and using the fact that $\bra{x} + \bra{x} = \bra{x}$ (since the monoidal operation $\sum$ in $\fRelCategory$ is just the union $\cup$). As for point $(ii)$, one can consider the category of finite-dimensional vector spaces over the field with 2 elements (where $1+1=0$): if $\ket{x}$ is a norm-1 vector in a 1-dimensional space $\SpaceG$, the family $(\ket{x},\ket{x},\ket{x})$ is non-orthogonal, and yet a resolution of the identity as $\ket{x}\bra{x}+\ket{x}\bra{x}+\ket{x}\bra{x} = \ket{x}\bra{x} = \id{\SpaceG}$ (this cannot happen in $\fdHilbCategory$).
\end{proof}

\noindent The formulation in terms of orthogonal partition of the counit is related to the orthogonality of (multiplicative) characters traditionally mentioned in the context of Fourier Transform (e.g. used here in Equation \ref{eqn:PointwiseMultCharacters}), as the following lemma shows. 

\begin{theorem}[Orthogonality of Multiplicative Characters] \label{lemma_OrthogonalityCharacters}
Let $(\,\SpaceG,\hbox{\input{modules/symbols/XdotSym.tex}}\!,\hbox{\input{modules/symbols/ZdotSym.tex}}\!)$ be an internal group in a $\dagger$-SMC $\CategoryC$, and $N$ be the normalisation factor for $\hbox{\input{modules/symbols/XdotSym.tex}}\!$. Assume that the characters are all orthogonal, in the sense that $\braket{\chi}{\chi'} = 0$ for $\chi \neq \chi'$, and that $\braket{1}{1} = N$, where $\bra{1} := \ZcounitSym$ is the trivial character. Then if $\ket{\chi},\ket{\chi'}$ are (not necessarily distinct) multiplicative characters of the internal group, the following notion \textbf{orthogonality of multiplicative characters} holds:
\begin{equation}\label{eqn:orthogonalityMultChars}
	\hbox{\input{modules/pictures/orthogonalityMultChars.tex}}
\end{equation}
Now assume that $\CategoryC$ is distributively $\ComMonCategory$-enriched. If the family $\bra{g}_{g\in G}$ of (adjoints of) group elements is normalisable and forms an orthogonal partition of the counit, then Equation \ref{eqn:orthogonalityMultChars} can be re-written in the following, more familiar form (where we have set $\ket{\chi^{-1}} := \ket{\chi} \circ \hbox{\input{modules/symbols/antipodeSym.tex}}\!$):
\begin{equation}\label{eqn:orthogonalityMultCharsFamiliar}
\frac{1}{N}\sum_{g\in G} \braket{\chi^{-1}}{g}\braket{\chi'}{g} = \delta_{\chi\chi'}
\end{equation} 
\end{theorem}
\begin{proof}
By Theorem \ref{thm_PontryaginDualsSMC}, the comultiplication $\ZcomultSym$ acts as a group on the multiplicative characters, and $\hbox{\input{modules/symbols/antipodeSym.tex}}\!$ as the group inverse. The LHS of Equation \ref{eqn:orthogonalityMultChars} can be re-written as $\braket{\chi^{-1} \cdot \tau}{1}$, and $\cdot$ is the pointwise multiplication: since we assumed that the multiplicative characters are orthogonal, the result follows. In order to obtain Equation \ref{eqn:orthogonalityMultCharsFamiliar} from Equation \ref{eqn:orthogonalityMultChars}, all we have to do is observe that the group elements $\ket{g}_{g:G}$ form an orthogonal partition of the unit $\ZunitSym$ (by taking adjoints), and that they are classical states of $\hbox{\input{modules/symbols/ZdotSym.tex}}\!$.
\end{proof}

\noindent Note that, by Definition \ref{def:Antipode} and Frobenius law for $\hbox{\input{modules/symbols/ZdotSym.tex}}\!$, Equation \ref{eqn:orthogonalityMultChars} can equivalently be written as the following, stating that the multiplicative characters are a matchable family (Definition~\ref{def:matchables}) for $(\XcomultSym,\XcounitSym)$: 
\begin{equation}\label{eqn:orthogonalityMultCharsRed}
	\hbox{\input{modules/pictures/orthogonalityMultCharsRed.tex}}
\end{equation}
Equations \ref{eqn:orthogonalityMultChars} and \ref{eqn:orthogonalityMultCharsRed} provide a summation-free version of the orthogonality of multiplicative characters of Equation \ref{eqn:orthogonalityMultCharsFamiliar} (under appropriate conditions). This leads us to the following summation-free definition of the Fourier Transform, valid for any internal group in any $\dagger$-SMC.

\begin{definition}\label{def_FourierTransform}
	Let $\mathbb{G} = (\,\SpaceG,\hbox{\input{modules/symbols/XdotSym.tex}}\!,\hbox{\input{modules/symbols/ZdotSym.tex}}\!)$ be an internal group in any $\dagger$-SMC. The \textbf{Fourier Transform} is defined to be the following mapping $\FourierTransformSym{\mathbb{G}}$ of states of $\SpaceG$ to co-states of $\SpaceG$:
\begin{equation}\label{eqn:FT}
	\hbox{\input{modules/pictures/FTv3.tex}}
\end{equation} 
The \textbf{inverse Fourier transform} is defined to be the following mapping $\InverseFourierTransformSym{\mathbb{G}}$ of co-states of $\SpaceG$ to states of $\SpaceG$: 
\begin{equation}\label{eqn:InverseFT}
    \hbox{\input{modules/pictures/InverseFTv3.tex}}
 \end{equation} 
\end{definition}

\noindent Under appropriate circumstances, the Fourier Transform of Definition \ref{def_FourierTransform} takes the more familiar form of Equation \ref{eqn:DefTraditionalFT}, as shown by the following lemma and its subsequent application to $\fdHilbCategory$.

\begin{lemma}\label{lemma_FTTraditionalSMC}
Let $\mathbb{G} = (\,\SpaceG,\hbox{\input{modules/symbols/XdotSym.tex}}\!,\hbox{\input{modules/symbols/ZdotSym.tex}}\!)$ be an internal group in a $\dagger$-SMC which is distributively $\ComMonCategory$-enriched. Further assume that the multiplicative characters and the group elements of $\mathbb{G}$ are both finite, normalisable families, which form an orthogonal partition of the counits $\XcounitSym$ and $\ZcounitSym$ respectively. Then the Fourier Transform of Definition \ref{eqn:FT} can be written in the following way:
\begin{equation}\label{eqn:FTSummation}
	\hbox{\input{modules/pictures/FTv2.tex}}
\end{equation} 
Furthermore, the Inverse Fourier Transform of Definition \ref{eqn:InverseFT} can be written in the following way:
\begin{equation}\label{eqn:InverseFTSummation}
    \hbox{\input{modules/pictures/InverseFTv2.tex}}
 \end{equation} 
\end{lemma}
\begin{proof}
For the Fourier Transform, use the fact that the multiplicative characters form an orthogonal partition of the counit $\XcounitSym$, as per Equation \ref{eqn:PartitionCounit}, and that they are classical states of $\XcomultSym$, as per Equation \ref{eqn:MultCharDef}. Similar reasoning is used for the Inverse Fourier Transform.
\end{proof}

\noindent In $\fdHilbCategory$, the conditions of Lemma \ref{lemma_FTTraditionalSMC} hold for abelian inner groups (but not for non-abelian ones, as the multiplicative characters fail to form a basis). The rightmost expression in equation \ref{eqn:FTSummation} can be written as follows, where we have $\ket{\chi^{-1}} = \ket{\chi} \cdot \hbox{\input{modules/symbols/antipodeSym.tex}}\!$ (as in Lemma \ref{lemma_OrthogonalityCharacters}):
\begin{equation*}
\frac{1}{N} \sum_\chi \bra{\chi} \braket{\chi^{-1}}{f}
\end{equation*}
The vector $\ket{f}$ can be expanded by using a resolution of the identity in terms of the group elements, courtesy of Lemma \ref{lemma_BasisResolutionPartition}:
\begin{equation*}
\sum_\chi \bra{\chi} \frac{1}{N} \sum_g \braket{\chi^{-1}}{g} \braket{g}{f}
\end{equation*}
Now we use the isomorphism $\SpaceG \isom \Ltwo{G}$ induced by the point structure, under which $f : \Ltwo{G}$ gets mapped to $\ket{f} := \sum_g \ket{g} f(g)$, to obtain:
\begin{equation*}
\sum_\chi \bra{\chi} \frac{1}{N} \sum_g \braket{\chi^{-1}}{g} f(g)
\end{equation*}
Furthermore, the multiplicative characters of $\mathbb{G}$ are, in $\fdHilbCategory$ and under the isomorphism $\SpaceG \isom \Ltwo{G}$ above, the linear extensions of the multiplicative characters of the $G$, and we can re-write the above as:
\begin{equation*}
 \sum_\chi \bra{\chi} \frac{1}{N} \sum_g \chi^{-1}(g) f(g)
\end{equation*}
Finally, we use the isomorphism $\SpaceG^\star \isom \Ltwo{G^\wedge}$ induced by the group structure,\footnote{Where we have used the fact that $\fdHilbCategory$ can be $\fdHilbCategory$-enriched, and thus that $\Hom{\fdHilbCategory}{\SpaceG}{\complexs}$ can be canonically endowed with the finite-dimensional Hilbert space structure of the space $\SpaceG^\star$ of linear functionals $\SpaceG \rightarrow \complexs$.} under which $\tilde{f}:\Ltwo{G^\wedge}$ gets mapped to $\bra{\tilde{f}} := \sum_\chi \bra{\chi} \tilde{f}(\chi)$, to finally obtain:
\begin{equation*}
\tilde{f}(\chi) = \frac{1}{N} \sum_g \chi^{-1}(g) f(g)
\end{equation*}

\noindent This is exactly the same as Equation \ref{eqn:DefTraditionalFT}, and a similar reasoning shows that in $\fdHilbCategory$ Equation \ref{eqn:InverseFTSummation} coincides with Equation \ref{eqn:DefTraditionalInverseFT}. Therefore Definition \ref{def_FourierTransform} matches the traditional definition in the case of abelian internal groups of $\fdHilbCategory$, but it remains to be seen under which circumstances and in which form its usual properties extend to internal groups in arbitrary $\dagger$-SMCs. Here we will focus on three particularly important results: the Fourier Inversion Theorem, the Convolution Theorem and Pontryagin Duality. In order to clarify their categorical formulation, we summarize the role played by each structure:
\begin{enumerate}
\item[(i)] When states $\complexs \rightarrow \SpaceG$ are identified as functions in $\Ltwo{G}$ via the basis of group elements, the monoid $(\XmultSym,\XunitSym)$ acts as the \emph{convolution} operation on $\Ltwo{G}$:
\begin{equation}
\XmultSym \circ \left( \sum_g  f(g) \,\ket{g} \tensor \sum_{g} f'(g)\, \ket{g} \right) = \sum_g \left( \sum_h f(h)f'(g-h) \right)\, \ket{g} = \sum_g (f\star_G f')(g)\, \ket{g}
\end{equation}

\item[(ii)] When states $\complexs \rightarrow \SpaceG$ are identified with functions in $\Ltwo{G}$ via the basis of group elements, the monoid $(\ZmultSym,\ZunitSym)$ acts as the \emph{pointwise multiplication} operation on $\Ltwo{G}$:
\begin{equation}
\ZmultSym \circ \left( \sum_g  f(g) \,\ket{g} \tensor \sum_{g} f'(g)\, \ket{g} \right) = \sum_g f(g)f'(g)\, \ket{g} 
\end{equation}

\item[(iii)] When co-states $\SpaceG \rightarrow \complexs$ are identified with functions in $\Ltwo{G^\wedge}$ via the co-basis of multiplicative characters, the monoid\footnote{It is a co-monoid in $\fdHilbCategory$, but when acting on co-states it is a monoid.} $(\ZcomultSym,\ZcounitSym)$ acts as the \emph{convolution} operation on $\Ltwo{G^\wedge}$:
\begin{equation}
\ZcomultSym \circ \left( \sum_\chi f(\chi)\, \bra{\chi} \tensor \sum_\chi f'(\chi)\, \bra{\chi} \right) = \sum_\chi \left( \sum_\sigma f(\sigma)f'(\chi-\sigma) \right)\, \bra{\chi} = \sum_\chi (f\star_{G^\wedge} f')(\chi)\, \bra{\chi}
\end{equation}

\item[(iv)] When co-states $\SpaceG \rightarrow \complexs$ are identified with functions in $\Ltwo{G^\wedge}$ via the co-basis of multiplicative characters, the monoid. $(\XcomultSym,\XcounitSym)$ acts as the \emph{pointwise multiplication} operation on $\Ltwo{G^\wedge}$:
\begin{equation}
\XcomultSym \circ \left( \sum_\chi f(\chi)\, \bra{\chi} \tensor \sum_\chi f'(\chi)\, \bra{\chi} \right) = \sum_\chi f(\chi)f'(\chi)\, \bra{\chi}
\end{equation}

\end{enumerate}

\begin{theorem}[Categorical Fourier Inversion Theorem]\label{thm_CategoricalFourierInversion}
Let $\mathbb{G} = (\,\SpaceG,\hbox{\input{modules/symbols/XdotSym.tex}}\!,\hbox{\input{modules/symbols/ZdotSym.tex}}\!)$ be an internal group in any $\dagger$-SMC $\CategoryC$. Then the Fourier Transform and the Inverse Fourier Transform from Definition \ref{def_FourierTransform} are mutually inverse bijections between states and co-states of $\SpaceG$.
\end{theorem}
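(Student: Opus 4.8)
My plan is to establish the two round-trip identities $\InverseFourierTransformSym{\mathbb{G}}(\FourierTransformSym{\mathbb{G}}(f)) = f$ for every state $f : I \rightarrow \SpaceG$ and $\FourierTransformSym{\mathbb{G}}(\InverseFourierTransformSym{\mathbb{G}}(g)) = g$ for every co-state $g : \SpaceG \rightarrow I$, working entirely with string diagrams; together these exhibit $\FourierTransformSym{\mathbb{G}}$ and $\InverseFourierTransformSym{\mathbb{G}}$ as mutually inverse bijections of hom-sets. In fact I would only prove the first identity in full. By Theorem~\ref{thm_PontryaginDualsSMC} the Pontryagin dual $\mathbb{G}^\wedge = (\,\SpaceG,\ZdotSym,\XdotSym)$ is again an internal group, now in $\OpCategory{\CategoryC}$, with $(\mathbb{G}^\wedge)^\wedge = \mathbb{G}$, and the Fourier and inverse Fourier transforms exchange roles under this duality; so the second identity is the first identity for $\mathbb{G}^\wedge$ read in $\OpCategory{\CategoryC}$ (alternatively, it follows from a directly analogous computation).

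First I would unfold the definitions. By Definition~\ref{def_FourierTransform}, the transform of a state $f$ is the co-state $\XcounitSym \circ \XmultSym \circ (\id{\SpaceG} \tensor (\antipodeSym \circ f))$, i.e. $\antipodeSym \circ f$ paired against the input leg through the point-structure cap $\XcounitSym \circ \XmultSym$; dually, the inverse transform of a co-state $g$ is $(\id{\SpaceG} \tensor g) \circ \ZcomultSym \circ \ZunitSym$, i.e. $g$ plugged into one leg of the group-structure cup $\ZcomultSym \circ \ZunitSym$. Composing the two, $\InverseFourierTransformSym{\mathbb{G}}(\FourierTransformSym{\mathbb{G}}(f))$ is the state obtained by contracting one leg of $\ZcomultSym \circ \ZunitSym$, through the point cap, against $\antipodeSym \circ f$, leaving the other leg as output.

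The heart of the argument is to show this contraction collapses to $f$. The key step I would isolate is the identity $\ZcomultSym \circ \ZunitSym = (\id{\SpaceG} \tensor \antipodeSym) \circ \XcomultSym \circ \XunitSym$, which expresses the group cup as the antipode-twist of the point-structure cup $\XcomultSym \circ \XunitSym$; this is exactly where strong complementarity enters, and I would derive it from the bialgebra equation~\eqref{eqn:bialgebraEqns}, the coherence conditions of Definition~\ref{def:StrongComplementarity}, and the definition of the antipode in Definition~\ref{def:Antipode}. Substituting it, the round-trip becomes $(\id{\SpaceG} \tensor (\XcounitSym \circ \XmultSym)) \circ (\id{\SpaceG} \tensor \antipodeSym \tensor \antipodeSym) \circ ((\XcomultSym \circ \XunitSym) \tensor f)$; using that the antipode preserves the point pairing, $\XcounitSym \circ \XmultSym \circ (\antipodeSym \tensor \antipodeSym) = \XcounitSym \circ \XmultSym$, the two antipodes on the capped legs cancel, and the Frobenius law for $\XdotSym$ (the snake/yanking equation) then yanks the remaining cup--cap composite straight to $f$.

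The main obstacle is the bookkeeping of the two antipodes: one appears explicitly in the Fourier transform, and a second is hidden inside $\ZcomultSym \circ \ZunitSym$. A naive simplification would compound them into $\antipodeSym^2$, which is \emph{not} the identity in general; by Lemma~\ref{lemma_AntipodeInverse} the antipode is only guaranteed self-inverse under a basis hypothesis, which I cannot assume here. The snake route above avoids this entirely by cancelling the two antipodes against \emph{each other} through the point cap, so that no involutivity of $\antipodeSym$ is invoked; making the cancellation $\XcounitSym \circ \XmultSym \circ (\antipodeSym \tensor \antipodeSym) = \XcounitSym \circ \XmultSym$ rigorous (via $\antipodeSym$ being a homomorphism for $\XdotSym$ and respecting $\XcounitSym$) is the delicate part, again resting on strong complementarity. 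All other manipulations are routine applications of the Frobenius laws, and no normalisation scalar intervenes because Definition~\ref{def_FourierTransform} is stated without the $\frac{1}{N}$ factors of the summation form.
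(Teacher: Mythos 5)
There is a genuine gap, and it sits exactly where you place the heart of your argument. Both of your pivotal identities are left unproven, and neither follows from the ingredients you cite (bialgebra law, coherence, Definition~\ref{def:Antipode}); worse, each is essentially equivalent to the involutivity $\antipodeSym = \antipodeSym^{-1}$ that you explicitly claim to be avoiding. Take your key identity $\ZcomultSym \circ \ZunitSym = (\id{\SpaceG} \tensor \antipodeSym) \circ \XcomultSym \circ \XunitSym$ and pair the first leg of both sides against an arbitrary state through the cap $\XcounitSym \circ \XmultSym$: the snake (Frobenius) equations for the $\XdotSym$ structure reduce the right-hand side to $\antipodeSym$, and the left-hand side to $\bigl((\XcounitSym \circ \XmultSym) \tensor \id{\SpaceG}\bigr) \circ \bigl(\id{\SpaceG} \tensor (\ZcomultSym \circ \ZunitSym)\bigr)$, which is a mirror image of the antipode inverse of Lemma~\ref{lemma_AntipodeInverse}. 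So your identity asserts $\antipodeSym = \antipodeSym^{-1}$ (exactly so whenever the form $\XcounitSym \circ \XmultSym$ is symmetric, and a swap-twisted variant of it otherwise) --- precisely the statement that the paper only guarantees under the basis hypothesis in the second half of Lemma~\ref{lemma_AntipodeInverse}, which the theorem does not grant you. Your cancellation step $\XcounitSym \circ \XmultSym \circ (\antipodeSym \tensor \antipodeSym) = \XcounitSym \circ \XmultSym$ has the same defect: granting the antipode's anti-homomorphism property (itself a nontrivial Hopf-theoretic fact you would still need to prove diagrammatically) and $\XcounitSym \circ \antipodeSym = \XcounitSym$, its left-hand side equals $\XcounitSym \circ \XmultSym \circ s_{\SpaceG\SpaceG}$, so the claimed equation asserts symmetry of the Frobenius form of $\XdotSym$. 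That is not an axiom, and it is a live issue because the theorem covers non-abelian internal groups, where $\XmultSym$ is non-commutative.

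None of this machinery is needed, because the round trip already factors through the antipode inverse on the nose. Unfolding Definition~\ref{def_FourierTransform} exactly as you do,
\begin{equation*}
\InverseFourierTransformSym{\mathbb{G}}\bigl(\FourierTransformSym{\mathbb{G}}(f)\bigr) \;=\; \Bigl[\bigl(\id{\SpaceG} \tensor (\XcounitSym \circ \XmultSym)\bigr) \circ \bigl((\ZcomultSym \circ \ZunitSym) \tensor \id{\SpaceG}\bigr)\Bigr] \circ \antipodeSym \circ f \;=\; \antipodeSym^{-1} \circ \antipodeSym \circ f \;=\; f ,
\end{equation*}
where the bracketed map is, by definition, the antipode inverse of Lemma~\ref{lemma_AntipodeInverse}; the first part of that lemma ($\antipodeSym$ and $\antipodeSym^{-1}$ are mutually inverse) uses only one Frobenius snake per colour and no strong complementarity whatsoever. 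This is the paper's proof. For the other composite, the paper likewise avoids your appeal to Pontryagin duality: expand $\antipodeSym$ by Definition~\ref{def:Antipode} inside $\FourierTransformSym{\mathbb{G}}(\InverseFourierTransformSym{\mathbb{G}}(\tilde{f}))$ and straighten with one snake per colour. In short, the two antipodes you were trying to cancel against each other through the bialgebra structure in fact compose directly as $\antipodeSym^{-1} \circ \antipodeSym$, which the Frobenius laws alone eliminate; your detour replaces a two-line argument with intermediate claims that are strictly stronger than the theorem itself.
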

\begin{proof}
The following shows that $\InverseFourierTransformSym{\mathbb{G}} \circ \FourierTransformSym{\mathbb{G}} = \id{\mathbb{G}}$, by using Definition \ref{def:Antipode} for the antipode and Lemma \ref{lemma_AntipodeInverse}:
\begin{equation}\label{eqn:FTInversionThm}
    \hbox{\input{modules/pictures/FTInversionThm.tex}}
\end{equation} 
A similar proof holds for $\FourierTransformSym{\mathbb{G}} \cdot \InverseFourierTransformSym{\mathbb{G}} = \id{\mathbb{G}}$, by expanding the antipode in terms of its definition and using Frobenius law (once per colour, exactly like in the proof of Lemma  \ref{lemma_AntipodeInverse}).
\end{proof}
  
\begin{theorem}[Categorical Convolution Theorem] \label{thm_categoricalConvolutionTheorem}
Let $\mathbb{G} = (\,\SpaceG,\hbox{\input{modules/symbols/XdotSym.tex}}\!,\hbox{\input{modules/symbols/ZdotSym.tex}}\!)$ be an internal group in any $\dagger$-SMC $\CategoryC$. Then the Fourier Transform is a monoid homomorphism:
\begin{equation}
\FourierTransformSym{\mathbb{G}} : (\,\SpaceG,\XmultSym,\XunitSym) \longrightarrow  (\,\SpaceG,\XcomultSym,\XcounitSym)
\end{equation}
\end{theorem}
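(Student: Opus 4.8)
The plan is to verify the two monoid-homomorphism conditions separately: preservation of multiplication, $\FourierTransformSym{\mathbb{G}}(f * f') = \FourierTransformSym{\mathbb{G}}(f)\cdot\FourierTransformSym{\mathbb{G}}(f')$ where $*$ denotes convolution and $\cdot$ the co-state product, and preservation of units, $\FourierTransformSym{\mathbb{G}}(\XunitSym) = \XcounitSym$. I work diagrammatically with the single $\dagger$-qSFA $\XdotSym$, whose multiplication $\XmultSym$ is the convolution on states and whose comultiplication $\XcomultSym$ induces the pointwise-multiplication monoid on co-states (items (i) and (iv) of the preceding discussion); the antipode $\antipodeSym$ of Definition~\ref{def:Antipode} is the only ingredient reaching outside this Frobenius algebra. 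Note that the induced product of two co-states $a,b\colon\SpaceG\to\tensorUnit$ is $(a\tensor b)\circ\XcomultSym$, so the target product to be analysed is $\bigl(\FourierTransformSym{\mathbb{G}}(f)\tensor\FourierTransformSym{\mathbb{G}}(f')\bigr)\circ\XcomultSym$.

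For multiplicativity I would expand each factor via the defining equation~\eqref{eqn:FT}, writing $\FourierTransformSym{\mathbb{G}}(f) = \XcounitSym\circ\XmultSym\circ(\id{\SpaceG}\tensor(\antipodeSym\circ f))$. The core is a twofold use of the Frobenius law for $\XdotSym$: first to absorb the $\XmultSym$ of one factor into the comultiplication $\XcomultSym$ (converting ``comultiply, then multiply one branch with $\antipodeSym\circ f'$'' into ``multiply with $\antipodeSym\circ f'$, then comultiply''), and then, after using the counit law $(\id{\SpaceG}\tensor\XcounitSym)\circ\XcomultSym = \id{\SpaceG}$ to collapse the resulting second branch, to recombine the surviving multiplications. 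Associativity of $\XmultSym$ then packages the two inserted states into one, leaving $\XcounitSym\circ\XmultSym\circ\bigl(\id{\SpaceG}\tensor((\antipodeSym\circ f') * (\antipodeSym\circ f))\bigr)$.

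At this point the statement reduces to the single identity $\antipodeSym\circ(f * f') = (\antipodeSym\circ f') * (\antipodeSym\circ f)$, i.e. that the antipode is an anti-homomorphism of the convolution monoid (the categorical shadow of $(gh)^{-1}=h^{-1}g^{-1}$), together with $\antipodeSym\circ\XunitSym = \XunitSym$. Granting these, the target product equals $\XcounitSym\circ\XmultSym\circ(\id{\SpaceG}\tensor(\antipodeSym\circ(f*f'))) = \FourierTransformSym{\mathbb{G}}(f*f')$: the order reversal contributed by $\antipodeSym$ exactly cancels the reversal produced by transposing states to co-states, so the net map is a genuine homomorphism rather than an anti-homomorphism. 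Unit preservation is then immediate, $\FourierTransformSym{\mathbb{G}}(\XunitSym) = \XcounitSym\circ\XmultSym\circ(\id{\SpaceG}\tensor(\antipodeSym\circ\XunitSym)) = \XcounitSym\circ\XmultSym\circ(\id{\SpaceG}\tensor\XunitSym) = \XcounitSym$, using $\antipodeSym\circ\XunitSym = \XunitSym$ and the unit law.

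The main obstacle is precisely the antipode anti-homomorphism identity: the Frobenius manipulations are routine spider-style rewrites, but showing that $\antipodeSym$ reverses convolution requires the full bialgebra/Hopf structure of the strongly complementary pair and careful bookkeeping of the order of the two factors (invisible in the abelian case, essential in general). I expect to establish it either by citing the Hopf-algebra antipode axioms satisfied by the pair, or by unfolding Definition~\ref{def:Antipode} and pushing the two group-element copies through the bialgebra equation~\eqref{eqn:bialgebraEqns}, in the same style as the proof of Lemma~\ref{lemma_AntipodeInverse}. Conceptually the entire argument is the observation that $\FourierTransformSym{\mathbb{G}}$ factors as the Frobenius transpose $g\mapsto\XcounitSym\circ\XmultSym\circ(\id{\SpaceG}\tensor g)$ precomposed with $\antipodeSym$, and that each of these is a monoid anti-homomorphism from $(\SpaceG,\XmultSym,\XunitSym)$ to $(\SpaceG,\XcomultSym,\XcounitSym)$, so their composite is a homomorphism.
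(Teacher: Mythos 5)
Your proof is correct, and its computational core --- associativity of $\XmultSym$ followed by an application of the Frobenius law to slide the pairing $\XcounitSym \circ \XmultSym$ across a comultiplication --- is exactly the manipulation the paper performs in Equation \ref{eqn:ConvolutionThmProof}. The genuine difference is your treatment of the antipode. The paper's diagram is drawn with no antipodes at all: read literally, it proves that the Frobenius transpose $g \mapsto \XcounitSym \circ \XmultSym \circ (\id{\SpaceG} \tensor g)$ sends a convolution $f * f'$ to the product of the two transposes taken in \emph{reversed} order (this is what the wire crossing on its right-hand side encodes), and it is silent about how $\antipodeSym$ moves past $\XmultSym$. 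For commutative $\XmultSym$ (abelian internal groups) that order reversal is invisible and the paper's diagram suffices; for a general internal group it does not, and your factorisation of $\mathcal{F}_{\mathbb{G}}$ as the Frobenius transpose precomposed with $\antipodeSym$ --- a composite of two anti-homomorphisms, hence a homomorphism --- supplies precisely the missing bookkeeping. The key lemma you defer, $\antipodeSym \circ \XmultSym = \XmultSym \circ s_{\SpaceG\SpaceG} \circ (\antipodeSym \tensor \antipodeSym)$ (where $s_{\SpaceG\SpaceG}$ is the braiding) together with $\antipodeSym \circ \XunitSym = \XunitSym$, is indeed available at this level of generality: strong complementarity provides the bialgebra law (Equation \ref{eqn:bialgebraEqns}) and the Hopf law (Definition \ref{def:complementarity}, via \cite[Thm 3.6]{coecke2012strong}), and the classical Sweedler argument --- exhibiting $\antipodeSym \circ \XmultSym$ and $\XmultSym \circ s_{\SpaceG\SpaceG} \circ (\antipodeSym \tensor \antipodeSym)$ as left and right inverses of $\XmultSym$ in the convolution algebra on $\Hom{\CategoryC}{\SpaceG \tensor \SpaceG}{\SpaceG}$ induced by $\ZcomultSym$ and $\XmultSym$ --- is purely diagrammatic and valid in any $\dagger$-SMC. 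One caution: of your two proposed routes to that lemma, only this Hopf-theoretic one works in the stated generality; imitating the proof of Lemma \ref{lemma_AntipodeInverse} requires a matchable family forming a basis, which an arbitrary internal group need not possess. In short, your argument is a strictly more careful version of the paper's: the paper buys brevity by suppressing the antipode, while you buy a proof that visibly covers the non-abelian case at the cost of importing one standard Hopf-algebra theorem.
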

\begin{proof}
That $\mathcal{F}_{\mathbb{G}}$ preserves the unit $\hbox{\input{modules/symbols/XdotSym.tex}}\!$ is obvious by unitality of $\XmultSym$. Using associativity, followed by one application of Frobenius Law, we get the desired:
\begin{equation}\label{eqn:ConvolutionThmProof}
    \hbox{\input{modules/pictures/ConvolutionThmProof.tex}}
\end{equation} 
\end{proof}

\noindent In $\fdHilbCategory$, Theorem \ref{thm_categoricalConvolutionTheorem} can be seen to reduce to Equation \ref{eqn:ConvolutionTheorem}. The monoid $(\,\SpaceG,\XmultSym,\XunitSym)$ acts on states $\ket{f} = \sum_g \ket{g} f(g)$ as the following convolution operation:
\begin{equation}
\XmultSym \circ (\ket{f} \tensor \ket{f'}) = \sum_g \sum_{g'} f(g)f'(g') \XmultSym \circ (\ket{g} \tensor \ket{g'}) = \sum_h \left(\sum_{g'} f(h-g')f'(g')\right) \ket{h}
\end{equation}

\begin{theorem}[Categorical Pontryagin Duality]\label{thm_CategoricalPontryaginDuality}
Let $\mathbb{G} = (\,\SpaceG,\hbox{\input{modules/symbols/XdotSym.tex}}\!,\hbox{\input{modules/symbols/ZdotSym.tex}}\!)$ be an internal group in a $\dagger$-SMC $\CategoryC$. Then the Fourier Transform $\FourierTransformSym{\mathbb{G}}$ is a bijection between states of $\mathbb{G}$ and states of $\mathbb{G}^\wedge$, which is furthermore canonical in the sense that:
\begin{equation}
^{(\varphi^\wedge)}M \circ \FourierTransformSym{\mathbb{H}} \circ _{\varphi\!}M = \FourierTransformSym{\mathbb{G}}
\end{equation}
where $\varphi: \mathbb{G} \rightarrow \mathbb{H}$ is any unitary isomorphim of internal groups in $\CategoryC$, for  $\mathbb{H} = (\,\SpaceH,\hbox{\input{modules/symbols/XaltdotSym.tex}}\!,\hbox{\input{modules/symbols/ZaltdotSym.tex}}\!)$ any other internal group of $\CategoryC$. We have defined the following:
\begin{enumerate}
\item[(i)] $\varphi^\wedge := \varphi$ is an isomorphism of internal groups $\mathbb{H}^\wedge \rightarrow \mathbb{G}^\wedge$ in $\CategoryC^{op}$
\item[(ii)] $_\varphi M$ as the map sending state $\ket{f} : \tensorUnit \rightarrow \SpaceG$ to state $\varphi \circ \ket{f}: \tensorUnit \rightarrow \SpaceH$
\item[(iii)] $^{(\varphi^\wedge)} M$ as the map sending co-state $\bra{f} : \SpaceH \rightarrow \tensorUnit$ (a state in $\CategoryC^{op}$) to co-state $\bra{f}\circ \varphi : \SpaceG \rightarrow \tensorUnit$
\end{enumerate}
\end{theorem}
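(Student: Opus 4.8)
The plan is to dispose of the bijection claim at once via the Fourier Inversion Theorem, and then to establish the canonicity identity by transporting $\varphi$ through the definition of the Fourier Transform, the only substantial ingredient being that $\varphi$ commutes with the antipode.

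For the bijection, observe that a state of $\mathbb{G}^\wedge = (\,\SpaceG,\ZdotSym,\XdotSym)$ in $\OpCategory{\CategoryC}$ is, by definition, a morphism $\SpaceG \rightarrow \tensorUnit$ of $\CategoryC$, i.e.\ a co-state of $\SpaceG$. Theorem~\ref{thm_CategoricalFourierInversion} already exhibits $\FourierTransformSym{\mathbb{G}}$ and $\InverseFourierTransformSym{\mathbb{G}}$ as mutually inverse bijections between the states and the co-states of $\SpaceG$; reinterpreting co-states of $\SpaceG$ as states of $\mathbb{G}^\wedge$ turns this directly into the asserted bijection between states of $\mathbb{G}$ and states of $\mathbb{G}^\wedge$, with nothing further to verify.

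For the canonicity identity I would first collect the homomorphism relations enjoyed by a unitary isomorphism of internal groups. By definition $\varphi$ is a comonoid homomorphism for the point structures and a monoid homomorphism for the group structures; daggering these relations and invoking $\varphi^\dagger = \varphi^{-1}$ promotes them to the adjoint fragments, so that $\varphi$ intertwines all four structure maps of both $\dagger$-Frobenius algebras. In particular this yields the intertwining $\XaltmultSym \circ (\varphi \tensor \varphi) = \varphi \circ \XmultSym$, obtained by daggering the comonoid relation for $\XcomultSym$ and using unitarity, together with the on-the-nose relation $\XaltcounitSym \circ \varphi = \XcounitSym$. The genuinely new point is that $\varphi$ commutes with the antipode, $\antipodeSym_{\mathbb{H}} \circ \varphi = \varphi \circ \antipodeSym_{\mathbb{G}}$: every node of the diagram defining the antipode (Definition~\ref{def:Antipode}) is one of the structure maps just shown to be intertwined by $\varphi$, so $\varphi$ can be slid from the input leg to the output leg unchanged. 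I would isolate this as a short lemma.

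With these relations in hand the identity becomes a direct slide. Applying ${}_{\varphi}M$, then $\FourierTransformSym{\mathbb{H}}$, then ${}^{(\varphi^\wedge)}M$ to a state $\ket{f}$ — and using that $\varphi^\wedge = \varphi$ makes ${}^{(\varphi^\wedge)}M$ precomposition with $\varphi$ — sends $\ket{f}$ to $\XaltcounitSym \circ \XaltmultSym \circ (\varphi \tensor (\antipodeSym_{\mathbb{H}} \circ \varphi \circ \ket{f}))$. I would then rewrite the right leg via antipode-commutation as $\varphi \circ \antipodeSym_{\mathbb{G}} \circ \ket{f}$, factor $\varphi \tensor \varphi$ out of the two legs, collapse it through the multiplication by $\XaltmultSym \circ (\varphi \tensor \varphi) = \varphi \circ \XmultSym$, and finally absorb the surviving $\varphi$ into the counit by $\XaltcounitSym \circ \varphi = \XcounitSym$; what remains is exactly $\XcounitSym \circ \XmultSym \circ (\id{\SpaceG} \tensor (\antipodeSym_{\mathbb{G}} \circ \ket{f})) = \FourierTransformSym{\mathbb{G}}(\ket{f})$. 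The main obstacle is not this final slide but the bookkeeping behind the antipode-commutation lemma: one must verify that it is precisely \emph{unitarity} which upgrades the bare homomorphism data to intertwining relations for the multiplication $\XmultSym$ and the antipode, neither of which belongs to the defining data of an internal group homomorphism. Once that lemma is secured, the computation — and with it the consistency of the side conditions (i)--(iii) — is immediate.
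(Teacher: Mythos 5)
Your proof is correct and its skeleton matches the paper's: the bijection is delegated to the Fourier Inversion Theorem \ref{thm_CategoricalFourierInversion} (co-states of $\SpaceG$ being, by definition, states of $\mathbb{G}^\wedge$), and canonicity is proved by sliding $\varphi$ through the defining diagram of $\FourierTransformSym{\mathbb{H}}$ using homomorphism data upgraded by unitarity. The genuine difference is your antipode-commutation lemma, and it works in your favour: the paper's canonicity computation (Equation \ref{eqn:PontDualityCanonProof}) is drawn \emph{without} the antipode --- its left-hand diagram is $\XaltcounitSym \circ \XaltmultSym \circ \left(\varphi \tensor (\varphi \circ \ket{f})\right)$, even though Definition \ref{def_FourierTransform} places $\antipodeSym$ on the state leg of the Fourier transform --- so the relation $\antipodeSym_{\mathbb{H}} \circ \varphi = \varphi \circ \antipodeSym_{\mathbb{G}}$ that you isolate is precisely the step the paper leaves implicit, and your argument for it (every node of Definition \ref{def:Antipode} is one of the eight structure maps intertwined by $\varphi$ once unitarity extends the defining relations to the adjoint fragments) is sound. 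You also locate unitarity where it is actually needed: $\XaltmultSym \circ (\varphi \tensor \varphi) = \varphi \circ \XmultSym$ is the dagger of the defining comonoid relation and so genuinely requires $\varphi^\dagger = \varphi^{-1}$, whereas $\XaltcounitSym \circ \varphi = \XcounitSym$ holds on the nose as part of the comonoid-homomorphism data; the paper, conversely, credits the multiplication step to the bare homomorphism property and spends unitarity on the counit step. Both accountings can be made to close the argument, but yours is the more complete and more accurately bookkept of the two.
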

\begin{proof}
The bijection is proven by Theorem \ref{thm_CategoricalFourierInversion}, so all we have to show is canonicity:
\begin{equation}\label{eqn:PontDualityCanonProof}
    \hbox{\input{modules/pictures/PontDualityCanonProof.tex}}
\end{equation} 
The first equality follows from the fact that $\varphi$ is a morphism of internal groups. The second equality follows from the (easy to check) fact that, if $\varphi$ is a unitary isomorphism $\varphi: \mathbb{G} \rightarrow \mathbb{H}$, then $\varphi^\dagger$ is a unitary isomorphism $\varphi^\dagger: \mathbb{H} \rightarrow \mathbb{G}$, and hence we have:
\begin{equation}
\XaltcounitSym \circ \varphi = \left(\varphi^\dagger \circ \XaltunitSym \right)^\dagger = \left(  \XunitSym\right)^\dagger = \XcounitSym
\end{equation}
\end{proof}

\noindent In $\fdHilbCategory$ (with abelian internal groups), Equation \ref{eqn:PontDualityCanonProof} takes the form of Equation \ref{eqn:FTcanonicity}. To conclude, we provide a categorical definition of Fourier matrices, which helps to frame the difference between them and the Fourier Transform.

\begin{definition}
\label{def_CategoricalHadamard}
Let $\mathbb{G} = (\,\SpaceG,\hbox{\input{modules/symbols/XdotSym.tex}}\!,\hbox{\input{modules/symbols/ZdotSym.tex}}\!)$ be an internal group in a $\dagger$-SMC $\CategoryC$. A \textbf{Fourier matrix} is defined to be a co-monoid isomorphism $h:(\;\ZcomultSym,\ZcounitSym) \rightarrow (\;\XcomultSym,\XcounitSym)$ which is furthermore a monoid isomorphism $h:(\;\hbox{\input{modules/symbols/timemultSym.tex}}\!,\hbox{\input{modules/symbols/timeunitSym.tex}}\!) \rightarrow (\;\hbox{\input{modules/symbols/timematchSym.tex}}\!,\hbox{\input{modules/symbols/timematchunitSym.tex}}\!)$
\end{definition}

\noindent Definition \ref{def_CategoricalHadamard} may look cryptic at first, and we clarify it in the following remark.
\begin{remark}\label{rmrk_CategoricalHadamard}
A co-monoid isomorphims $h:(\;\ZcomultSym,\ZcounitSym) \rightarrow (\;\XcomultSym,\XcounitSym)$ is an isomorphism $h: \SpaceG \rightarrow \SpaceG$ which satisfies:
\begin{align}
    \XcomultSym \circ h &= \left( h \tensor h \right) \circ \ZcomultSym \\
    \XcounitSym \circ h &= \ZcounitSym
\end{align}
and in particular it maps $\hbox{\input{modules/symbols/ZdotSym.tex}}\!$-classical states to $\hbox{\input{modules/symbols/XdotSym.tex}}\!$-classical states (since it is an isomorphism, it is a bijection between the classical states of the two comonoids). The requirement that $h$ is furthermore a monoid isomorphism $h:(\;\hbox{\input{modules/symbols/timemultSym.tex}}\!,\hbox{\input{modules/symbols/timeunitSym.tex}}\!) \rightarrow (\;\hbox{\input{modules/symbols/timematchSym.tex}}\!,\hbox{\input{modules/symbols/timematchunitSym.tex}}\!)$ amounts to the requirement that:
\begin{align}
    h \circ \XmultSym &= \ZmultSym \circ \left( h \tensor h \right)\\
    h \circ \XunitSym &= \ZunitSym
\end{align}
and in particular, as a bijection of classical states, $h$ is a group isomorphism from the group given by $(\XmultSym,\XunitSym)$ acting on $\hbox{\input{modules/symbols/ZdotSym.tex}}\!$-classical states to the group  given by $(\ZmultSym,\ZunitSym)$ acting on $\hbox{\input{modules/symbols/XdotSym.tex}}\!$-classical states.\\
\end{remark}

\noindent In $\fdHilbCategory$ (with abelian internal groups), Definition \ref{def_CategoricalHadamard} yields the usual Fourier matrices. Indeed $h$ corresponds to an isomorphism $\Psi: G \rightarrow G^\wedge$ (by considering its action on classical states): the linear isomorphism $h$ is itself the Discrete Fourier Transform of Equation \ref{eqn:FTDefDFT} (where $\SpaceG$ is identified with $\Ltwo{G}$ and $\Ltwo{G^\wedge}$ using $\hbox{\input{modules/symbols/ZdotSym.tex}}\!$ and $\hbox{\input{modules/symbols/XdotSym.tex}}\!$ respectively), and the matrix of $h$ in the basis defined by $\hbox{\input{modules/symbols/ZdotSym.tex}}\!$ is the Fourier matrix of Equation \ref{eqn:HadamardMatrixDef}.

\section{The Fourier transform in the Category $\fRelCategory$}
\label{section_RelFT}

The abstract correspondence between strongly complementary observables and Fourier transforms in Section \ref{section_AbelianGroups_FourierTransform} means that a characterization of strongly complementary observables in any $\dagger$-SMC allows for the definition of a Fourier transform in that category.  In this section, we apply this idea to $\fRelCategory$, the category of finite sets and relations (for an introduction, see Evans et al.\cite{Msc-ClassicalStructuresRel} and references therein). In this setting, the relevant classical structures are classified by abelian groupoids, in a sense made clear by Theorem \ref{thm_classicalStructuresRel} below. Please note that in $\fRelCategory$ the scalars are the booleans $\{\bot,\top\}$, and $\top = \id{\tensorUnit}: \tensorUnit \rightarrow \tensorUnit$ is the only invertible scalar. As a consequence, all $\dagger$-qSFAs are automatically $\dagger$-SFAs (and thus all $\dagger$-qSCFAs are in fact classical structures). 

\begin{definition}
A \textbf{(finite) abelian groupoid} on some finite set $A$ is any finite family $(G_\lambda,+_\lambda,0_\lambda)_{\lambda \in \Lambda}$ of finite abelian groups such that $(G_\lambda)_{\lambda \in \Lambda}$ is a finite partition of $A$ into disjoint subsets. We denote one such groupoid by $\oplus_{\lambda \in \Lambda} G_\lambda$, leaving the groups' structures understood.
\end{definition}

\begin{theorem}\label{thm_classicalStructuresRel}
Let $\hbox{\input{modules/symbols/XdotSym.tex}}\!$ be a classical structure in $\fRelCategory$ on a finite set $X$. Then there exists a (unique) abelian groupoid $\oplus_{\lambda \in \Lambda} G_\lambda$ on $A$ such that, for all $a,b \in A$:
\begin{equation}
    \XmultSym \circ \left( \{a\} \times \{b\} \right) = 
    \begin{cases}
        \{ a +_\lambda b\} \text{ if for some } \lambda \in \Lambda \text{ we have } a,b \in G_\lambda \\
        \emptyset \text{ otherwise}
    \end{cases}
\end{equation}
Furthermore, each abelian groupoid defines a unique $\dagger$-SCFA in this way. The classical states of the co-monoid fragment $(\XcomultSym,\XcounitSym)$ are the family $(G_\lambda)_{\lambda \in \Lambda}$, which is also a matching family for the monoid fragment $(\XmultSym,\XunitSym)$.\footnote{Recall that the states $1 \rightarrow A$ of a finite set $A$ in $\fRelCategory$ are exactly the subsets of $A$.}
\end{theorem}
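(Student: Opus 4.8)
The plan is to recover the groupoid entirely from the relation $\XmultSym$ and its adjoint $\XcomultSym$, translating each axiom of a classical structure into a groupoid axiom. I identify morphisms of $\fRelCategory$ with subsets, write $U \subseteq A$ for the subset underlying the unit $\XunitSym$ (so that $\XcounitSym = U^\dagger$), and write $a \mapsto (b,c)$ for $(a,(b,c)) \in \XcomultSym$. The first step is single-valuedness of $\XmultSym$: specialness $\XmultSym \circ \XcomultSym = \id{A}$ unfolds in $\fRelCategory$ to the statement that distinct $a, a'$ never satisfy $a \mapsto (b,c)$ and $a' \mapsto (b,c)$ for a common pair, i.e. the fibres $\{(b,c) : a \mapsto (b,c)\}$ are pairwise disjoint. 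Since $\XmultSym \circ (\{a\} \times \{b\}) = \{c : c \mapsto (a,b)\}$ and $(a,b)$ lies in at most one fibre, this set is a singleton or empty; thus $\XmultSym$ is a partial binary operation on $A$, commutative by hypothesis and associative by taking daggers in the coassociativity of $\XcomultSym$.

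Next I would produce units and inverses. The counit law $(\XcounitSym \tensor \id{A}) \circ \XcomultSym = \id{A}$ says that each $a$ admits some $e \in U$ with $\XmultSym \circ (\{e\} \times \{a\}) = \{a\}$, and symmetrically on the right, so the elements of $U$ act as local identities. Inverses come from the self-duality inherent in any dagger Frobenius algebra: the cup $\XcomultSym \circ \XunitSym : \tensorUnit \rightarrow A \tensor A$ is the relation consisting of those pairs $(a,b)$ whose product $a \cdot b$ is defined and lies in $U$, and the snake equations force this, read as a relation $A \rightarrow A$, to be the graph of an involutive bijection $a \mapsto a^{-1}$. Hence every $a$ has a unique inverse with $a \cdot a^{-1} \in U$, and $A$ equipped with $\XmultSym$ is a groupoid whose identities are exactly $U$.

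The decisive step, and the one I expect to be the main obstacle, is to see that commutativity collapses this groupoid into a disjoint union of abelian groups. Comparing $a \cdot a^{-1}$ with $a^{-1} \cdot a$: both are defined, commutativity forces them equal, and they are the (a priori distinct) identities at the two ends of $a$; their equality forces the source and target identity of every element to coincide. Consequently the relation ``$a,b$ are composable'' is an equivalence relation, its classes $(G_\lambda)_{\lambda \in \Lambda}$ partition $A$, each contains a single identity $0_\lambda \in U$, and each is closed under the now-total operation $+_\lambda := \XmultSym$, turning $(G_\lambda, +_\lambda, 0_\lambda)$ into an abelian group and yielding the displayed formula. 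For the converse I would start from an abelian groupoid, define $\XmultSym$ and $\XunitSym$ from its partial multiplication and identities, take $\XcomultSym, \XcounitSym$ as their adjoints, and check the comonoid, Frobenius and specialness laws by routine relational computation; uniqueness is immediate, since $+_\lambda$ and $0_\lambda$ are read back off $\XmultSym$ and $U$.

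It remains to identify the classical states with the blocks. A copyable state $x \subseteq A$ satisfies $\XcomultSym \circ x = x \tensor x$; but the relation $\{(b,c) : b \cdot c \in x\}$ can never contain a pair drawn from two different blocks, whereas $x \tensor x$ does as soon as $x$ meets two blocks, so $x$ is confined to a single $G_\lambda$, and there the condition $b \cdot c \in x \Leftrightarrow b,c \in x$ forces $x = G_\lambda$. Hence the nonempty classical states of $(\XcomultSym, \XcounitSym)$ are exactly the $(G_\lambda)_{\lambda \in \Lambda}$. Finally, the block formula gives $\XmultSym \circ (G_\lambda \times G_\mu) = G_\lambda$ when $\lambda = \mu$ and $\emptyset$ otherwise, which is precisely the condition for $(G_\lambda)_{\lambda \in \Lambda}$ to be a matchable family for $(\XmultSym, \XunitSym)$ in the sense of Definition~\ref{def:matchables}.
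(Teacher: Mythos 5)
Your proposal is sound, but it necessarily takes a different route from the paper, because the paper does not actually prove this theorem: its ``proof'' is a citation to Pavlovic~\cite{Msc-ClassicalStructuresRel} for the commutative case, and to Heunen et al.~\cite{Msc-RelativeFrob} for the non-commutative/non-abelian-groupoid extension. What you have written is a self-contained reconstruction of Pavlovic's classification, read off directly from the relational semantics of the axioms: specialness gives single-valuedness of the partial operation, the unit law gives local identities, the Frobenius law enters (correctly) through the cup $\XcomultSym \circ \XunitSym = \{(a,b) : a\cdot b \in U\}$ and the snake equations, which force that relation to be the graph of an involution $a \mapsto a^{-1}$, and commutativity collapses the structure into disjoint abelian groups. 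Your handling of the classical states and of the matchable-family claim is also correct, including the care to say \emph{nonempty} classical states (the empty relation copies trivially, a point the paper's statement glosses over). What your approach buys is an actual proof inside the paper; what the citation buys is brevity and the non-commutative generalisation, which your argument cannot cover as stated, since commutativity is what collapses the groupoid onto a single object per component.

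One step needs real expansion, because as literally written it proves nothing: ``commutativity forces $a \cdot a^{-1} = a^{-1}\cdot a$'' is a tautology, and before you know the structure is a groupoid, ``the identities at the two ends of $a$'' are not defined objects. The honest content, all derivable from the facts you have already established (partial commutative associative operation, the unit law, and the cup characterisation $a \cdot b \in U \iff b = a^{-1}$), is the following. First, units are self-inverse and idempotent: by the unit law pick $v \in U$ with $v \cdot u = u$; since $v\cdot u \in U$, the cup gives $u = v^{-1}$, so $u^{-1} = v \in U$ and $u^{-1}\cdot u = u$; applying the same to the unit $u^{-1}$ gives $u \cdot u^{-1} = u^{-1}$, and commutativity then forces $u = u^{-1}$, hence $u\cdot u = u$. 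Second, $e_a := a \cdot a^{-1}$ is the \emph{unique} unit acting on $a$: if $u\cdot a = a$ with $u \in U$, then $u \cdot e_a = (u\cdot a)\cdot a^{-1} = e_a \in U$, so $e_a = u^{-1} = u$. Third, composability is the equivalence ``$e_a = e_b$'': if $e_a = e_b$ then $(a\cdot b)\cdot b^{-1} = a \cdot e_b = a \neq \emptyset$, so $a\cdot b \neq \emptyset$; conversely if $a \cdot b = \{c\}$ then $e_a \cdot c = (e_a \cdot a)\cdot b = \{c\}$ and $e_b \cdot c = (e_b\cdot b)\cdot a = \{c\}$, so $e_a = e_c = e_b$ by uniqueness. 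The fibres of $a \mapsto e_a$ are then your blocks $G_\lambda$, each a finite abelian group, and the rest of your argument goes through verbatim. With this filled in, the proof is complete.
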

\begin{proof}
Proven by Pavlovic in~\cite{Msc-ClassicalStructuresRel}, and extended to the case of non-commutative $\dagger$-SFAs / non-abelian groupoids by Heunen et al.~\cite{Msc-RelativeFrob}.
\end{proof}

\noindent The groupoids corresponding to complementary / strongly complementary classical structures ($\dagger$-SCFA's)\footnote{See Appendix~\ref{app:basic} for more on classical structures.} take a particularly nice form.

\begin{theorem}\label{thm_StrongComplementarityRel}
Let $\oplus_{\gamma \in \Gamma} H_\gamma$ and $\oplus_{\lambda \in \Lambda} G_\lambda$ be abelian groupoids on some finite set $A$, and let $\hbox{\input{modules/symbols/ZdotSym.tex}}\!$ and $\hbox{\input{modules/symbols/XdotSym.tex}}\!$ be the corresponding classical structures on $A$ in $\fRelCategory$. Then $\hbox{\input{modules/symbols/ZdotSym.tex}}\!$ and $\hbox{\input{modules/symbols/XdotSym.tex}}\!$ are strongly complementary if and only if: 
\begin{enumerate}
\item[(i)] there is a finite abelian group $H$ such that for all $\gamma \in \Gamma$ we have $H_\gamma \isom H$ as groups.
\item[(ii)] there is a finite abelian group $G$ such that for all $\lambda \in \Lambda$ we have $G_\lambda \isom G$ as groups.
\item[(iii)] for each $(\lambda,\gamma) \in \Lambda \times \Gamma$, the intersection $G_\lambda \cap H_\gamma$ is a singleton. 
\end{enumerate}
\end{theorem}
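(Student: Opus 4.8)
The plan is to translate the diagrammatic definition of strong complementarity (Definition \ref{def:StrongComplementarity}) into concrete relational conditions on the two pieces of groupoid data supplied by Theorem \ref{thm_classicalStructuresRel}, and then to prove the equivalence in both directions. Throughout I would use the explicit relational descriptions: writing $\oplus_{\lambda}G_\lambda$ for the groupoid of $\XdotSym$ and $\oplus_{\gamma}H_\gamma$ for that of $\ZdotSym$, the multiplication $\XmultSym$ sends $\{a\}\times\{b\}$ to $\{a+_\lambda b\}$ when $a,b$ share a block $G_\lambda$ and to $\emptyset$ otherwise, its comultiplication is the converse relation, the unit is the set of block identities $\{0_\lambda\}_\lambda$, and symmetrically for $\ZdotSym$. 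With these in hand each equation of \eqref{eqn:bialgebraEqns}, together with the coherence conditions, becomes an equality of subsets of $A$ or of $A\times A$, which I would verify by evaluating on singleton states. Recall also that in $\fRelCategory$ the only invertible scalar is $\top$, so every relation forced to be ``sharp'' must be a partial bijection on its support; this rigidity is what keeps all the verifications finite.

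For the forward implication I would first extract (iii). Since strongly complementary structures are in particular complementary (mutually unbiased) by \cite{coecke2012strong}, the classical states $H_\gamma$ of $\ZdotSym$ are unbiased for $\XdotSym$ and vice versa; in $\fRelCategory$ this forces the overlaps $|G_\lambda\cap H_\gamma|$ to be independent of both indices, hence equal to a common constant $k$. To pin $k=1$ I would invoke the antipode (Definition \ref{def:Antipode}) and its invertibility (Lemma \ref{lemma_AntipodeInverse}): a Hopf structure over the boolean scalars cannot sustain thick overlaps, so $k=1$ and (iii) follows, together with the cardinality matches $|G_\lambda|=|\Gamma|$ and $|H_\gamma|=|\Lambda|$. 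Conditions (i) and (ii) then come from the principal bialgebra law, $\ZcomultSym\circ\XmultSym=(\XmultSym\tensor\XmultSym)\circ(\mathrm{id}\tensor\sigma\tensor\mathrm{id})\circ(\ZcomultSym\tensor\ZcomultSym)$. Read on classical states, this law says that $\XmultSym$ carries a pair of $\ZdotSym$-classical states to a $\ZdotSym$-classical state, so the $\XdotSym$-product of $H_\gamma$ and $H_{\gamma'}$ is again a single block $H_{\gamma''}$; using (iii) to select the unique representative of each block in each $G_\lambda$ turns this into a group law on the index set $\Gamma$ and exhibits, for fixed $\gamma'$, a bijection $H_\gamma\to H_{\gamma''}$ which the same law shows to be a group homomorphism. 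This gives $H_\gamma\isom H$ for a common $H$, i.e. (i), and the argument with the two colours exchanged gives (ii).

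For the converse I would assume (i)--(iii). By (iii) each element of $A$ lies in exactly one $G_\lambda$ and one $H_\gamma$ and is determined by the pair $(\lambda,\gamma)$, so I may coordinatize $A\isom\Lambda\times\Gamma$ with the $G$-blocks as rows and the $H$-blocks as columns; conditions (i) and (ii) let me transport the per-block group structures to a single standard $H$ along each column and $G$ along each row. With coordinates fixed, I would verify the equations of \eqref{eqn:bialgebraEqns} directly as identities of relations, computing each composite on an arbitrary pair of singletons $(\{a\},\{b\})$ and checking that the two resulting subsets of $A\times A$ coincide. The unit and counit equations reduce to the assertion that the identities of one groupoid assemble into a single block of the other, which is exactly what the coordinatization must be chosen to arrange.

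The main obstacle is the principal bialgebra law, in both directions. In the forward direction the delicate step is showing that the $\XdotSym$-product of two $\ZdotSym$-blocks is a single block rather than a union, and that the induced block maps are genuinely homomorphisms; this is precisely where the interaction between the two groupoid multiplications, mediated by the singleton overlaps of (iii), must be controlled. In the backward direction the corresponding difficulty is that (i)--(iii) determine the two families of group structures only up to coordinatization, so one must choose that coordinatization so that the row and column operations are \emph{mutually compatible}, i.e. so that the two relational composites of the principal law yield identical subsets of $A\times A$; establishing this compatibility---that the chosen $G$- and $H$-operations fit into the single commuting pattern demanded by the bialgebra law, as in the $\fdHilbCategory$ classification of Corollary \ref{col:SCclassification}---is the crux. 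I expect the repeated appeal to $\top$ being the only invertible scalar to be the tool that forces each such relation to be a partial bijection and thereby collapses these checks to the rigid combinatorial statements (i)--(iii).
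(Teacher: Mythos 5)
You should note at the outset that the paper itself contains no proof of this theorem: its ``proof'' is the single line ``Proven by Evans et al.~\cite{evans2009classifying}'', so a self-contained relational argument like yours is necessarily a different route. The skeleton of your forward direction is sound: the bialgebra law (plus the counit part of coherence, which rules out the empty state) does imply that $\XmultSym$ sends pairs of $\ZdotSym$-classical states to $\ZdotSym$-classical states, which puts a group structure on $\Gamma$, and singleton intersections then transport it onto each block. But your derivation of (iii) is asserted rather than proved: over the boolean scalars of $\fRelCategory$ there is no quantitative notion of unbiasedness from which a ``common constant $k = |G_\lambda \cap H_\gamma|$'' could be read off, and ``a Hopf structure over the boolean scalars cannot sustain thick overlaps'' is precisely the statement to be proved, not a reason for it. The honest route is a relational computation with the Hopf law of Definition~\ref{def:complementarity} (for instance: unbiasedness of a block $H_\gamma$ against $\XdotSym$ forces any two elements $a,a' \in H_\gamma \cap G_\lambda$ to satisfy $a -_\lambda a' = 0_\lambda$, hence $a = a'$, and separately forces each intersection to be non-empty); that computation is essentially the content of the cited classification, and your sketch leaves it open.

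The fatal gap is in the backward direction, and it is exactly the point you flagged as ``the crux'' and hoped boolean rigidity would settle: it cannot be settled, because the implication (i)--(iii) $\Rightarrow$ strong complementarity is false as literally stated. Strong complementarity is a property of the \emph{given} block group structures, which (i)--(iii) do not determine and which you are not free to re-choose. Concretely, take $A = \integersMod{2} \times \integersMod{2}$; let $\ZdotSym$ come from the row groupoid with blocks $H_g = \{(0,g),(1,g)\}$ and identities $(0,g)$, and let $\XdotSym$ come from the column groupoid with blocks $G_h = \{(h,0),(h,1)\}$, where $G_0$ has identity $(0,0)$ but $G_1$ has the twisted identity $(1,1)$. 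Conditions (i)--(iii) hold, with $H = G = \integersMod{2}$ and all intersections singletons. Yet $\XunitSym$ is now the state $\{(0,0),(1,1)\}$, and the unit law $\ZcomultSym \circ \XunitSym = \XunitSym \tensor \XunitSym$ (the second equation of \eqref{eqn:bialgebraEqns}, which is also the first coherence equation of Definition~\ref{def:coherence}) fails: the right-hand side contains the pair $((0,0),(1,1))$, whose components lie in different rows, while by Theorem~\ref{thm_classicalStructuresRel} every pair produced by $\ZcomultSym$ lies inside a single row. (One can check that the Hopf law does hold for this pair, so it is complementary but not strongly complementary; the example separates the two notions.) Consequently no coordinatization argument can close your backward direction. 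What is true --- and what the paper tacitly relies on when it afterwards writes the groupoids as $\oplus^{|G|}H$ and $\oplus^{|H|}G$ with $A \isom \suchthat{(h,g)}{h \in H \text{ and } g \in G}$ --- is that the pair is strongly complementary iff the two groupoids are \emph{jointly} isomorphic to this aligned product form, i.e.\ the isomorphisms in (i) and (ii) can be chosen compatibly so that in the induced coordinates both multiplications and both units are the product ones. Your proof should target that corrected statement: your forward argument can be completed to construct the aligned coordinates, and the backward direction then reduces to the routine blockwise verification you describe, carried out for the standard product pair.
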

\begin{proof}
Proven by Evans et al.~\cite{evans2009classifying}.
\end{proof}

\noindent In particular, this means that $\Lambda \isom H$ and $\Gamma \isom G$ as sets: as a consequence, we will write abelian groupoids corresponding to strongly complementary classical structures as $\oplus^{|G|}H$ and $\oplus^{|H|}G$, leaving the indexing of the partitions as understood. We will also implicitly label the elements of the underlying set $A$ as:
\begin{equation}
A \isom \suchthat{(h,g)}{h \in H \text{ and } g \in G}
\end{equation}

\noindent In $\fdHilbCategory$, strongly complementary pairs of classical structures have the same number of classical states, and their monoid fragments act on each other's classical states as isomorphic groups $G$ and $G^\wedge$. As a consequence, it is possible to fix isomorphisms between the two groups and construct Hadamard transforms as per Definition \ref{def_CategoricalHadamard}. In $\fRelCategory$, on the other hand, Hadamard transforms only exist in very special cases.

\begin{theorem} Let $\mathbb{G} = (\,\SpaceG,\hbox{\input{modules/symbols/XdotSym.tex}}\!,\hbox{\input{modules/symbols/ZdotSym.tex}}\!)$ be an abelian internal group in $\fRelCategory$, and let $Z = \oplus^{|G|}H$ and $X = \oplus^{|H|}G$ be the groupoids corresponding to the $\hbox{\input{modules/symbols/ZdotSym.tex}}\!$ and $\hbox{\input{modules/symbols/XdotSym.tex}}\!$ classical structures respectively. Then $(\XmultSym,\XunitSym)$ acts on the $\hbox{\input{modules/symbols/ZdotSym.tex}}\!$-classical states $(H_g)_{g \in G}$ as the finite abelian group $G$, and $(\ZmultSym,\ZunitSym)$ acts on the $\hbox{\input{modules/symbols/XdotSym.tex}}\!$-classical states $(G_h)_{h \in H}$ as the finite abelian group $H$.
\end{theorem}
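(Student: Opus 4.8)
The plan is to combine the general categorical fact that one multiplication always acts as a group on the classical states of the complementary comonoid with the explicit groupoid description available in $\fRelCategory$, in order to pin the two resulting abstract groups down as $G$ and $H$. The whole configuration is symmetric under interchanging the two (here commutative) classical structures $\XdotSym$ and $\ZdotSym$ -- indeed the second statement is just the first applied to the Pontryagin dual $\mathbb{G}^\wedge = (\,\SpaceG,\ZdotSym,\XdotSym)$, using that $\fRelCategory$ is self-dual -- so it suffices to prove that $(\XmultSym,\XunitSym)$ acts on the $Z$-blocks as $G$. Throughout I use Theorem \ref{thm_classicalStructuresRel}: the $\ZdotSym$-classical states are exactly the blocks $(H_g)_{g\in G}$ of $Z = \oplus^{|G|}H$, the $\XdotSym$-classical states are the blocks $(G_h)_{h\in H}$ of $X = \oplus^{|H|}G$, the relation $\XmultSym$ multiplies two elements exactly when they share an $X$-block $G_h$ (returning their product under the group $\isom G$ carried by that block), and $\XunitSym$ is the set of block-identities of $X$. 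By Theorem \ref{thm_StrongComplementarityRel}(iii) each $G_h$ meets each $H_g$ in a single point.

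First I would show that $\XmultSym$ restricts to a binary operation on the set of $Z$-blocks. The efficient route is the bialgebra law \eqref{eqn:bialgebraEqns}. For $\ZdotSym$-classical states $P,Q$ (so $\ZcomultSym\circ P = P\tensor P$, $\ZcounitSym\circ P = 1$, and likewise for $Q$) it gives $\ZcomultSym\circ\XmultSym\circ(P\tensor Q) = (\XmultSym\circ(P\tensor Q))\tensor(\XmultSym\circ(P\tensor Q))$ and $\ZcounitSym\circ\XmultSym\circ(P\tensor Q)=1$, so that $\XmultSym\circ(P\tensor Q)$ is again a (necessarily nonempty) $\ZdotSym$-classical state, hence one of the blocks $H_g$; the unit axioms likewise make $\XunitSym$ a $\ZdotSym$-copyable state, hence a block. (This is exactly Theorem \ref{thm_InteralGroupsTraditionalGroups} read in $\fRelCategory$.) Associativity, unitality and the existence of inverses are inherited from $\XmultSym$, $\XunitSym$ and the antipode, so $(\{H_g\}_{g\in G},\XmultSym,\XunitSym)$ is a group of order $|G|$.

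It then remains to identify this group with $G$, which I would do by restricting to a single $X$-block. Unwinding the relational composite $\XmultSym\circ(H_{g_1}\times H_{g_2})$ in $\fRelCategory$, only pairs lying inside a common $X$-block contribute, and inside each $G_h$ the sole contributing pair is $(H_{g_1}\cap G_h,\,H_{g_2}\cap G_h)$, whose $\XmultSym$-image is their product within $G_h$. Fix one block $G_{h_0}\isom G$; by Theorem \ref{thm_StrongComplementarityRel}(iii) the assignment $\beta: H_g \mapsto H_g\cap G_{h_0}$ is a bijection from the $Z$-blocks onto $G_{h_0}$, and the previous observation yields $\beta(\XmultSym\circ(H_{g_1}\times H_{g_2})) = \beta(H_{g_1}) +_{G_{h_0}} \beta(H_{g_2})$ with $\beta(\XunitSym) = 0_{G_{h_0}}$. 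Hence $\beta$ is an isomorphism of groups onto $G_{h_0}\isom G$, proving the first claim; the symmetric argument (equivalently, passage to $\mathbb{G}^\wedge$) shows that $(\ZmultSym,\ZunitSym)$ acts on $(G_h)_{h\in H}$ as $H$.

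The crux -- and the only place strong complementarity is genuinely needed -- is the middle step. A subset of the underlying set choosing one point from each $X$-block need not be a $Z$-block, since the group structures carried by distinct $X$-blocks are \emph{a priori} not aligned with any one fixed coordinatisation; so one cannot conclude by naive counting that $\XmultSym\circ(H_{g_1}\times H_{g_2})$ is a classical state of $\ZdotSym$. The bialgebra identity is precisely what forces this composite to be $\ZdotSym$-copyable, and only once that is known does restriction to the single block $G_{h_0}$ legitimately read the operation off as the group law of $G$.
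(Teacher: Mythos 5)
Your proof is correct, and it takes a genuinely different route from the paper's own. The paper's proof is a two-line coordinate computation: invoking the labelling $A \isom \{(h,g) : h \in H,\ g \in G\}$ fixed after Theorem \ref{thm_StrongComplementarityRel}, it asserts $\XunitSym = H_0$ and computes $\XmultSym \circ (\ket{H_g} \times \ket{H_{g'}}) = \bigcup_{h,h'} \XmultSym \circ (\{(h,g)\} \times \{(h',g')\}) = \bigcup_{h} \{(h,g+g')\} = \ket{H_{g+g'}}$, plus the same symmetry reduction you use. That computation presupposes the labelling can be chosen so that every $X$-block multiplies coordinate-wise, $(h,g),(h,g') \mapsto (h,g+g')$, while the $Z$-blocks are simultaneously the constant-$g$ sets --- exactly the alignment you single out as the crux. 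You are right to be wary of it: conditions (i)--(iii) of Theorem \ref{thm_StrongComplementarityRel} as literally stated do not yield such a labelling (transport the group structure of a single block along a non-automorphic bijection of itself: (i)--(iii) survive, but strong complementarity fails, since the block units of the twisted structure no longer lie inside one block of the other structure and a unit-copying equation breaks), so the paper is implicitly leaning on the stronger, coordinatised form of the classification in \cite{evans2009classifying}. Your argument replaces that assumption by a derivation: the bialgebra law (\ref{eqn:bialgebraEqns}) together with the counit equation forces $\XmultSym \circ (H_{g_1} \tensor H_{g_2})$ and $\XunitSym$ to be nonempty $\ZdotSym$-copyable states, hence blocks, and then the singleton intersections of condition (iii) let you read the group law off one block via $\beta(H_g) = H_g \cap G_{h_0}$. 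What your route buys is coordinate-freeness and an explicit display of where strong complementarity (rather than the purely combinatorial data (i)--(iii)) is genuinely used; what the paper's route buys is brevity, once the aligned labelling is granted. One small economy: your appeal to the antipode for inverses is redundant, since $\beta$ being a unit-preserving bijection intertwining the two operations already transports the entire group structure from $G_{h_0} \isom G$.
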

\begin{proof}
It is sufficient to prove this for $(\XmultSym,\XunitSym)$ acting on the $\hbox{\input{modules/symbols/ZdotSym.tex}}\!$-classical states. Indeed we have that $\XunitSym = H_0$ is a $\hbox{\input{modules/symbols/ZdotSym.tex}}\!$-classical state, and that:
\begin{equation}
\XmultSym \circ \left( \ket{H_g} \times \ket{H_{g'}} \right) = \bigcup_{h,h' \in H} \XmultSym \circ \left( \{(h,g)\} \times \{(h',g')\} \right) = \bigcup_{h \in H} \{(h,g+g')\} = \ket{H_{g+g'}}
\end{equation}
\end{proof}

\begin{example}
The groupoids $Z = \mathbb{Z}_2 \oplus \mathbb{Z}_2 \oplus \mathbb{Z}_2$ and $X = \mathbb{Z}_3 \oplus \mathbb{Z}_3$ correspond to strongly complementary structures, call them $\hbox{\input{modules/symbols/ZdotSym.tex}}\!$ and $\hbox{\input{modules/symbols/XdotSym.tex}}\!$ respectively, on a 6-element set $A$. We can label the elements of $A$ as:
\begin{equation}
A \isom \suchthat{(h,g)}{h \in \integersMod{2} \text{ and } g \in \integersMod{3}}
\end{equation}
The classical states of $\hbox{\input{modules/symbols/ZdotSym.tex}}\!$ are the 3 subsets $(\integersMod{2},g)$ for $g \in \integersMod{3}$, while the classical states of $\hbox{\input{modules/symbols/XdotSym.tex}}\!$ are the 2 subsets $(h,\integersMod{3})$ for $h \in \integersMod{2}$. The monoid $(\XmultSym,\XunitSym)$ acts on the $\hbox{\input{modules/symbols/ZdotSym.tex}}\!$-classical states as the group $\integersMod{3}$, while the monoid $(\ZmultSym,\ZunitSym)$ acts on the $\hbox{\input{modules/symbols/XdotSym.tex}}\!$-classical states as the group $\integersMod{2}$.
\end{example}

\begin{corollary}
Let $\mathbb{G} = (\,A,\hbox{\input{modules/symbols/XdotSym.tex}}\!,\hbox{\input{modules/symbols/ZdotSym.tex}}\!)$ be an abelian internal group in $\fRelCategory$. Hadamard transforms for $\mathbb{G}$ exist if and only if the two groups $G$ and $H$ are isomorphic.
\end{corollary}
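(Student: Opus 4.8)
The plan is to reduce both implications to the explicit groupoid description of $\ZdotSym$ and $\XdotSym$ furnished by Theorem~\ref{thm_StrongComplementarityRel}, together with the identification of the two relevant group actions carried out in the theorem immediately preceding this corollary. Recall that we may label $A \isom \suchthat{(h,g)}{h\in H, g\in G}$, that the $\ZdotSym$-classical states are the blocks $H_g = \suchthat{(h,g)}{h\in H}$ (indexed by $g\in G$) while the $\XdotSym$-classical states are the blocks $G_h = \suchthat{(h,g)}{g\in G}$ (indexed by $h\in H$), and that $(\XmultSym,\XunitSym)$ acts on $(H_g)_{g\in G}$ as $G$ whereas $(\ZmultSym,\ZunitSym)$ acts on $(G_h)_{h\in H}$ as $H$.

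For the forward direction, suppose a Hadamard transform exists. By Remark~\ref{rmrk_CategoricalHadamard}, being a co-monoid isomorphism it restricts to a bijection between the $\ZdotSym$-classical states and the $\XdotSym$-classical states, and being moreover a monoid isomorphism this bijection is a group isomorphism from the group $(\XmultSym,\XunitSym)$ acting on the $\ZdotSym$-classical states to the group $(\ZmultSym,\ZunitSym)$ acting on the $\XdotSym$-classical states. By the theorem immediately preceding this corollary these two groups are $G$ and $H$ respectively, whence $G \isom H$. This direction is therefore immediate once the Remark and that theorem are in hand.

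For the converse, assuming $G \isom H$ I would fix a group isomorphism $\alpha: G \to H$ and define a bijection $W: A \to A$ of the underlying set by $(h,g) \mapsto (\alpha(g),\alpha^{-1}(h))$, which in $\fRelCategory$ is an isomorphism $\SpaceG \to \SpaceG$. The claim is that $W$ is the desired Hadamard transform. First I would observe that $W$ carries the block $H_{g}$ onto the block $G_{\alpha(g)}$, so at the level of classical states it realises $\alpha$ itself; in particular it sends $\ZdotSym$-classical states to $\XdotSym$-classical states and exchanges the two unit-sets $\XunitSym = H_0$ and $\ZunitSym = G_0$. Then I would verify the four defining equations of Definition~\ref{def_CategoricalHadamard} (equivalently those listed in Remark~\ref{rmrk_CategoricalHadamard}) by a direct computation on points: checking $\XcomultSym \circ W = (W\tensor W)\circ \ZcomultSym$ and $W\circ \XmultSym = \ZmultSym\circ(W\tensor W)$ amounts to noting that $W$ turns the ``split the $h$-coordinate, copy the $g$-coordinate'' relation of $\ZdotSym$ into the ``split the $g$-coordinate, copy the $h$-coordinate'' relation of $\XdotSym$, using only that $\alpha$ and $\alpha^{-1}$ are group homomorphisms; the unit/counit equations $W\circ\XunitSym = \ZunitSym$ and $\XcounitSym\circ W = \ZcounitSym$ follow from the fact that $W$ exchanges $\XunitSym$ and $\ZunitSym$.

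The substance of the argument lies entirely in the converse, and the only genuine work is the point-level verification of the (co)monoid-homomorphism equations for $W$; these are routine given the block descriptions of $\XmultSym, \ZmultSym, \XcomultSym, \ZcomultSym$ from Theorems~\ref{thm_classicalStructuresRel} and~\ref{thm_StrongComplementarityRel}, since the two multiplications act independently on the two coordinates and $W$ merely relabels and swaps them. I do not anticipate a real obstacle; the one point to be careful about is matching the \emph{domains} of the relations on each side (e.g.\ that $\ZmultSym\circ(W\tensor W)$ is defined on exactly the pairs where $\XmultSym$ is), which holds because $\alpha^{-1}$ is injective, so two inputs share an $h$-coordinate if and only if their $W$-images share the corresponding $g$-coordinate.
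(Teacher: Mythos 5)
Your proposal is correct and takes essentially the same route as the paper: the forward direction is the identical appeal to Remark~\ref{rmrk_CategoricalHadamard}, and your map $W:(h,g)\mapsto(\alpha(g),\alpha^{-1}(h))$ is precisely the relation $t = \suchthat{\left((h,g),(\Psi g,\Psi^{-1}h)\right)}{h \in H,\ g \in G}$ that the paper exhibits for the converse. The only differences are cosmetic: you carry out the point-level verification of the (co)monoid-homomorphism equations that the paper leaves implicit, and you skip the paper's intermediate observation that the naive relation $\bigcup_{g \in G} \ket{G_{\Psi(g)}}\bra{H_g}$ satisfies those equations but fails to be an isomorphism in $\fRelCategory$.
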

\begin{proof}
By Remark \ref{rmrk_CategoricalHadamard}, a Hadamard transform gives a group isomorphism between the groups given by the action of the two monoid fragments on each other's classical states: in this case, the existence of a Hadamard transform forces $G$ and $H$ to be isomorphic. On the other hand, if $\Psi: G \rightarrow H$ is a group isomorphism, one could define a map $M_\Psi: A \rightarrow A$ as follows:
\begin{equation}
M_\Psi := \bigcup_{g \in G} \ket{G_{\Psi(g)}}\bra{H_g}  
\end{equation}
This satisfies the monoid and comonoid homomorphism requirements from Definition \ref{def_CategoricalHadamard}, but is not an isomorphism in $\fRelCategory$. To get an isomorphism, and prove the existence of a relevant Hadamard transform in $\fRelCategory$, we consider a new map $t: A \rightarrow A$, which we give in the form of a relation $t \subseteq A \times A$:
\begin{equation}
t := \suchthat{\left((h,g),(\Psi g,\Psi^{-1}h)\right)}{h \in H \text{ and } g \in G}
\end{equation}
\end{proof}

\noindent The Fourier Transform as given by Definition \ref{def_FourierTransform} is valid in any $\dagger$-SMC with a pair of strongly complementary classical structures, and in particular it holds in $\fRelCategory$. The more traditional formulation given by Lemma \ref{lemma_FTTraditionalSMC}, which allows one to see the Fourier Transform as a canonical isomorphism $\Ltwo{G} \isom \Ltwo{G^\wedge}$, is based on the assumption that the group elements of an abelian internal group $\mathbb{G} = (\,\SpaceG,\hbox{\input{modules/symbols/XdotSym.tex}}\!,\hbox{\input{modules/symbols/ZdotSym.tex}}\!)$ in $\fdHilbCategory$ form a basis, giving $\Hom{\fdHilbCategory}{\complexs}{\SpaceG}$ the Hilbert space structure of $\Ltwo{G}$ in a natural way, and that the multiplicative characters form a co-basis, giving $\Hom{\fdHilbCategory}{\SpaceG}{\complexs}$ the Hilbert space structure of $\Ltwo{G^\wedge}$ in a natural way; more details about this can be found in Section \ref{section_EnrichedPontryaginDuality} below.

\noindent In $\fRelCategory$, however, the assumptions of Lemma \ref{lemma_FTTraditionalSMC} only hold in one, trivial case. The classical states of a classical structure in $\fRelCategory$ are always a finite, orthogonal and normalisable family, and $\fRelCategory$ is distributively $\ComMonCategory$-enriched as required. However, as Lemma \ref{thm_partitionIdentityRel} below notes, there is a unique classical structure on any finite set $A$ with classical states forming the resolution of the identity, and the only abelian internal group in $\fRelCategory$ satisfying the assumptions of Lemma \ref{lemma_FTTraditionalSMC} is the one on the tensor unit $\{\star\}$ of $\fRelCategory$.

\begin{lemma}
\label{thm_partitionIdentityRel}
Let $\hbox{\input{modules/symbols/ZdotSym.tex}}\!$ be a classical structure on a finite set $A$ in $\fRelCategory$, and let $Z$ be the associated abelian groupoid. The classical states of $\hbox{\input{modules/symbols/ZdotSym.tex}}\!$ form an orthogonal resolution of the identity if and only if the abelian groupoid is discrete, i.e.  $Z=\bigoplus^{A}\mathbb{Z}_1$. Furthermore, if $(\hbox{\input{modules/symbols/ZdotSym.tex}}\!,\hbox{\input{modules/symbols/XdotSym.tex}}\!)$ is a strongly complementary pair on $A$, with $\hbox{\input{modules/symbols/ZdotSym.tex}}\!$ associated to an abelian groupoid $Z=\bigoplus^{A}\mathbb{Z}_1$, then the abelian groupoid associated with $\hbox{\input{modules/symbols/XdotSym.tex}}\!$ is in fact a group, in the form $X = G$ for some finite abelian group $G = (A,+,0)$ on the element set $A$. As a consequence, the only abelian internal group in $\fRelCategory$ satisfying the assumptions of Lemma \ref{lemma_FTTraditionalSMC} is the (unique) abelian internal group on the tensor unit $\{\star\}$.
\end{lemma}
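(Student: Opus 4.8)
The statement splits into three parts, which I would prove in sequence, each feeding the next.

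\emph{Resolution of the identity iff discrete.} The plan is to unwind the condition directly in $\fRelCategory$. By Theorem \ref{thm_classicalStructuresRel} the classical states of $\ZdotSym$ are exactly the blocks $(G_\lambda)_{\lambda\in\Lambda}$ of the associated abelian groupoid $Z$; these are automatically orthogonal (disjoint blocks give $\braket{G_\lambda}{G_{\lambda'}} = \bot$ for $\lambda \neq \lambda'$) and normalisable (each block is nonempty, so $\braket{G_\lambda}{G_\lambda} = \top$, the unique invertible scalar, whence $N = \top$). It then remains to test the resolution of the identity, which reads $\bigcup_\lambda \ket{G_\lambda}\bra{G_\lambda} = \id{A}$, since $N = \top$ and the enriched sum in $\fRelCategory$ is union. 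Computing the endomorphism $\ket{G_\lambda}\bra{G_\lambda}$ as a relation yields the full square $G_\lambda \times G_\lambda$, so the left-hand side is $\bigcup_\lambda (G_\lambda \times G_\lambda)$ while $\id{A}$ is the diagonal $\suchthat{(a,a)}{a\in A}$. The inclusion $\supseteq$ always holds; the reverse holds exactly when no block contains two distinct elements, i.e. exactly when every $G_\lambda$ is a singleton, which is the condition $Z = \bigoplus^A\integersMod{1}$.

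\emph{Discrete $Z$ forces $X$ to be a group.} Here I would feed $Z = \bigoplus^A\integersMod{1}$ into the classification of strongly complementary pairs, Theorem \ref{thm_StrongComplementarityRel}. Discreteness of $Z$ makes every block $H_\gamma$ a singleton, so the common group is $H \isom \integersMod{1}$. The decisive step is condition (iii): each intersection $G_\lambda \cap H_\gamma$ is a singleton; since $|H_\gamma| = 1$ this forces $H_\gamma \subseteq G_\lambda$ for \emph{every} pair $(\lambda,\gamma)$, i.e. every element of $A$ lies in every block of $X$. As the blocks of $X$ are pairwise disjoint there can be only one, so $X$ is a single abelian group $G = (A,+,0)$.

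\emph{Only the tensor unit survives.} Finally I would combine the first two parts with the hypotheses of Lemma \ref{lemma_FTTraditionalSMC}, which require both the group elements and the multiplicative characters to form orthogonal partitions of their respective counits, hence (by Lemma \ref{lemma_BasisResolutionPartition}) to be resolutions of the identity for the two classical structures of the pair---the group elements for one, and the characters, being adjoints of the classical states of the complementary structure by Lemma \ref{thm_AbCopiablesMultiplicativeCharacters}, for the other. By the first part both $Z$ and $X$ must then be discrete; but strong complementarity gives $|A| = |H|\cdot|G|$ with discreteness of $Z$ forcing $|H| = 1$ and discreteness of $X$ forcing $|G| = 1$ (equivalently, by the second part $X$ is a single group on $A$, which is discrete only when $|A| = 1$). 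Hence $A$ is a singleton and $\SpaceG = \tensorUnit = \{\star\}$; conversely the unique classical structure there is $\integersMod{1}$, and the pair $(\integersMod{1},\integersMod{1})$ is the unique abelian internal group on the unit, which trivially meets the hypotheses.

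\emph{Main obstacle.} No step involves a hard computation; the only delicate point is bookkeeping in the third part---correctly matching each of the two families (group elements versus multiplicative characters) to the classical structure for which it is a basis, so that the first part may be applied to \emph{both} structures at once. Once this is pinned down, the numerical collapse $|A| = |H|\,|G| = 1$ is immediate.
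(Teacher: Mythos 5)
Your proposal is correct and follows essentially the same route as the paper's proof: the same direct computation in $\fRelCategory$ showing that $\bigcup_\lambda \ket{G_\lambda}\bra{G_\lambda}$ is the union of the squares $G_\lambda \times G_\lambda$, hence equals the identity (diagonal) precisely when every block is a singleton; the same appeal to Theorem \ref{thm_StrongComplementarityRel} to force $X$ to be a single group on $A$ once $Z$ is discrete; and the same observation that Lemma \ref{lemma_FTTraditionalSMC} forces both structures to be discrete, collapsing $A$ to the tensor unit. The only differences are ones of detail, not of substance: you spell out the singleton-intersection argument and the count $|A| = |H|\,|G| = 1$, which the paper compresses into a citation of Theorem \ref{thm_StrongComplementarityRel} and its subsequent remarks.
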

\begin{proof} 
In $\RelCategory$ the scalars are $0$ or $1$ and summation is given by set union.  Thus a resolution of the identity must satisfy the following equation, where each $\chi$ is a classical state:
  \begin{equation}\label{eqn:MultCharResolutionIdRel}
    \hbox{\input{modules/pictures/MultCharResolutionID_Rel.tex}}
  \end{equation}
In the specific case of $Z=\bigoplus^{A}\mathbb{Z}_1$, we have that classical states are in the form $\chi = \{(\star,a)\}$ and~\eqref{eqn:MultCharResolutionIdRel} reads
\begin{align}
\bigcup_{a \in A} \{(\star, a)\}\circ\{(a,\star)\} 
= \bigcup_{a \in A} \{(a, a)\} = \id{A}
\end{align}

\noindent When $Z$ is of the generic form $Z = \bigoplus_{\lambda \in \Lambda} G_\lambda$, the classical states are in the form $\chi = G_\lambda$, and~\eqref{eqn:MultCharResolutionIdRel} reads
\begin{align}
\bigcup_{\lambda \in \Lambda} \{(\star, g')|g': G_\lambda\}\circ\{(g,\star)|g: G_\lambda\} =  \bigcup_{\lambda \in \Lambda} \{(g, g')|g,g': G_\lambda\}
\end{align}
which cannot be the identity if $|G_h|>1$. Therefore the unique classical structures $\hbox{\input{modules/symbols/ZdotSym.tex}}\!$ with classical states forming an orthogonal resolution of the identity are the ones associated to discrete abelian groupoids, in the form $Z=\bigoplus^{A}\mathbb{Z}_1$. If $\hbox{\input{modules/symbols/ZdotSym.tex}}\!$ is one such classical structure, and $\hbox{\input{modules/symbols/XdotSym.tex}}\!$ is strongly complementary to it, then it follows immediately from Theorem \ref{thm_StrongComplementarityRel}, and subsequent remarks, that the abelian groupoid $X$ associated to $\hbox{\input{modules/symbols/XdotSym.tex}}\!$ must be in fact an abelian group, in the form $X = \oplus^{\integersMod{1}} G$, where $G$ is some finite abelian group with element set $A$. But to satisfy the assumptions of Lemma \ref{lemma_FTTraditionalSMC}, we must have that $X$ is also a discrete groupoid, which forces $G \isom \integersMod{1}$ and $A \isom \{\star\}$.
\end{proof}

\noindent So there is no direct parallel in $\fRelCategory$ of the $\fdHilbCategory$ view that the Fourier Transform is a canonical isomorphism $\Ltwo{G} \isom \Ltwo{G^\wedge}$, or of its traditional formulation from Equation \ref{eqn:DefTraditionalFT}. At first sight, this seems to be because the classical states of the two structures of a generic abelian internal group need not form a basis. The question then naturally arises whether restricting our attention to some subclass of states (e.g. those which are linear combinations of the classical states) would lead to a canonical isomorphism similar to the $\fdHilbCategory$ one. Theorem \ref{thm_NoCanonicalIsomRel} below answers this question negatively.

Define the \textbf{span} $\langle s_j \rangle_{j \in J}$ of a family of states $s_j \subseteq A$ to be the set of all states $r \subseteq A$ which can be obtained as the union $r = \cup_{j \in I} s_j$ of some subfamily $(s_j)_{j \in I\subseteq J}$. If the $(s_j)_{j \in J}$ are pairwise disjoint, then the states $r \in \langle s_j \rangle_{j \in J}$ are exactly those that descend to boolean functions over the set $\suchthat{s_j}{j \in J}$:
\begin{align}\label{eqn:spanBooleanFunctions}
    r \in \langle s_j \rangle_{j \in J} \;\; \mapsto \;\; \left(j \mapsto \braket{s_j}{r}\right) \in \mathbb{B}[J]
\end{align}
where we denoted by $\ket{s} : \tensorUnit \rightarrow A$ the state corresponding to subset $s \subseteq A$. Now consider an abelian internal group $\mathbb{G} = (A,\hbox{\input{modules/symbols/ZdotSym.tex}}\!,\hbox{\input{modules/symbols/XdotSym.tex}}\!)$ in $\fRelCategory$, and let $Z = \bigoplus^{|G|}H$ and $X = \bigoplus^{|H|}G$ be the abelian groupoids corresponding to $\hbox{\input{modules/symbols/ZdotSym.tex}}\!$ and $\hbox{\input{modules/symbols/XdotSym.tex}}\!$ respectively. Then $\langle H_g \rangle_{g:G}$, seen as the set $\{0,1\}^G$ of boolean functions on $\suchthat{H_g}{g:G}$, plays the role in $\RelCategory$ that $\Ltwo{G}$ played in $\fdHilbCategory$, and similarly $\langle G_h \rangle_{h:H}$, seen as $\{0,1\}^H$, is the analogue of $\Ltwo{H}$ (and takes the place $\Ltwo{G^\wedge}$ had in $\fdHilbCategory$). 

\begin{theorem}\label{thm_NoCanonicalIsomRel}
Let $\mathbb{G} = (A,\hbox{\input{modules/symbols/ZdotSym.tex}}\!,\hbox{\input{modules/symbols/XdotSym.tex}}\!)$ be an abelian internal group in $\fRelCategory$, and let $Z = \bigoplus^{|G|}H$ and $X = \bigoplus^{|H|}G$ be the abelian groupoids corresponding to $\hbox{\input{modules/symbols/ZdotSym.tex}}\!$ and $\hbox{\input{modules/symbols/XdotSym.tex}}\!$ respectively. Then $\langle H_g \rangle_{g \in G} \cap \langle G_h \rangle_{h \in H} = \{\emptyset, A\}$. In particular, under the correspondence of~\eqref{eqn:spanBooleanFunctions},  the Fourier Transform of $\fRelCategory$ does not restrict to an isomorphism $\{\bot,\top\}^G \isom \{\bot,\top\}^H$, nor can be restricted to an isomorphism $S_G \isom S_H$ for any $S_G \subseteq \{\bot,\top\}^G$ and $S_H \subseteq\{\bot,\top\}^H$ containing non-constant functions. 
\end{theorem}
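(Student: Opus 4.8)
The plan is to work in the coordinates furnished by Theorem \ref{thm_StrongComplementarityRel}, writing $A \isom \suchthat{(h,g)}{h\in H,\ g\in G}$, so that the $\ZdotSym$-classical states are the ``columns'' $H_g = H\times\{g\}$ and the $\XdotSym$-classical states are the ``rows'' $G_h = \{h\}\times G$. Since each of these families is a partition of $A$ into pairwise disjoint blocks, the span $\langle H_g\rangle_{g\in G}$ is exactly the collection of sets $H\times S$ for $S\subseteq G$, and $\langle G_h\rangle_{h\in H}$ is exactly the collection of sets $T\times G$ for $T\subseteq H$. The first assertion thus reduces to the elementary ``rectangle'' claim that a subset of $A$ which is simultaneously of the form $H\times S$ and of the form $T\times G$ is either $\emptyset$ or $A$. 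To see this, suppose $H\times S = T\times G$ is nonempty and pick $(h_0,g_0)$ in it: being of the form $H\times S$ it contains $(h,g_0)$ for every $h\in H$, and being of the form $T\times G$ each such point forces $h\in T$, whence $T=H$; then $T\times G = A$ and $S=G$. This establishes $\langle H_g\rangle_{g\in G}\cap\langle G_h\rangle_{h\in H} = \{\emptyset,A\}$.

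For the ``in particular'' statement I would first record that in $\fRelCategory$ both states $\tensorUnit\to A$ and co-states $A\to\tensorUnit$ are merely subsets of $A$ (the $\dagger$ being relational converse), so $\FourierTransformSym{\mathbb{G}}$ is a bijection of the powerset of $A$ and its image may be compared with $\langle G_h\rangle_{h\in H}$ as a family of subsets. The key step is then to unfold Definition \ref{def_FourierTransform} concretely: writing $\FourierTransformSym{\mathbb{G}}(f) = \XcounitSym\circ\XmultSym\circ(\id{A}\tensor(\antipodeSym\circ f))$, I compute the relation $\XcounitSym\circ\XmultSym$, which pairs $(h,g)$ with its $\XdotSym$-group inverse $(h,-g)$ (the red multiplication acting within each row $G_h$ as the group $G$, with unit the row-identities), and I use Definition \ref{def:Antipode} together with Lemma \ref{lemma_AntipodeInverse} to identify $\antipodeSym$ as the group inverse, which as a bijection of $A$ sends each column $H_g$ to the column $H_{-g}$. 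Unwinding gives $\FourierTransformSym{\mathbb{G}}(f) = \suchthat{(h,g)}{(h,-g)\in\antipodeSym(f)}$, i.e.\ $\FourierTransformSym{\mathbb{G}}$ is induced by a bijection of the set $A$ that preserves the column partition $\{H_g\}_{g\in G}$. Since relational composition distributes over unions, $\FourierTransformSym{\mathbb{G}}$ therefore maps $\langle H_g\rangle_{g\in G}$ into (indeed onto, in fact fixing it pointwise) itself.

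The conclusion then combines the two halves. Suppose $\FourierTransformSym{\mathbb{G}}$ restricted to a bijection $S_G\to S_H$ with $S_G\subseteq\langle H_g\rangle_{g\in G}$ and $S_H\subseteq\langle G_h\rangle_{h\in H}$, under the identifications of \eqref{eqn:spanBooleanFunctions} with $\{\bot,\top\}^G$ and $\{\bot,\top\}^H$. For any $r\in S_G$ its image $\FourierTransformSym{\mathbb{G}}(r)$ lies in $S_H\subseteq\langle G_h\rangle_{h\in H}$ by hypothesis and in $\langle H_g\rangle_{g\in G}$ by the previous paragraph, hence lies in $\langle H_g\rangle_{g\in G}\cap\langle G_h\rangle_{h\in H} = \{\emptyset,A\}$. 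As $\FourierTransformSym{\mathbb{G}}$ is a bijection sending $\emptyset\mapsto\emptyset$ and $A\mapsto A$, injectivity forces $r\in\{\emptyset,A\}$, so $r$ is constant; thus $S_G$ contains no non-constant function, the required contradiction. Taking $S_G=\{\bot,\top\}^G$ and $S_H=\{\bot,\top\}^H$ (which contain non-constant functions as soon as $|G|,|H|\geq 2$, the remaining cases being degenerate by Lemma \ref{thm_partitionIdentityRel}) yields the first non-existence statement as a special case.

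I expect the rectangle argument and the final deduction to be entirely routine; the genuine work — and the main obstacle — is the middle paragraph, namely carrying out the $\fRelCategory$ computation of $\FourierTransformSym{\mathbb{G}}$ carefully enough to conclude that it is carried by a \emph{column-preserving} set bijection. Concretely, one must verify that the antipode permutes the $\ZdotSym$-classical states among themselves (rather than mixing columns with rows), for which the explicit groupoid description of $\antipodeSym$ on $A\isom H\times G$ should be extracted from Definition \ref{def:Antipode}; this is where any lingering ambiguity in the colour/role conventions would need to be resolved.
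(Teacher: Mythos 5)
Your proposal is correct and follows essentially the same route as the paper: your rectangle argument for $\langle H_g \rangle_{g \in G} \cap \langle G_h \rangle_{h \in H} = \{\emptyset, A\}$ is identical to the paper's first paragraph, and your concluding deduction matches its second. The one place you go beyond the paper is your middle paragraph: the paper merely asserts that the Fourier transform of a non-constant element of $\langle H_g \rangle_{g \in G}$ falls outside $\langle G_h \rangle_{h \in H}$, whereas you justify this by computing $\FourierTransformSym{\mathbb{G}}$ explicitly in $\fRelCategory$ and showing it is induced by a bijection of $A$ preserving the column partition --- a worthwhile addition, and one that is robust to the paper's ambiguous colour conventions, since under either reading of which structure enters Definition \ref{def_FourierTransform} the induced bijection of $A$ carries columns to columns and rows to rows.
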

\begin{proof}
Let $r \in \langle H_g \rangle_{g\in G} \cap \langle G_h \rangle_{h \in H}$, then $r$ is either the empty set or it contains some $(h',g') \in A$. But if $(h',g') \in r$, then for all $g \in G$ we have $(h',g) \in r$ (because $r \in \langle G_h \rangle_{h:H}$), and then for any $g \in G$ we have that for all $h \in H$ $(h,g) \in r$ (because  $r \in \langle H_g \rangle_{g \in G}$). Thus for all $g \in G$ and $h \in H$ we have $(h,g) \in r$, i.e. $r = A$. The state $r = \emptyset$ in $\langle H_g \rangle_{g\in G} \cap \langle G_h \rangle_{h \in H}$ corresponds the constant $\bot$ function in $\{\bot,\top\}^H$ and in $\{\bot,\top\}^G$ (under Equation \ref{eqn:spanBooleanFunctions}), while the state $r = A$ corresponds to the constant $\top$ function. The Fourier Transform maps the constant $\bot$ function of $\{\bot,\top\}^G$ to the constant $\bot$ function of $\{\bot,\top\}^H$, and similarly with the constant $\top$ functions. However, if $r \in \langle H_g \rangle_{g\in G}$ is neither empty nor the whole of $A$, i.e. if it corresponds to a non-constant function in $\{\bot,\top\}^G$, then its Fourier Transform does not lie in the span $\langle G_h \rangle_{h \in H}$, and thus cannot be seen as a function in $\{\bot,\top\}^H$.
\end{proof}

\noindent As a consequence of Theorem \ref{thm_NoCanonicalIsomRel}, the best analogue in $\fRelCategory$ of the statement that the Fourier Transform is a canonical isomorphism $\Ltwo{G} \isom \Ltwo{G^\wedge}$ in $\fdHilbCategory$ is the trivial statement that the Fourier Transform in $\fRelCategory$ is an isomorphism $\{\bot_G,\top_G\} \isom \{\bot_H,\top_H\}$ between the constant functions of $\{\bot,\top\}^G$ and of $\{\bot,\top\}^H$.\\

\noindent To summarise, in $\fdHilbCategory$ we have the following views of the Fourier transform for abelian internal groups:
\begin{enumerate}
\item[1.] As quantum Fourier Transform, implemented by application of a Hadamard transform (subject to a non-canonical choice of isomorphism $G \isom G^\wedge$) followed by operations in the computational basis.
\item[2.] In the sense of Pontryagin duality, as a canonical isomorphism $\Ltwo{G} \isom \Ltwo{G^\wedge}$.
\item[3.] Again as quantum Fourier Transform, but implemented without Hadamard transform by operating in the strongly complementary basis of multiplicative characters.
\end{enumerate}

\noindent In $\RelCategory$, on the other hand, things are very different:
\begin{enumerate}
\item[1.] Except in the case where $Z = \bigoplus^{|G|}G$ and $X = \bigoplus^{|G|}G$ (isomorphic, but different), no Hadamard transform can exist in $\RelCategory$.
\item[2.] The Fourier transform does not give, in $\RelCategory$, an isomorphism $\{0,1\}^G \isom \{0,1\}^H$ between the spaces of boolean-valued functions on the group elements / multiplicative characters.
\item[3.] The operational definition based on strong complementarity, however, is still valid in $\fRelCategory$. 
\end{enumerate}




\noindent It is part of the process theoretic programme that the operational features of quantum theory should be modelled categorically, and that any category sharing features with $\fdHilbCategory$ should be considered, at least in principle, as a potential (toy?) model of quantum mechanics. The category $\fRelCategory$ is an example of one such model. We have shown that Hadamard matrices do not generalise well outside $\fdHilbCategory$, and certainly fail to implement a Quantum Fourier Transform in $\RelCategory$, but that our treatment of Fourier theory based on strong complementarity goes through unharmed. As a consequence, categorical quantum algorithms where the Quantum Fourier Transform is formulated using strong complementarity will straightforwardly generalise to $\RelCategory$ and other categories in the CQM programme. We take this to be an indication that our perspective on the Quantum Fourier Transform is conceptually sound and operationally advantageous.

\section{Enriched Pontryagin Duality}
\label{section_EnrichedPontryaginDuality}

\noindent While our Definition \ref{def_FourierTransform} of the Fourier Transform holds for any internal group in any $\dagger$-SMC, we have seen that the it need not correspond to the traditional formulation from Equation \ref{eqn:DefTraditionalFT}, even in the case of distributively $\ComMonCategory$-enriched categories where the summation is well defined. We have seen that one of the reasons this may fail is that the classical states of $\dagger$-qSFAs (and even of $\dagger$-qSCFAs, as $\fRelCategory$ shows) need not form a basis, and hence need not give the resolution of the identity / partition of the counit required to obtain the formulation in Equation \ref{eqn:DefTraditionalFT} via Lemma \ref{lemma_FTTraditionalSMC}.

Now consider an abelian internal group  $\mathbb{G} = (\,\SpaceG,\hbox{\input{modules/symbols/ZdotSym.tex}}\!,\hbox{\input{modules/symbols/XdotSym.tex}}\!)$ in $\fdHilbCategory$, with $\hbox{\input{modules/symbols/XdotSym.tex}}\!$ acting on the $\hbox{\input{modules/symbols/ZdotSym.tex}}\!$-classical states as the finite abelian group $G$: the fact that in $\fdHilbCategory$ the classical states of $\dagger$-qSCFAs form orthogonal bases is in turn related to the identification of $\Hom{\fdHilbCategory}{\complexs}{\SpaceG}$ with $\Ltwo{G}$ and of $\Hom{\fdHilbCategory}{\SpaceG}{\complexs}$ with $\Ltwo{G^\wedge}$. The categorical Pontryagin duality of Theorem \ref{thm_CategoricalPontryaginDuality} states that the Fourier Transform is a bijection between a set of states and a set of co-states: in order to finally turn this into a statement about $\LtwoSym$-spaces (or an appropriate generalisation), we now look a little further into enrichment.\\

\newcommand{\HomsetFunctor}[1]{\Hom{#1}{\emptyArg}{\emptyArg}}
\newcommand{\EnrichedHomsetFunctor}[2]{\Hom{#1}{\emptyArg}{\emptyArg}^{(#2)}}

\noindent If $\CategoryC$ is any category, then the statement that the morphisms between any two objects $\SpaceH,\SpaceH'$ form a set can be made categorically significant by considering the Hom functor:
\begin{equation}\label{eqn:Homfunctor}
    \hbox{\input{modules/pictures/Homfunctor.tex}}
\end{equation}

If $\CategoryC$ is a $\dagger$-SMC the Hom functor comes with natural transformations relating it to the additional structure:
\begin{enumerate}

\item[1.] There is a functor $\dagger: \CategoryC \rightarrow \CategoryC^{op}$ acting as the identity on objects and as the dagger on morphisms. The relation between the dagger and the Hom functor is encoded by a natural transformation $\eta^{[\dagger]}: \HomsetFunctor{\CategoryC} \rightarrow \HomsetFunctor{\CategoryC} \cdot swap \circ (\dagger \times \dagger)$, where $swap$ is the functor $\CategoryC \times \CategoryC^{op} \rightarrow \CategoryC^{op} \times \CategoryC$ swapping the pair: 
\begin{equation}
\eta^{[\dagger]}_{(\SpaceH,\SpaceH')} = f \mapsto f^\dagger
\end{equation}

\item[2.] For each morphism $h : \SpaceG \rightarrow \SpaceG'$ in $\CategoryC$, there is a functor $_h F: \CategoryC^{op} \times \CategoryC \rightarrow \CategoryC^{op} \times \CategoryC$ defined as:
\begin{align}
(\SpaceH,\SpaceH') &\mapsto (\SpaceG \tensor \SpaceH, \SpaceG' \tensor \SpaceH') \text{ on objects}\\
(f,f') & \mapsto (\id{\SpaceG} \tensor f, \id{\SpaceG'} \tensor f') \text{ on morphisms}
\end{align}
Similarly, there is a functor $F_h: \CategoryC^{op} \times \CategoryC \rightarrow \CategoryC^{op} \times \CategoryC$ that applies the tensor product on the right. The relation between $\tensor$ and the Hom functor is encoded by a natural transformation $\eta^{[h \tensor \emptyArg]}: \HomsetFunctor{\CategoryC} \rightarrow \HomsetFunctor{\CategoryC} \cdot \; _h F$  given by:
\begin{equation}
\eta^{[h \tensor \emptyArg]}_{(\SpaceH,\SpaceH')} = f \mapsto h \tensor f
\end{equation}
Similarly, there is a natural transformation $\eta^{[\emptyArg \tensor h]}: \HomsetFunctor{\CategoryC} \rightarrow \HomsetFunctor{\CategoryC} \circ F_h$.

\item[3.] There are a number of additional conditions, e.g. the consistency condition on composition and tensoring $\eta^{\left[(a \circ b)\tensor \emptyArg\right]} = (\eta^{[a \tensor \emptyArg]} \; _b F) \circ \eta^{[b \tensor \emptyArg]}$.

\end{enumerate}

\noindent If the category $\CategoryC$ is enriched over some other category $\CategoryD$, then the Hom functor factorises as $\HomsetFunctor{\CategoryC} = U \circ \EnrichedHomsetFunctor{\CategoryC}{\CategoryD}$, where $\EnrichedHomsetFunctor{\CategoryC}{\CategoryD}: \CategoryC^{op}\times \CategoryC \rightarrow \CategoryD$ is the \textbf{enriched Hom functor} and $U : \CategoryD \rightarrow \SetCategory$ if a faithful functor making $\CategoryD$ into a concrete category. If $\CategoryC$ is a $\dagger$-SMC, on also requires the existence of appropriate natural transformations relating the enriched Hom functor to the dagger and tensor structures.\\

\noindent Our current statement of categorical Pontryagin duality, as formalised in Theorem \ref{thm_CategoricalPontryaginDuality}, goes as follows: the Fourier Transform for an internal group $\mathbb{G} = (\,\SpaceG,\hbox{\input{modules/symbols/XdotSym.tex}}\!,\hbox{\input{modules/symbols/ZdotSym.tex}}\!)$ in a $\dagger$-SMC $\CategoryC$ is a bijection between the set of states $\Hom{\CategoryC}{\tensorUnit}{\SpaceG}$ and the set of co-states $\Hom{\CategoryC}{\SpaceG}{\tensorUnit}$, which is furthermore canonical in the sense that: 
\begin{equation}
^{\tiny{(\varphi^\wedge)}}M \circ \FourierTransformSym{\mathbb{H}} \circ _{\tiny{\varphi}\!}M = \FourierTransformSym{\mathbb{G}}
\end{equation}
where $\varphi: \mathbb{G} \rightarrow \mathbb{H}$ is any isomorphims of internal groups in $\CategoryC$, for  $\mathbb{H} = (\,\SpaceH,\hbox{\input{modules/symbols/XaltdotSym.tex}}\!,\hbox{\input{modules/symbols/ZaltdotSym.tex}}\!)$ any other internal group of $\CategoryC$. The morphisms of sets $_{\varphi\!}M$ and $^{(\varphi^\wedge)}M$ can now be recognised to be $\Hom{\CategoryC}{\id{\tensorUnit}}{\varphi}$ and $\Hom{\CategoryC}{\varphi}{\id{\tensorUnit}}$ respectively. We are now ready to present the enriched version of Theorem \ref{thm_CategoricalPontryaginDuality}.

\begin{theorem}[Enriched Pontryagin Duality]\label{thm_EnrichedPontryaginDuality}
Let $\mathbb{G} = (\,\SpaceG,\hbox{\input{modules/symbols/XdotSym.tex}}\!,\hbox{\input{modules/symbols/ZdotSym.tex}}\!)$ be an internal group in a $\dagger$-SMC $\CategoryC$ which is enriched over some category $\CategoryD$, made concrete by the faithful functor $U: \CategoryD \rightarrow \SetCategory$. Then the Fourier Transform $\FourierTransformSym{\mathbb{G}}$ is an isomorphism in $\CategoryD$ between states of $\mathbb{G}$ (the object $\Hom{\CategoryC}{\tensorUnit}{\SpaceG}^{(\CategoryD)}$) and states of $\mathbb{G}^\wedge$ (the object $\Hom{\CategoryC}{\SpaceG}{\tensorUnit}^{(\CategoryD)}$), which is furthermore canonical in the sense that:
\begin{equation}
\Hom{\CategoryC}{\varphi}{\id{\tensorUnit}}^{(\CategoryD)} \circ \FourierTransformSym{\mathbb{H}} \circ \Hom{\CategoryC}{\id{\tensorUnit}}{\varphi}^{(\CategoryD)} = \FourierTransformSym{\mathbb{G}}
\end{equation}
where $\varphi: \mathbb{G} \rightarrow \mathbb{H}$ is any isomorphims of internal groups in $\CategoryC$,  $\mathbb{H} = (\,\SpaceH,\hbox{\input{modules/symbols/XaltdotSym.tex}}\!,\hbox{\input{modules/symbols/ZaltdotSym.tex}}\!)$ any other internal group of $\CategoryC$.
\end{theorem}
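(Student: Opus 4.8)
The plan is to observe that all the genuinely new content of this statement over its $\SetCategory$-level counterparts --- Theorem~\ref{thm_CategoricalFourierInversion} (inversion) and Theorem~\ref{thm_CategoricalPontryaginDuality} (canonicity) --- is the promotion of those two facts from $\SetCategory$ to $\CategoryD$. The engine for this promotion is faithfulness of $U: \CategoryD \rightarrow \SetCategory$: any equation between parallel $\CategoryD$-morphisms holds in $\CategoryD$ as soon as it holds after applying $U$. Moreover, since $U$ is the concretisation underlying the enrichment, it sends each object $\Hom{\CategoryC}{\tensorUnit}{\SpaceG}^{(\CategoryD)}$ to the actual hom-set $\Hom{\CategoryC}{\tensorUnit}{\SpaceG}$ and sends the $\CategoryD$-morphism $\FourierTransformSym{\mathbb{G}}$ to the set-map of Definition~\ref{def_FourierTransform}. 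So the whole argument reduces to two tasks: (a) exhibit $\FourierTransformSym{\mathbb{G}}$, $\InverseFourierTransformSym{\mathbb{G}}$, and the two change-of-variable maps as bona fide $\CategoryD$-morphisms, and (b) read the required equations off the already-proven set-level statements and transport them back through $U$.

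For task (a), first I would factor the Fourier Transform through the enriched Hom functor $\EnrichedHomsetFunctor{\CategoryC}{\CategoryD}$ and the structural natural transformations. Unfolding Definition~\ref{def_FourierTransform}, the co-state $\hat{f}$ equals $\XcounitSym \circ \XmultSym \circ (\id{\SpaceG} \tensor (\antipodeSym \circ f))$, so as an operation on (co)states the Fourier Transform decomposes as
\begin{equation*}
\FourierTransformSym{\mathbb{G}} \;=\; \Hom{\CategoryC}{\id{\SpaceG}}{\XcounitSym \circ \XmultSym}^{(\CategoryD)} \;\circ\; \eta^{[\id{\SpaceG} \tensor \emptyArg]} \;\circ\; \Hom{\CategoryC}{\id{\tensorUnit}}{\antipodeSym}^{(\CategoryD)},
\end{equation*}
where the outer two factors are images of morphisms under the (contravariant/covariant) functor $\EnrichedHomsetFunctor{\CategoryC}{\CategoryD}$, and the middle factor is the enriched refinement of the natural transformation $\eta^{[h \tensor \emptyArg]}$ of the preceding discussion evaluated at $h = \id{\SpaceG}$ (with the unitor isomorphism $\SpaceG \tensor \tensorUnit \isom \SpaceG$ absorbed into it). Each factor is a morphism of $\CategoryD$, so the composite is too. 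The inverse Fourier Transform admits an entirely analogous factorisation, now built from $\ZcomultSym$, $\ZunitSym$ and the right-tensoring transformation $\eta^{[\emptyArg \tensor \id{\SpaceG}]}$, and is therefore likewise a $\CategoryD$-morphism.

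For task (b), isomorphism follows at once: by Theorem~\ref{thm_CategoricalFourierInversion} the $U$-images of $\InverseFourierTransformSym{\mathbb{G}} \circ \FourierTransformSym{\mathbb{G}}$ and $\FourierTransformSym{\mathbb{G}} \circ \InverseFourierTransformSym{\mathbb{G}}$ are the identity set-maps, hence agree with the $U$-images of the respective identities, so faithfulness of $U$ forces both composites to be identities in $\CategoryD$. Canonicity is identical in spirit: the change-of-variable maps $\Hom{\CategoryC}{\varphi}{\id{\tensorUnit}}^{(\CategoryD)}$ and $\Hom{\CategoryC}{\id{\tensorUnit}}{\varphi}^{(\CategoryD)}$ are $\CategoryD$-morphisms by functoriality of $\EnrichedHomsetFunctor{\CategoryC}{\CategoryD}$, both sides of the asserted equation are then parallel $\CategoryD$-morphisms, and applying $U$ yields precisely Equation~\eqref{eqn:PontDualityCanonProof} established in Theorem~\ref{thm_CategoricalPontryaginDuality}; faithfulness lifts that equality back to $\CategoryD$. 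The main obstacle I anticipate lies entirely in task (a): one must check that the tensoring step in the Fourier Transform really is captured by the enriched analogue of $\eta^{[h \tensor \emptyArg]}$ --- that is, that the enrichment genuinely comes equipped with the $\CategoryD$-valued refinements of the dagger/tensor natural transformations flagged but not spelled out in the discussion preceding this theorem, and that the unitor bookkeeping identifying $\SpaceG \tensor \tensorUnit$ with $\SpaceG$ is itself realised inside $\CategoryD$. Once that structural hypothesis is made precise, everything else is formal.
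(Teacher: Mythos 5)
Your proposal is correct and follows essentially the same route as the paper: the paper likewise factors $\FourierTransformSym{\mathbb{G}}$ (and its inverse) through the tensoring natural transformation $\eta^{[\id{\SpaceG}\tensor\emptyArg]}$ composed with the enriched Hom functor applied to structural $\CategoryC$-morphisms (the paper bundles $\XcounitSym\circ\XmultSym\circ(\id{\SpaceG}\tensor\antipodeSym)$ and the unitor into a single map $FT$ rather than splitting off the antipode as you do, a cosmetic difference), and then lifts the inversion and canonicity equations from $\SetCategory$ to $\CategoryD$ by faithfulness of $U$, exactly as in your task (b). The structural hypothesis you flag — that the enrichment comes with $\CategoryD$-valued refinements of the tensoring transformations — is indeed assumed in the paper's discussion preceding the theorem rather than proved, so your caveat matches the paper's treatment.
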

\begin{proof}
\newcommand{\EnrichedFourierTransform}{\mathcal{F}_{\mathbb{G}}}
\newcommand{\EnrichedInverseFourierTransform}{\mathcal{F}^{-1}_{\mathbb{G}}}
The morphisms $\EnrichedFourierTransform: \Hom{\CategoryC}{\tensorUnit}{\SpaceG}^{(\CategoryD)} \rightarrow \Hom{\CategoryC}{\SpaceG}{\tensorUnit}^{(\CategoryD)}$ and $\EnrichedInverseFourierTransform: \Hom{\CategoryC}{\SpaceG}{\tensorUnit}^{(\CategoryD)} \rightarrow \Hom{\CategoryC}{\tensorUnit}{\SpaceG}^{(\CategoryD)}$ in $\CategoryD$, corresponding to the Fourier Transform and the Inverse Fourier Transform, are given by Diagrams \ref{eqn:FTEnrichedDuality} and \ref{eqn:FTinvEnrichedDuality} below, where we defined:
\begin{align}
    FT := \XcounitSym \circ \XmultSym \circ \left( \id{\SpaceG} \tensor \hbox{\input{modules/symbols/antipodeSym.tex}}\! \right) &: \SpaceG \tensor \SpaceG \rightarrow \tensorUnit\\
    FT^{-1} := \ZcomultSym \circ \ZunitSym &: \tensorUnit \rightarrow \SpaceG \tensor \SpaceG\\
    \rho_\SpaceG \text{ to be the right unitor } &: \SpaceG \tensor \tensorUnit \rightarrow \SpaceG 
\end{align}

\begin{equation}\label{eqn:FTEnrichedDuality}
    \hbox{\input{modules/pictures/FTEnrichedDuality.tex}}
\end{equation}

\begin{equation}\label{eqn:FTinvEnrichedDuality}
    \hbox{\input{modules/pictures/FTinvEnrichedDuality.tex}}
\end{equation}

\noindent The Categorical Fourier Inversion Theorem \ref{thm_CategoricalFourierInversion} implies that the composites $\EnrichedFourierTransform \circ \EnrichedInverseFourierTransform$ and $\EnrichedInverseFourierTransform \circ \EnrichedFourierTransform$ get mapped by $U$ to identities in $\SetCategory$: but $U$ is faithful, and thus the composites themselves must be identities in $\CategoryD$, proving $\EnrichedFourierTransform$ and $\EnrichedInverseFourierTransform$ to be isomorphisms. Similarly, canonicity can be proven through equalities in $\SetCategory$ as done in Theorem \ref{thm_CategoricalFourierInversion}, and the equalities can be lifted by faithfulness of $U$. 
\end{proof}

\noindent Now consider a $\dagger$-SMC $\CategoryC$ which is distributively $\ComMonCategory$-enriched, i.e. enriched over the category of finite commutative monoids. The scalars of $\CategoryC$ form a semiring $(R,+,0,\tensor,1)$, where:
\begin{enumerate}
\item[a.] $(R,+,0)$ is the commutative monoid structure on $\Hom{\CategoryC}{\tensorUnit}{\tensorUnit}$ coming from the $\ComMonCategory$-enrichment.
\item[b.]  $(R,\tensor,1)$ is the monoid structure on $\Hom{\CategoryC}{\tensorUnit}{\tensorUnit}$ coming from the symmetric monoidal structure (with $1 = \id{\tensorUnit}$).
\end{enumerate}

\noindent The category $\CategoryC$ can then be enriched over the category of $R$-modules by considering scalar multiplication, in addition to the already existing commutative monoid structure on Homsets given by the $\ComMonCategory$-enrichment. Scalar multiplication is defined as sending scalar $a \in R$ and morphism $f \in \Hom{\CategoryC}{\SpaceG}{\SpaceH}$ to morphism $af  \in \Hom{\CategoryC}{\SpaceG}{\SpaceH}$ defined as follows, where $\lambda_\SpaceG,\Lambda_\SpaceH$ are the left unitors:
\begin{equation}
af := \lambda_\SpaceH \circ \left( a \tensor f \right) \circ \lambda_\SpaceG^{-1} 
\end{equation}

\begin{theorem}[Fourier Transform and $R$-modules]\label{thm_FTRmodules}
Let $\mathbb{G} = (\,\SpaceG,\hbox{\input{modules/symbols/XdotSym.tex}}\!,\hbox{\input{modules/symbols/ZdotSym.tex}}\!)$ be an internal group in a distributively $\ComMonCategory$-enriched $\dagger$-SMC $\CategoryC$. Assume that the families of group elements and multiplicative characters, call them $\ket{g}_{g \in G_{\hbox{\input{modules/symbols/ZdotSym.tex}}\!}}$ and $\ket{\chi}_{\chi \in G_{\hbox{\input{modules/symbols/XdotSym.tex}}\!}}$ respectively, are both finite, normalisable and form orthogonal resolutions of the identity. Let $(G_{\hbox{\input{modules/symbols/ZdotSym.tex}}\!},\XmultSym,\XunitSym)$ and $(G_{\hbox{\input{modules/symbols/XdotSym.tex}}\!},\ZcomultSym,\ZcounitSym)$ be the finite (necessarily abelian) groups given, respectively, by $(\XmultSym,\XunitSym)$ acting on the group elements and $(\ZcomultSym,\ZcounitSym)$ acting on the multiplicative characters. Then:
\begin{enumerate}
\item[(i)] The $\hbox{\input{modules/symbols/ZdotSym.tex}}\!$ structure (i.e. the basis of group elements) induces a canonical isomorphism of $R$-modules between $\Hom{\CategoryC}{\tensorUnit}{\SpaceG}$ and the free $R$-module $R[G_{\hbox{\input{modules/symbols/ZdotSym.tex}}\!}]$.
\item[(ii)] The $\hbox{\input{modules/symbols/XdotSym.tex}}\!$ structure (i.e. the basis of multiplicative characters) induces a canonical isomorphism of $R$-modules between $\Hom{\CategoryC}{\SpaceG}{\tensorUnit}$ and the free $R$-module $R[G_{\hbox{\input{modules/symbols/XdotSym.tex}}\!}]$.
\end{enumerate}
Under these premises, the Fourier Transform can be seen as the following canonical isomorphism:
\begin{equation}
\mathcal{F}_{\mathbb{G}}: R[G_{\hbox{\input{modules/symbols/ZdotSym.tex}}\!}] \stackrel{\isom}{\longrightarrow} R[G_{\hbox{\input{modules/symbols/XdotSym.tex}}\!}]
\end{equation}
\end{theorem}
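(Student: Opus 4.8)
The plan is to recognise this theorem as the specialisation of Enriched Pontryagin Duality (Theorem \ref{thm_EnrichedPontryaginDuality}) to the enriching category $\CategoryD$ of $R$-modules, supplemented by an explicit identification of the two enriched hom-objects with free modules. As noted in the discussion preceding the theorem, a distributively $\ComMonCategory$-enriched $\dagger$-SMC $\CategoryC$ is automatically enriched over the category of $R$-modules, where $R = \Hom{\CategoryC}{\tensorUnit}{\tensorUnit}$ is the scalar semiring; in particular both $\Hom{\CategoryC}{\tensorUnit}{\SpaceG}$ and $\Hom{\CategoryC}{\SpaceG}{\tensorUnit}$ carry canonical $R$-module structures. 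The whole task then reduces to establishing (i) and (ii) and reading off the Fourier Transform as a map of free modules.

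For (i), I would define the $R$-linear map $\Psi \colon R[G_{\ZdotSym}] \to \Hom{\CategoryC}{\tensorUnit}{\SpaceG}$ sending each free generator $g$ to the group element $\ket{g}$ and extending by $R$-linearity, which is well-defined precisely because the source is free. Surjectivity follows immediately from the hypothesis that the $\ket{g}_{g}$ form an orthogonal resolution of the identity: by Lemma \ref{lemma_BasisResolutionPartition} this is equivalent to their being a basis, and the resolution of the identity gives $\ket{f} = \frac{1}{N}\sum_g \ket{g}\braket{g}{f}$ for every state $\ket{f}$, exhibiting it as an $R$-combination of the $\ket{g}$. For injectivity I would post-compose a putative relation $\sum_g a_g \ket{g} = 0$ with each co-state $\bra{g'}$; orthogonality annihilates all but one term and normalisability ($\braket{g'}{g'} = N$ invertible) forces $a_{g'} = 0$. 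Hence $\Psi$ is an $R$-module isomorphism built canonically from the $\ZdotSym$ structure. Part (ii) is the formal dual, replacing states by co-states, group elements by multiplicative characters, and the resolution of the identity by the partition of the counit $\XcounitSym$; here the $\XdotSym$ structure supplies the co-basis and the same orthogonality/normalisability argument applies.

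For the Fourier Transform itself, I would invoke Theorem \ref{thm_EnrichedPontryaginDuality} directly with $\CategoryD$ the category of $R$-modules: it already presents $\FourierTransformSym{\mathbb{G}}$ as a composite of the natural transformation $\eta^{[\id{\SpaceG} \tensor \emptyArg]}$ and an enriched-Hom morphism, all of which are morphisms of $R$-modules, while Theorem \ref{thm_CategoricalFourierInversion} guarantees the composite is invertible. Transporting this isomorphism along the identifications of (i) and (ii) yields the claimed $R$-module isomorphism $\FourierTransformSym{\mathbb{G}} \colon R[G_{\ZdotSym}] \stackrel{\isom}{\longrightarrow} R[G_{\XdotSym}]$, and its canonicity is inherited verbatim from the canonicity clause of Theorem \ref{thm_EnrichedPontryaginDuality} (equivalently Theorem \ref{thm_CategoricalPontryaginDuality}).

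The main obstacle I anticipate is not the bijectivity, which is a clean two-line basis argument, but the bookkeeping needed to confirm that the reconstruction and coefficient-extraction maps are genuinely $R$-\emph{linear} rather than merely additive: one must check that the enriched scalar multiplication $af = \lambda_{\SpaceH} \circ (a \tensor f) \circ \lambda_{\SpaceG}^{-1}$ commutes past composition with the fixed co-states $\bra{g}$ and the states $\ket{g}$, which is exactly where the distributivity laws and the naturality of the unitors are used. Care is also required to ensure $R[G_{\ZdotSym}]$ really is the \emph{free} module with no spurious relations, which is precisely the content of the injectivity argument above; freeness, not merely spanning, is what the orthogonality hypothesis buys us.
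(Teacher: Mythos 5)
Your proposal follows essentially the same route as the paper's own (much terser) proof: $R$-module enrichment of $\CategoryC$, identification of $\Hom{\CategoryC}{\tensorUnit}{\SpaceG}$ and $\Hom{\CategoryC}{\SpaceG}{\tensorUnit}$ with the free modules $R[G_{\ZdotSym}]$ and $R[G_{\XdotSym}]$ via the two orthogonal resolutions of the identity, then an appeal to Theorem \ref{thm_EnrichedPontryaginDuality} and composition of the three isomorphisms. Your parts (i) and (ii) are in fact more explicit than the paper, which asserts the free-module identifications without argument.

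One caveat on your injectivity step. Since $R$ is only a semiring (the setting deliberately includes cases such as the Booleans, where addition is idempotent and there are no additive inverses), showing that a relation $\sum_g a_g \ket{g} = 0$ forces all $a_g = 0$ does not by itself yield injectivity: trivial kernel and injectivity are inequivalent for modules over a semiring. The repair is the same computation applied to an arbitrary pair of coefficient families: from $\sum_g a_g \ket{g} = \sum_g b_g \ket{g}$, post-composition with $\bra{g'}$ and orthogonality give $a_{g'} N = b_{g'} N$, whence $a_{g'} = b_{g'}$ by invertibility of $N$. With injectivity phrased this way (and the analogous adjustment in part (ii)), your argument is complete and matches the paper's.
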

\begin{proof}
Enrichment over $R$-modules goes as per previous discussion. The classical states of $\hbox{\input{modules/symbols/ZdotSym.tex}}\!$, the group elements, form an orthogonal resolution of the identity, which can be used to canonically endow $\Hom{\CategoryC}{\tensorUnit}{\SpaceG}$ with the $R$-module structure of $R[G_{\hbox{\input{modules/symbols/ZdotSym.tex}}\!}]$; similarly for $\hbox{\input{modules/symbols/XdotSym.tex}}\!$, multiplicative characters and $\Hom{\CategoryC}{\SpaceG}{\tensorUnit}$. Both Theorem \ref{thm_EnrichedPontryaginDuality} gives the canonical isomorphism of $R$-modules between $\Hom{\CategoryC}{\tensorUnit}{\SpaceG}$ and $\Hom{\CategoryC}{\SpaceG}{\tensorUnit}$, and composing with the two isomorphisms above yields the desired result.
\end{proof}

\noindent In the case of $\fdHilbCategory$, the enrichment is over the category of $\complexs$-modules, and in fact it falls within the sub-category of finite-dimensional complex Hilbert spaces. Furthermore, if $\mathbb{G}$ is an abelian internal group, then the families of classical states satisfy the required conditions for Theorem~\ref{thm_FTRmodules}; also, we have $G_{\hbox{\input{modules/symbols/ZdotSym.tex}}\!} = G^\wedge$, where we defined $G := G_{\hbox{\input{modules/symbols/ZdotSym.tex}}\!}$. This (finally) gives us back the usual statement of Pontryagin duality in terms of $\LtwoSym$-spaces:
\begin{equation}
\mathcal{F}_{\mathbb{G}}: \Ltwo{G} \stackrel{\isom}{\longrightarrow} \Ltwo{G^\wedge}
\end{equation}

\section{Non-abelian Fourier Transform}
\label{section_NonAbelianFourierTransform}

\noindent In $\fdHilbCategory$, abelian internal groups satisfy the assumptions of Lemma \ref{lemma_FTTraditionalSMC}, and the corresponding Fourier Transform can be seen, via enrichment, as a canonical isomorphism of $\LtwoSym$-spaces $\Ltwo{G} \isom \Ltwo{G^\wedge}$. Non-abelian internal groups in $\fdHilbCategory$, however, fail those assumptions, as we will see that the classical states of the group structure never form a basis. However, it is a consequence of the Peter-Weyl theorem that the irreducible representations can be used to obtain a resolution of the identity. In this section, we generalise our treatment of internal groups from multiplicative characters to full-blown representation theory, concluding with a formulation of non-abelian Fourier Transform connected with the Gelfand--Naimark--Segal construction.

First, let's see why non-abelian internal groups in $\fdHilbCategory$ cannot satisfy the assumptions of Lemma \ref{lemma_FTTraditionalSMC}. The classical states of the point structure form an orthogonal basis, and are the elements of some non-abelian group $G$. The classical states of the group structures are the multiplicative characters of $G$: they are always orthogonal and normalisable, but as long as we show that there are less of them than the number of elements of $G$, we can conclude that they won't form a co-basis as would be required by Lemma \ref{lemma_FTTraditionalSMC}. But the multiplicative characters of a group $G$ are the same as the multiplicative character of its abelianization $G'$, which is always strictly smaller than $G$ (at least by a factor of 2). Therefore there are always strictly less multiplicative characters than group elements for a non-abelian internal group in $\fdHilbCategory$, and hence the assumptions of Lemma \ref{lemma_FTTraditionalSMC} always fail. It turns out that this is not restricted to $\fdHilbCategory$:

\begin{theorem}
Let $\mathbb{G} = (\,\SpaceG,\hbox{\input{modules/symbols/XdotSym.tex}}\!,\hbox{\input{modules/symbols/ZdotSym.tex}}\!)$ be an internal group in a $\dagger$-SMC $\CategoryC$. If the classical states of $\hbox{\input{modules/symbols/XdotSym.tex}}\!$ form a basis, then $\mathbb{G}$ is necessarily an abelian internal group.
\end{theorem}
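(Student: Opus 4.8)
The plan is to prove that the group multiplication is commutative, that is, that $\XmultSym = \XmultSym \circ \sigma$ as morphisms $\SpaceG \tensor \SpaceG \to \SpaceG$, where $\sigma$ denotes the symmetry $\SpaceG \tensor \SpaceG \to \SpaceG \tensor \SpaceG$; this is exactly the assertion that $\mathbb{G}$ is abelian. The whole strategy is to obtain this equality of morphisms from the fact that it holds after post-composition with each multiplicative character, using the hypothesis only to supply the completeness needed to cancel the characters. First I would invoke Lemma~\ref{thm_AbCopiablesMultiplicativeCharacters} to identify the classical states of $\XdotSym$ appearing in the hypothesis with the adjoints $\ket{\chi} = \bra{\chi}^\dagger$ of the multiplicative characters of $\mathbb{G}$; the hypothesis then reads that the family $\{\ket{\chi}\}_\chi$ is a basis for $\SpaceG$ in the sense of Definition~\ref{def:basis}, so that in particular its completeness clause~(ii) is at my disposal.

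The key observation is that every multiplicative character already ``sees'' the group as abelian. Indeed, the defining equation~\eqref{eqn:MultCharDef} of a multiplicative character is $\bra{\chi} \circ \XmultSym = \bra{\chi} \tensor \bra{\chi}$, and its right-hand side is manifestly invariant under $\sigma$, since interchanging two copies of one and the same co-state changes nothing:
\[
(\bra{\chi} \tensor \bra{\chi}) \circ \sigma = \bra{\chi} \tensor \bra{\chi}.
\]
Composing on the right with $\sigma$ therefore gives, for every character $\chi$,
\[
\bra{\chi} \circ \XmultSym \circ \sigma = (\bra{\chi} \tensor \bra{\chi}) \circ \sigma = \bra{\chi} \tensor \bra{\chi} = \bra{\chi} \circ \XmultSym,
\]
so that the two morphisms $\XmultSym$ and $\XmultSym \circ \sigma \colon \SpaceG \tensor \SpaceG \to \SpaceG$ agree after post-composition with every $\bra{\chi}$.

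It then remains to cancel the characters, and this is where the basis hypothesis does its work. Taking daggers turns the identity $\bra{\chi} \circ \XmultSym = \bra{\chi} \circ \XmultSym \circ \sigma$ into $\XmultSym^\dagger \circ \ket{\chi} = \sigma \circ \XmultSym^\dagger \circ \ket{\chi}$, using $\sigma^\dagger = \sigma$; that is, the two morphisms $\XmultSym^\dagger$ and $\sigma \circ \XmultSym^\dagger \colon \SpaceG \to \SpaceG \tensor \SpaceG$ agree on every basis state $\ket{\chi}$. Completeness (Definition~\ref{def:basis}(ii)) now yields $\XmultSym^\dagger = \sigma \circ \XmultSym^\dagger$, and daggering once more gives $\XmultSym = \XmultSym \circ \sigma$. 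Hence the group structure is commutative and $\mathbb{G}$ is an abelian internal group.

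I expect the only genuinely delicate point to be this last passage, from ``agrees against every character'' to ``equal as morphisms''. It is worth stressing that it uses only the completeness half of the basis hypothesis --- orthogonality and normalisability of the character family are never needed in this direction --- and that routing the argument through the dagger is precisely what lets me appeal to the completeness clause of Definition~\ref{def:basis} as literally stated (it is phrased for post-composition with states $\ket{\chi}$), rather than having to set up a separate dual completeness principle for cobases.
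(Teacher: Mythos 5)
Your proof is correct and is essentially the paper's own argument: both establish that $\XcomultSym$ and $s_{\SpaceG\SpaceG} \circ \XcomultSym$ agree on every classical state of $\XdotSym$ and then cancel the states by the completeness clause of Definition~\ref{def:basis}; your character identity $\bra{\chi} \circ \XmultSym = \bra{\chi} \tensor \bra{\chi}$ is just the dagger of the copyability equation $\XcomultSym \circ \ket{\chi} = \ket{\chi} \tensor \ket{\chi}$ that the paper uses directly. The only difference is cosmetic --- your extra round-trip through multiplicative characters and adjoints, which the paper avoids by working with the comultiplication from the start.
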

\begin{proof}
All we have to show is that $\XmultSym = \XmultSym \circ s_{\SpaceG\SpaceG}$, where $s_{\SpaceG\SpaceG}$ is the braiding operator. Equivalently, it is enough to show that $\XcomultSym = s_{\SpaceG\SpaceG} \circ \XcomultSym$. But indeed for any classical state $\ket{\chi}$ of $\hbox{\input{modules/symbols/XdotSym.tex}}\!$ we have $\XcomultSym \circ \ket{\chi} = \ket{\chi} \tensor \ket{\chi} = s_{\SpaceG\SpaceG} \circ \ket{\chi} \tensor \ket{\chi} =   s_{\SpaceG\SpaceG} \circ \XcomultSym \circ \ket{\chi}$. As we have assumed that the classical states form a basis, this completes the proof.
\end{proof}

\noindent Now that we have established that non-abelian internal groups will never satisfy the requirements of Lemma \ref{lemma_FTTraditionalSMC}, we can start paving the way towards a more general, representation-theoretic formulation.  We will work with compact closed $\dagger$-SMCs, using arrows to indicate which end of the cups and caps is the dual object:
\begin{equation}\label{eqn:cupcap}
	\hbox{\input{modules/pictures/cupcap.tex}}
\end{equation}
To avoid misunderstanding, we give an example of the typing for adjoints and conjugates:
\begin{equation}\label{eqn:adjconj}
	\hbox{\input{modules/pictures/adjconj.tex}}
\end{equation}

\noindent So if $f:\SpaceG \rightarrow \SpaceH$, we have that $f^\dagger: \SpaceH \rightarrow \SpaceG$ and that $f^\star : \SpaceG^\star \rightarrow \SpaceH^\star$. When drawing diagrams, we will usually drop the $\;^\dagger$ and $\;^\star$, as the graphical orientation of the boxes and the direction of the wires are sufficient to disambiguate. The cups and caps on a space $\SpaceH$ induce a $\dagger$-Frobenius algebra on $\SpaceH \tensor \SpaceH^\star$, given as follows:
\begin{equation}\label{eqn:MatrixMultFrob}
	\hbox{\input{modules/pictures/MatrixMultFrob.tex}}
\end{equation}

\noindent This $\dagger$-Frobenius algebra is quasi-special if the composition $\cap_{\SpaceH} \circ \cap_\SpaceH^\dagger$ of the counit with the unit is an invertible scalar. In $\fdHilbCategory$, it corresponds to matrix multiplication (where endomorphisms $\SpaceH \rightarrow \SpaceH$ are seen as tensors $\SpaceH \tensor \SpaceH^\star \rightarrow \tensorUnit$ by compact closure). The representations of a monoid $(\, \SpaceG , \XmultSym,\XunitSym)$ in a compact-closed $\dagger$-SMC are defined to be the morphisms of monoids from $(\, \SpaceG , \XmultSym,\XunitSym)$ to the monoid given in \eqref{eqn:MatrixMultFrob}: the multiplicative characters given in Definition \ref{def_MultiplicativeCharacters} are then the special case $\SpaceH = \tensorUnit$, and this will form the starting point of our generalisation to representation theory.

\begin{definition}\label{def_Reps}
	The \textbf{representations} of a monoid $(\, \SpaceG , \XmultSym,\XunitSym)$ in a compact-closed $\dagger$-SMC are the morphisms $\hbox{\input{modules/symbols/repSym.tex}}\!\! : \SpaceG \rightarrow \SpaceH \tensor \SpaceH^\star$ satisfying the first two equations in \eqref{eqn:RepsDef}. The representations of an internal groups $(\, \SpaceG , \hbox{\input{modules/symbols/XdotSym.tex}}\!,\hbox{\input{modules/symbols/ZdotSym.tex}}\!)$ are the representations of the monoid $(\, \SpaceG , \XmultSym,\XunitSym)$. A representation of an internal group is \textbf{unitary} if it satisfies the third as well. A representation is \textbf{isometric} if it is an isometry.
	\begin{equation}\label{eqn:RepsDef}
		\hbox{\input{modules/pictures/RepsDef1.tex}}
	\end{equation}
\end{definition}

\noindent In \cite{StefanoGogioso-CategoricalSemanticsSchrodingersEqn}, (unitary) representations appear in a slightly different flavour, as certain Eilenberg-Moore algebras $\SpaceH \tensor \SpaceG \rightarrow \SpaceH$, without any need for compact closure. The (unitary) representations from Definition~\ref{def_Reps} are exactly the morphisms obtained from the representations defined in \cite{StefanoGogioso-CategoricalSemanticsSchrodingersEqn} by using compact-closure to bend the input $\SpaceH$ wire of the latter into the output $\SpaceH^\star$ wire of the former. However, in this work compact-closure is explicitly needed to define the characters associated with the representations of internal groups.

\begin{definition}\label{def_Characters}
	The \textbf{character} associated with a representation $\rho: \SpaceG \rightarrow \SpaceH \tensor \SpaceH^\star$ of a monoid / internal group in a compact-closed $\dagger$-SMC is the morphism $\chi_\rho : \SpaceG \rightarrow \tensorUnit$ defined by Equation \ref{eqn:CharacterDef}. In the case of internal groups, a character $\chi_\rho$ is \textbf{unitary} if the representation $\rho$ is. 
	\begin{equation}\label{eqn:CharacterDef}
		\hbox{\input{modules/pictures/CharacterDef.tex}}
	\end{equation}
\end{definition} 

\noindent In particular, the multiplicative characters given in Definition \ref{def_MultiplicativeCharacters} are both unitary representations and the corresponding characters. As we will shortly see, representations will generalise multiplicative characters in their role as classical states, and in giving a resolution of the identity; characters, on the other hand, will generalise multiplicative characters in their role as matching family, and in giving a partition of the counit.

\begin{definition}
Let $(\,\SpaceG , \XmultSym,\XunitSym)$ be a monoid in a compact-closed $\dagger$-SMC which is distributively $\ComMonCategory$-enriched, and let $(\rho: \SpaceG \rightarrow \SpaceH_\rho \tensor \SpaceH^\star_\rho)_{\rho \in \mathcal{R}}$ be a finite family of representations of $(\,\SpaceG , \XmultSym,\XunitSym)$. We say that the family is \textbf{orthogonal} if:
\begin{equation}
\sigma \circ \rho^\dagger = 0 \text{, for all } \sigma, \rho \in \mathcal{R} \text{ s.t. }\sigma \neq \rho
\end{equation}
where $0 : \SpaceH_\rho \tensor \SpaceH^\star_\rho \rightarrow \SpaceH_\sigma \tensor \SpaceH^\star_\sigma$ is the zero morphism given by $\ComMonCategory$-enrichment. We say that the family is \textbf{normalisable} with normalisation factors $(N_\rho)_{\rho \in \mathcal{R}}$ if:
\begin{equation}
\rho \circ \rho^\dagger = N_\rho \; \id{\SpaceH_\rho \tensor \SpaceH_\rho^\star} \text{, for some invertible scalar } N_\rho
\end{equation}
\end{definition}
 
\noindent In a picture, a finite, normalisable, orthonormal family of representations $(\rho)_{\rho \in \mathcal{R}}$ is one that satisfies the following family of equations, where $(N_\rho)_{\rho \in \mathcal{R}}$ is a family of invertible scalars:\footnote{Note that the $\delta_{\rho \sigma}$ symbol allows for the equation to be correct in type. If $\rho = \sigma$, then the identities are well-typed; if $\rho \neq \sigma$, then the RHS is the unique, well defined zero morphism $0 : \SpaceH_\rho \tensor \SpaceH^\star_\rho \rightarrow \SpaceH_\sigma \tensor \SpaceH^\star_\sigma$, also well-typed.}
	\begin{equation}\label{eqn:RepsOrthonormal}
		\hbox{\input{modules/pictures/RepsOrthonormal.tex}}
	\end{equation}

\noindent The following Lemma \ref{lemma_BasisResolutionPartition2} generalises Lemma \ref{lemma_BasisResolutionPartition} from classical states to representations.
 
\begin{lemma}\label{lemma_BasisResolutionPartition2}
Let $\hbox{\input{modules/symbols/XdotSym.tex}}\!$ be a $\dagger$-qSFA on an object $\SpaceG$ in a compact-closed $\dagger$-SMC which is distributively $\ComMonCategory$-enriched, and let $(\rho: \SpaceG \rightarrow \SpaceH_\rho \tensor \SpaceH^\star_\rho)_{\rho \in \mathcal{R}}$ be a finite, normalisable (with normalisation factors $(N_\rho)_{\rho \in \mathcal{R}}$), orthogonal family of representations of $(\,\SpaceG , \XmultSym,\XunitSym)$. Then the following are equivalent:
\begin{enumerate}
\item[(i)] the representations $(\rho)_{\rho \in \mathcal{R}}$ form an \textbf{(orthogonal) resolution of the identity}:
	\begin{equation}\label{eqn:RepsResolutionId}
		\hbox{\input{modules/pictures/RepsResolutionId.tex}}
	\end{equation}
\item[(ii)] the characters $(\chi_\rho)_{\rho \in \mathcal{R}}$ form an \textbf{(orthogonal) partition of the counit}:
	\begin{equation}\label{eqn:CharPartitionCounit}
		\hbox{\input{modules/pictures/CharPartitionCounit.tex}}
	\end{equation}
\end{enumerate}
\end{lemma}
\begin{proof}
The proof goes essentially like that of parts $(ii) \implies (iii)$ and $(iii) \implies (ii)$ from Lemma \ref{lemma_BasisResolutionPartition}, with representations and characters substituted to multiplicative characters in parts $(ii)$ and $(iii)$ respectively, and Definitions \ref{def_Reps} and \ref{def_Characters} in place of Definition \ref{def_MultiplicativeCharacters}.
\end{proof}

\noindent We saw at the beginning of this section that only abelian internal groups can have multiplicative characters forming an orthogonal resolution of the identity / partition of the counit: in $\fdHilbCategory$ this is the case for all abelian internal groups, while in $\fRelCategory$ this only occurs in  selected special cases. The following Theorem shows that the generalisation of Lemma \ref{lemma_BasisResolutionPartition} to Lemma \ref{lemma_BasisResolutionPartition2} will be sufficient to cover all internal groups in $\fdHilbCategory$, both abelian and non-abelian.

\begin{theorem}\label{thm_PeterWeyl} \textbf{(Peter-Weyl)}\\
	If $\mathbb{G} = (\, \SpaceG , \hbox{\input{modules/symbols/XdotSym.tex}}\!,\hbox{\input{modules/symbols/ZdotSym.tex}}\!)$ is an internal group in $\fdHilbCategory$ (with $\hbox{\input{modules/symbols/ZdotSym.tex}}\!$ a $\dagger$-SCFA), then there is a finite, normalisable, orthogonal family $\Irreps{\mathbb{G}}$ of unitary representations $\rho: \SpaceG \rightarrow \SpaceH_\rho \tensor \SpaceH_\rho^\star$, which form a resolution of the identity. The family can be obtained by taking exactly one representative for each equivalence class of irreducible representations of the finite group $G$, given by $(\XmultSym,\XunitSym)$ acting on the classical states of $\hbox{\input{modules/symbols/ZdotSym.tex}}\!$, and extending it linearly from $G$ to $\SpaceG$, under the isomorphism $\SpaceG \isom \Ltwo{G}$ induced by $\hbox{\input{modules/symbols/ZdotSym.tex}}\!$. We shall refer to one such family $\Irreps{\mathbb{G}}$ as a family of \textbf{irreps} (or irreducible representations) for the internal group $\mathbb{G}$. The normalisation factors are given by $N_\rho = N / d_\rho$, where $N$ is the dimensionality of $\SpaceG$ (the normalisation factor for $\hbox{\input{modules/symbols/XdotSym.tex}}\!$) and $d_\rho$ is the dimensionality of $\SpaceH_\rho$.
\end{theorem}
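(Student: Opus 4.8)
The plan is to reduce the statement to the classical Peter--Weyl / Wedderburn theorem for the finite group $G$ and then transport the resulting data across the isomorphism $\SpaceG \isom \Ltwo{G}$ induced by $\ZdotSym$. First I would use that $\ZdotSym$ is a $\dagger$-SCFA to fix its classical states as a genuine orthonormal basis $\ket{g}_{g \in G}$ of $\SpaceG$, so that $\braket{g}{g} = 1$, $\dim \SpaceG = |G| = N$, and, by Theorem \ref{thm_InteralGroupsTraditionalGroups}, $(\XmultSym,\XunitSym)$ equips this basis with the group law $\XmultSym\circ(\ket{g}\tensor\ket{h}) = \ket{gh}$. This identifies $(\SpaceG,\XmultSym,\XunitSym)$ with the group algebra $\complexs[G]$ carrying the $\dagger$-structure of its standard basis, and identifies the matrix-multiplication Frobenius algebra of \eqref{eqn:MatrixMultFrob} on $\SpaceH_\rho\tensor\SpaceH_\rho^\star$ with ordinary composition on $\operatorname{End}(\SpaceH_\rho)$.

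Next I would set up the correspondence between the categorical representations of Definition \ref{def_Reps} and ordinary finite-dimensional representations of $G$. Given an ordinary representation $R : G \to \operatorname{GL}(\SpaceH_R)$, extending linearly over the basis $\ket{g}$ and using compact closure to read $\operatorname{End}(\SpaceH_R)\isom\SpaceH_R\tensor\SpaceH_R^\star$ produces a morphism $\rho:\SpaceG\to\SpaceH_\rho\tensor\SpaceH_\rho^\star$; the first two equations of \eqref{eqn:RepsDef} are precisely $R(gh)=R(g)R(h)$ and $R(e)=\id{\SpaceH_R}$ read off the basis, while the third (unitarity) equation, via the antipode acting as $\ket{g}\mapsto\ket{g^{-1}}$ (Lemma \ref{lemma_AntipodeInverse}), is equivalent to each $R(g)$ being unitary. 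Since we work over $\complexs$, Maschke's theorem and the unitarian trick (averaging a positive form over $G$) guarantee finitely many isomorphism classes of irreducibles, each with a unitary representative; I would define $\Irreps{\mathbb{G}}$ by choosing one such representative per class and transporting it as above.

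I would then verify orthogonality and normalisation by a direct computation in the basis, which is exactly the first Schur orthogonality relation: writing $\rho\circ\ket{g} = \sum_{ij}R_\rho(g)_{ij}\,\ket{i}\tensor\bra{j}$, the composite $\sigma\circ\rho^\dagger$ collapses via $\braket{g}{g'}=\delta_{gg'}$ to $\sum_{g}R_\sigma(g)_{kl}\overline{R_\rho(g)_{ij}}$, which Schur evaluates to $\tfrac{|G|}{d_\rho}\delta_{\rho\sigma}\delta_{ki}\delta_{lj}$, i.e. $\sigma\circ\rho^\dagger = N_\rho\,\delta_{\rho\sigma}\,\id{\SpaceH_\rho\tensor\SpaceH_\rho^\star}$ with $N_\rho = |G|/d_\rho = N/d_\rho$ as claimed. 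For the resolution of the identity \eqref{eqn:RepsResolutionId} I would invoke the complementary (completeness) Schur relation $\sum_\rho\tfrac{d_\rho}{|G|}\sum_{ij}R_\rho(g)_{ij}\overline{R_\rho(g')_{ij}}=\delta_{gg'}$ — equivalently the Wedderburn isomorphism $\complexs[G]\isom\bigoplus_\rho\operatorname{End}(\SpaceH_\rho)$ together with $\sum_\rho d_\rho^2=|G|$ — to compute $\sum_\rho\tfrac{1}{N_\rho}\,\rho^\dagger\circ\rho = \sum_{g}\ket{g}\bra{g} = \id{\SpaceG}$. Alternatively, once orthogonality and normalisability are established, Lemma \ref{lemma_BasisResolutionPartition2} lets me derive the resolution of the identity from the equivalent partition of the counit, reducing the remaining work.

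The main obstacle is not conceptual, since all substance is classical Peter--Weyl; rather it lies in the careful diagrammatic bookkeeping of the compact-closed structure: matching the categorical transpose $(-)^\star$, adjoint $(-)^\dagger$ and conjugate against the corresponding matrix operations, confirming that \eqref{eqn:MatrixMultFrob} genuinely computes matrix products and traces (so that the associated $\chi_\rho$ of Definition \ref{def_Characters} is the ordinary character), and keeping the factor $N_\rho = N/d_\rho$ consistent between the orthogonality relation and the resolution of the identity. I expect the verification of the unitarity equation and the exact value of $N_\rho$ to demand the most attention.
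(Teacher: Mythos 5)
Your proposal is correct and takes essentially the same route as the paper: the paper's entire proof is a citation of the finite-dimensional Peter--Weyl theorem (James--Liebeck, Knapp), i.e.\ a reduction to classical finite-group representation theory, which is exactly the reduction you carry out. The only difference is that you make explicit the bookkeeping the paper leaves to the references --- identifying $(\SpaceG,\XmultSym,\XunitSym)$ with $\complexs[G]$ via the orthonormal basis of $\ZdotSym$-classical states, translating Definition~\ref{def_Reps} into ordinary unitary representations, and obtaining orthogonality with $N_\rho = N/d_\rho$ and the resolution of the identity from the two Schur orthogonality/completeness relations.
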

\begin{proof} This is the finite-dimensional case of the Peter-Weyl Theorem: see, for example, James and Liebeck~\cite{Msc-RepCharactersGroups} and Knapp~\cite{CQM-PeterWeylTheorem}.
\end{proof}

\noindent In $\fdHilbCategory$, the resolution of the identity provided by representations takes the form of a complete family of projectors, one $d_\rho$-dimensional projector for each irrep $\rho$. By fixing a basis for $\SpaceH$, this family can be turned into a family of $1$-dimensional projectors, using matrix elements.

\begin{definition}
    Let $(\,\SpaceG , \XmultSym,\XunitSym)$ be a monoid in a compact-closed $\dagger$-SMC and let $\rho: \SpaceG \rightarrow \SpaceH \tensor \SpaceH^\star$ be a representation. Let $\hbox{\input{modules/symbols/ZaltdotSym.tex}}\!$ be a $\dagger$-qSCFA on $\SpaceH$ with classical states $\ket{x}_{x \in X}$ forming a finite, normalisable, orthogonal basis for $\SpaceG$. The \textbf{matrix elements} of $\rho$ with respect to $\hbox{\input{modules/symbols/ZaltdotSym.tex}}\!$ are the following states of $\SpaceG$:
\begin{equation}
    \rho^x_y := \rho^\dagger \tensor \left( \ket{x} \tensor \ket{y}^\star \right) \text{, for } x,y \in X   
\end{equation}
\end{definition}

\begin{lemma}
Let $\hbox{\input{modules/symbols/XdotSym.tex}}\!$ be a $\dagger$-qSFA on an object $\SpaceG$ in a compact-closed $\dagger$-SMC which is distributively $\ComMonCategory$-enriched, and let $(\rho: \SpaceG \rightarrow \SpaceH_\rho \tensor \SpaceH^\star_\rho)_{\rho \in \mathcal{R}}$ be a finite, normalisable (with normalisation factors $(N_\rho)_{\rho \in \mathcal{R}}$), orthogonal family of representations of $(\,\SpaceG , \XmultSym,\XunitSym)$ which forms an orthogonal resolution of the identity. For each $\rho \in \mathcal{R}$, let $\hbox{\input{modules/symbols/ZaltdotSym.tex}}\!_\rho$ be a $\dagger$-qSCFA on $\SpaceH_\rho$ with classical states $\ket{x}_{x \in X_\rho}$ forming a finite, normalisable, orthogonal basis for $\SpaceG$. Then the family of all matrix elements for all the representations in $R$ forms a finite, normalisable, orthogonal basis for $\SpaceG$. 
\end{lemma}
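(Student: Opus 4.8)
The plan is to verify the four required properties in turn, reducing everything to the resolution of the identity supplied by Lemma~\ref{lemma_BasisResolutionPartition2}(i) together with the orthogonal-basis hypothesis on each $\ZaltdotSym_\rho$. Finiteness is immediate, since $\mathcal{R}$ and each $X_\rho$ are finite. For the notion of \emph{basis} in the sense of Definition~\ref{def:basis} I only need orthogonality together with completeness, and I will obtain completeness from an explicit resolution of the identity by the matrix elements, exactly as in the argument $(ii)\implies(i)$ of Lemma~\ref{lemma_BasisResolutionPartition}. Thus the two substantive tasks are (a) a single inner-product computation giving orthogonality and normalisability simultaneously, and (b) the assembly of the resolution of the identity.

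For (a), writing $\ket{x,y} := \ket{x}\tensor\ket{y}^\star : \tensorUnit \to \SpaceH_\rho \tensor \SpaceH_\rho^\star$, the definition of the matrix elements reads $\rho^x_y = \rho^\dagger \circ \ket{x,y}$, whence $(\sigma^{x'}_{y'})^\dagger = \bra{x',y'}\circ\sigma$ and therefore $\braket{\sigma^{x'}_{y'}}{\rho^x_y} = \bra{x',y'}\circ\sigma\circ\rho^\dagger\circ\ket{x,y}$. The orthonormality of the representations (Equation~\ref{eqn:RepsOrthonormal}), i.e. $\sigma\circ\rho^\dagger = \delta_{\rho\sigma}\,N_\rho\,\id{\SpaceH_\rho\tensor\SpaceH_\rho^\star}$, kills all cross terms with $\sigma\neq\rho$; for $\sigma=\rho$ the surviving factor $\bra{x',y'}\circ\ket{x,y}$ splits as $\braket{x'}{x}$ times the conjugate pairing of $\ket{y'}^\star$ with $\ket{y}^\star$, which by orthogonality and normalisability of the $\ZaltdotSym_\rho$-basis equals $\delta_{xx'}\delta_{yy'}\,M_\rho\,M_\rho^\star$, where $M_\rho$ is the normalisation factor of $\ZaltdotSym_\rho$. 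In total $\braket{\sigma^{x'}_{y'}}{\rho^x_y} = \delta_{\rho\sigma}\delta_{xx'}\delta_{yy'}\,N_\rho M_\rho M_\rho^\star$: zero off the diagonal (orthogonality) and an invertible scalar on it, being a product of invertibles (normalisability).

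For (b), I first apply Lemma~\ref{lemma_BasisResolutionPartition} to each $\dagger$-qSCFA $\ZaltdotSym_\rho$: its classical states $\ket{x}_{x\in X_\rho}$ being an orthogonal basis for $\SpaceH_\rho$ yields the resolution $\frac{1}{M_\rho}\sum_{x}\ket{x}\bra{x} = \id{\SpaceH_\rho}$. Conjugating this equation (conjugation being a $\dagger$-preserving, $\ComMonCategory$-enriched monoidal functor) gives $\frac{1}{M_\rho^\star}\sum_{y}\ket{y}^\star\,(\ket{y}^\star)^\dagger = \id{\SpaceH_\rho^\star}$, and tensoring the two — legitimate by the distributive $\ComMonCategory$-enrichment and bifunctoriality of $\tensor$ — produces $\id{\SpaceH_\rho\tensor\SpaceH_\rho^\star} = \frac{1}{M_\rho M_\rho^\star}\sum_{x,y}\ket{x,y}\bra{x,y}$. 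Substituting this into the representation-theoretic resolution of the identity $\sum_{\rho}\frac{1}{N_\rho}\rho^\dagger\circ\rho = \id{\SpaceG}$ of Lemma~\ref{lemma_BasisResolutionPartition2}(i), and using $\rho^\dagger\circ\ket{x,y} = \rho^x_y$, collapses the insertion to $\id{\SpaceG} = \sum_{\rho,x,y}\frac{1}{N_\rho M_\rho M_\rho^\star}\ket{\rho^x_y}\bra{\rho^x_y}$. By the computation in (a) the coefficient is precisely $1/\braket{\rho^x_y}{\rho^x_y}$, so the matrix elements form an orthogonal resolution of the identity; completeness, and hence the basis property, then follows by precomposing any two maps $f,g$ agreeing on every $\rho^x_y$ with this identity.

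The main obstacle I anticipate is the bookkeeping on the dual space $\SpaceH_\rho^\star$: I must make precise that the conjugates $\ket{y}^\star$ furnish a resolution of the identity on $\SpaceH_\rho^\star$ and track the conjugate normalisation factor $M_\rho^\star$, ensuring that the compact-closed and dagger operations (transpose, conjugate, adjoint) compose consistently with the wire orientations fixed in Equations~\ref{eqn:cupcap}--\ref{eqn:adjconj}. Once this duality bookkeeping is settled, the remaining manipulations are routine applications of bifunctoriality of $\tensor$, distributivity over $+$, and the two orthonormality hypotheses.
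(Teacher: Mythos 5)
Your proposal is correct and takes essentially the same route as the paper: the paper's (one-sentence) proof is precisely your step (b), inserting the separable resolution of the identity given by the states $\ket{x}\tensor\ket{y}^\star$ into the middle $\SpaceH_\rho\tensor\SpaceH_\rho^\star$ wires of the representations' resolution of the identity. Your step (a), the explicit inner-product computation yielding orthogonality and normalisability with factor $N_\rho M_\rho M_\rho^\star$, is left implicit in the paper, so your write-up simply supplies details the paper omits.
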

\begin{proof}
By inserting the (separable) resolution of the identity provided by the basis $\left(\ket{x} \tensor \ket{y}^\star \right)_{x,y \in X}$ of $\SpaceH_\rho \tensor \SpaceH^\star_\rho$ in the middle $\SpaceH_\rho \tensor \SpaceH^\star_\rho$ wires of the resolution of the identity provided by the representations.
\end{proof}

\noindent Another useful result is the following, which generalises Theorem \ref{lemma_OrthogonalityCharacters}, or more precisely the equivalent formulation given by Equation \ref{eqn:orthogonalityMultCharsRed}.

\begin{theorem}[Orthogonality of Characters]\label{lemma_OrthogonalityCharacters2}
Let $\hbox{\input{modules/symbols/XdotSym.tex}}\!$ be a $\dagger$-qSFA on an object $\SpaceG$ in a compact-closed $\dagger$-SMC which is distributively $\ComMonCategory$-enriched, and let $(\rho)_{\rho \in \mathcal{R}}$ be a finite, normalisable, orthogonal family of representations of $(\,\SpaceG , \XmultSym,\XunitSym)$. If $(\rho)_{\rho \in \mathcal{R}}$ forms a resolution of the identity, then the \textbf{normalised characters} $(\frac{1}{N_\rho}\chi_\rho)_{\rho \in \mathcal{R}}$ are a matching family for $(\,\SpaceG , \XcomultSym,\XcounitSym)$:
    \begin{equation}\label{eqn:orthogonalityChars}
		\hbox{\input{modules/pictures/orthogonalityChars.tex}}
	\end{equation}
\end{theorem}
\begin{proof}
The proof begins by using the resolution of the identity from Equation \ref{eqn:RepsResolutionId}, and then Equation \ref{eqn:RepsDef}:
    \begin{equation}\label{eqn:CharOrthProof1}
		\hbox{\input{modules/pictures/CharOrthProof1.tex}}
	\end{equation}
The proof concludes by using orthogonality of representations from Equation \ref{eqn:RepsOrthonormal}:
    \begin{equation}\label{eqn:CharOrthProof2}
		\hbox{\input{modules/pictures/CharOrthProof2.tex}}
	\end{equation}
\end{proof}

\noindent Finally, Lemma \ref{lemma_FTTraditionalSMC2} provides the non-abelian analogue to the traditional formulation of the Fourier Transform given in Lemma \ref{lemma_FTTraditionalSMC}. It only deals with the Fourier Transform, i.e. with representations of $\hbox{\input{modules/symbols/XdotSym.tex}}\!$: this is because $\hbox{\input{modules/symbols/ZdotSym.tex}}\!$ is commutative, and hence the Inverse Fourier Transform is usually covered by Lemma \ref{lemma_FTTraditionalSMC}, but the result can be straightforwardly extended to the Inverse Fourier Transform

\begin{lemma}\label{lemma_FTTraditionalSMC2}
Let $\mathbb{G} = (\,\SpaceG,\hbox{\input{modules/symbols/XdotSym.tex}}\!,\hbox{\input{modules/symbols/ZdotSym.tex}}\!)$ be an internal group in a compact-closed $\dagger$-SMC which is distributively $\ComMonCategory$-enriched. Further assume that $(\rho)_{\rho \in \mathcal{R}}$ is a finite, normalisable family of representations of $\mathbb{G}$ (with normalisation factors $(N_\rho)_{\rho \in \mathcal{R}}$), which forms an orthogonal resolution of the identity. Then the Fourier Transform of Definition \ref{eqn:FT} can be written in the following way:
\begin{equation}\label{eqn:FTv2nonabelian}
	\hbox{\input{modules/pictures/FTv2nonabelian.tex}}
\end{equation} 
\end{lemma}
\begin{proof}
Goes on the same lines as the proof of Lemma \ref{lemma_FTTraditionalSMC}, using Lemma \ref{lemma_BasisResolutionPartition2} in place of Lemma \ref{lemma_BasisResolutionPartition}.
\end{proof}

\noindent Using Lemma \ref{lemma_FTTraditionalSMC2} and Theorem \ref{thm_PeterWeyl}, we recover a formulation of the Fourier Transform in $\fdHilbCategory$ which is reminiscent of the traditional abelian one, yet valid for all internal groups: the only thing missing is an analogue to the interpretation of $\mathcal{F}_{G}[f]$ as a function in $\Ltwo{G^\wedge}$. In the abelian case, this went as follows: 
\begin{enumerate}
\item[1.] The Fourier Transform for an abelian internal group  $\mathbb{G} = (\,\SpaceG,\hbox{\input{modules/symbols/XdotSym.tex}}\!,\hbox{\input{modules/symbols/ZdotSym.tex}}\!)$ in $\fdHilbCategory$ is given by Definition \ref{def_FourierTransform}.
\item[2.] By Theorem \ref{thm_CategoricalPontryaginDuality}, it is a canonical isomorphism of sets between states and co-states of $\SpaceG$.
\item[3.] By Theorem \ref{thm_EnrichedPontryaginDuality}, it is furthermore an isomorphism of finite-dimensional Hilbert spaces.
\item[4.] The group elements form a basis of $\SpaceG$, inducing a canonical isomorphism $\Hom{\fdHilbCategory}{\complexs}{\SpaceG} \isom \Ltwo{G}$ (see Section \ref{section_EnrichedPontryaginDuality}).
\item[5.] The multiplicative characters form a co-basis of $\SpaceG$, inducing a canonical isomorphism $\Hom{\fdHilbCategory}{\SpaceG}{\complexs} \isom \Ltwo{G^\wedge}$  (see Section \ref{section_EnrichedPontryaginDuality}).
\item[6.] Lemma \ref{lemma_FTTraditionalSMC} uses points 4 and 5 to turn the Fourier Transform into a canonical isomorphism $\Ltwo{G} \isom \Ltwo{G^\wedge}$, sending $\ket{f} = \sum_{g \in G} f(g) \ket{g}$ to $\bra{\tilde{f}} = \sum_{\chi \in G^\wedge} \tilde{f}(\chi) \bra{\chi}$.
\end{enumerate}

\noindent In Section \ref{section_EnrichedPontryaginDuality}, we have seen how, if $\CategoryC$ is a $\dagger$-SMC which is distributively $\ComMonCategory$-enriched, then $\CategoryC$ can be enriched over the category of $R$-modules, where $R$ is its semiring of scalars. If $\CategoryC$ is furthermore compact-closed with self-dual objects (i.e. $\SpaceH^\star = \SpaceH$), then the requirement of compatibility between $\star: \SpaceH \rightarrow \SpaceH$ and the $R$-module structure should be added to the enrichment.

In Section \ref{section_InternalGroups} we have defined the category $\IntGrpCategory{\CategoryC}$ of internal groups and internal group homomorphisms for any $\dagger$-SMC $\CategoryC$. Now we define a new category, $\IntGrpStructCategory{\CategoryC}$, with objects the internal groups and the same morphisms of $\CategoryC$:
\begin{equation}
\Hom{\IntGrpStructCategory{\CategoryC}}{(\SpaceG,\hbox{\input{modules/symbols/XdotSym.tex}}\!,\hbox{\input{modules/symbols/ZdotSym.tex}}\!)}{(\SpaceH,\hbox{\input{modules/symbols/XaltdotSym.tex}}\!,\hbox{\input{modules/symbols/ZaltdotSym.tex}}\!)} = \Hom{\CategoryC}{\SpaceG}{\SpaceH}
\end{equation}

\noindent Then there is a faithful functor $\IntGrpStructCategory{\CategoryC} \monom \IntGrpStructCategory{\CategoryC}$ which is bijective on objects, and a full and faithful functor $\IntGrpStructCategory{\CategoryC} \rightarrow \CategoryC$, which is furthermore surjective on objects (and hence an equivalence of categories) if and only if each object of $\CategoryC$ has at least one internal group structure on it.

Since $\IntGrpStructCategory{\CategoryC}$ keeps track of internal group structure, it can be enriched more than $\CategoryC$: if $\CategoryC$ is distributively $\ComMonCategory$-enriched, then the category $\IntGrpStructCategory{\CategoryC}$ can be enriched in unital $R$-algebras, where the unital ring structure on $\Hom{\IntGrpStructCategory{\CategoryC}}{(\SpaceG,\hbox{\input{modules/symbols/XdotSym.tex}}\!,\hbox{\input{modules/symbols/ZdotSym.tex}}\!)}{(\SpaceH,\hbox{\input{modules/symbols/XaltdotSym.tex}}\!,\hbox{\input{modules/symbols/ZaltdotSym.tex}}\!)}$ is given by the following operation and unit:
\begin{equation}\label{eqn:GrpAlgebraStructure}
	\hbox{\input{modules/pictures/GrpAlgebraStructure.tex}}
\end{equation}

\noindent In the case of $\fdHilbCategory$, take $F,F' : \Ltwo{G} \rightarrow \Ltwo{H}$, where we already identified $\SpaceG \isom \Ltwo{G}$ via $\hbox{\input{modules/symbols/ZdotSym.tex}}\!$ and $\SpaceH \isom \Ltwo{H}$ via $\hbox{\input{modules/symbols/ZaltdotSym.tex}}\!$. Then the ring operation and unit take the following explicit form:
\begin{align}
F \bullet F' &= f \mapsto \sum_{g \in G} f(g) \, F(\delta_g) \star F'(\delta_g) \\
1_\bullet &= f \mapsto \delta_0
\end{align}
where $\star$ is the convolution operation on $\Ltwo{H}$, and $\delta_0 \in \Ltwo{H}$ is its unit. In particula, the $\complexs$-algebra structure on $\Hom{\IntGrpStructCategory{\fdHilbCategory}}{\complexs}{(\SpaceG,\hbox{\input{modules/symbols/XdotSym.tex}}\!,\hbox{\input{modules/symbols/ZdotSym.tex}}\!)}$ is the convolution algebra on $\Ltwo{G}$, and the $\complexs$-algebra structure on $\Hom{\IntGrpStructCategory{\fdHilbCategory}}{(\SpaceG,\hbox{\input{modules/symbols/XdotSym.tex}}\!,\hbox{\input{modules/symbols/ZdotSym.tex}}\!)}{\complexs}$ is the pointwise multiplication algebra\footnote{Sending $f,f' \in \Ltwo{G^\wedge}$ to $f \bullet f' = \chi \mapsto f(\chi)f'(\chi)$.} on $\Ltwo{G^\wedge}$.

The key point is that the $R$-algebra structures on states and co-states of $\mathbb{G} = (\SpaceG,\hbox{\input{modules/symbols/XdotSym.tex}}\!,\hbox{\input{modules/symbols/ZdotSym.tex}}\!)$ in any $\IntGrpStructCategory{\CategoryC}$ are enough to reconstruct $\hbox{\input{modules/symbols/XdotSym.tex}}\!$ and $\hbox{\input{modules/symbols/ZdotSym.tex}}\!$ respectively: the algebraic generalisation of Pontryagin duality from the abelian case to the non-abelian case goes through this shift of perspective from $\LtwoSym$-spaces to algebras. Remember that, if the group elements of an internal group $\mathbb{G} = (\SpaceG,\hbox{\input{modules/symbols/XdotSym.tex}}\!,\hbox{\input{modules/symbols/ZdotSym.tex}}\!)$ (in a distributively $\ComMonCategory$-enriched category $\CategoryC$) form a finite, normalisable orthogonal basis, then $\Hom{\CategoryC}{\tensorUnit}{\SpaceG}^{(R\text{-modules})} \isom R[G_{\hbox{\input{modules/symbols/ZdotSym.tex}}\!}]$ as $R$-modules, where $R[G_{\hbox{\input{modules/symbols/ZdotSym.tex}}\!}]$ is the free $R$-module on the group elements (see Section \ref{section_EnrichedPontryaginDuality}). This generalised the $\LtwoSym$-spaces from $\fdHilbCategory$. 
The corresponding statement in $\IntGrpStructCategory{\CategoryC}$ is that $\Hom{\IntGrpStructCategory{\CategoryC}}{\tensorUnit}{\mathbb{G}}^{(R\text{-algebras})}$ is isomorphic, as $R$-algebras, to the algebra on $R[G_{\hbox{\input{modules/symbols/ZdotSym.tex}}\!}]$ induced by the isomorphism of $R$-modules above; we shall refer to this algebra over the $R$-module $R[G_{\hbox{\input{modules/symbols/ZdotSym.tex}}\!}]$ as $R[\mathbb{G}]$, as it involves both the $\hbox{\input{modules/symbols/ZdotSym.tex}}\!$ and the $\hbox{\input{modules/symbols/XdotSym.tex}}\!$ structures. This generalises $\LtwoSym$-spaces with their convolution algebra structure. 

\begin{lemma}\label{lemma_CategoricalGNRep}
Let $\mathbb{G} = (\,\SpaceG,\hbox{\input{modules/symbols/XdotSym.tex}}\!,\hbox{\input{modules/symbols/ZdotSym.tex}}\!)$ be an internal group in a compact-closed $\dagger$-SMC which is distributively $\ComMonCategory$-enriched. Assume that $(\rho:\SpaceG \rightarrow \SpaceH_\rho \tensor \SpaceH_\rho^\star)_{\rho \in \mathcal{R}}$ is a finite, normalisable family of representations of $\mathbb{G}$ (with normalisation factors $(N_\rho)_{\rho \in \mathcal{R}}$), which forms an orthogonal resolution of the identity. If $(i_\rho : \SpaceH_\rho \rightarrow V)_{\rho \in \mathcal{R}}$ is a family of orthogonal isometries (i.e. $i_\rho^\dagger \circ i_\sigma = 0$ whenever $\sigma \neq \rho$), then the following is an isometric representation of $\mathbb{G}$:
	\begin{equation}\label{eqn:GNSRep}
		\hbox{\input{modules/pictures/GNSRep.tex}}
	\end{equation}	
\end{lemma}
\begin{proof}
A series of straightforward checks. The fact that it is a representation follows from the fact that all $\rho$ are representations. The missing $i_\rho,i_\rho^\star$ morphisms are obtained by expanding the caps as:
\begin{equation}
\cap_{\SpaceH_\rho}^\star = \cap_V^\star \circ \left( i_\rho^\star \tensor i_\rho \right)
\end{equation}
The fact that it is an isometry follows from the fact that the $i_\rho$ (and hence the $i_\rho^\star$) are an orthogonal family of isometries, and that the representations form a resolution of the identity.
\end{proof}

\noindent In $\fdHilbCategory$, the algebra $\complexs[\mathbb{G}]$ is a C*-algebra, and the representation of Equation \ref{eqn:GNSRep} is a *-representation of C*-algebras. The typical example of one such representation involves taking $\mathcal{R}$ to be a complete set of representatives for the equivalence classes of irreducible representations of a finite group $G$, and letting $V = \oplus_{\rho \in \mathcal{R}} \SpaceH_\rho$ together with the isometric subspace injections $i_\rho : \SpaceH_\rho \monom V$ given by the biproduct structure. The resulting representation $\oplus_{\rho \in \mathcal{R}} \rho$ is reminiscent of the Gelfand-Naimark representation: the latter takes the same form, but with the entire set of pure states for the C*-algebra as $\mathcal{R}$, and the respective irreducible representations given by the GNS construction. For the purposes of proving the Gelfand-Naimark theorem in $\fdHilbCategory$, however, both representations are equally valid (they are both isometric).

\begin{theorem}[Categorical Gelfand-Naimark Theorem]\label{thm_CategoricalGNTheorem}.\\
Consider the situation where:
\begin{enumerate}
\item[(i)] $\mathbb{G} = (\,\SpaceG,\hbox{\input{modules/symbols/XdotSym.tex}}\!,\hbox{\input{modules/symbols/ZdotSym.tex}}\!)$ is an internal group in a compact-closed $\dagger$-SMC which is distributively $\ComMonCategory$-enriched.
\item[(ii)] $(\rho:\SpaceG \rightarrow \SpaceH_\rho \tensor \SpaceH_\rho^\star)_{\rho \in \mathcal{R}}$ is a finite, normalisable family of representations of $\mathbb{G}$ (with normalisation factors $(N_\rho)_{\rho \in \mathcal{R}}$), which forms an orthogonal resolution of the identity
\item[(iii)] The group elements of $\mathbb{G}$ form a finite, normalisable orthogonal basis, so that the $R$-algebra $R[\mathbb{G}]$ is well defined and isomorphic to $\Hom{\IntGrpStructCategory{\CategoryC}}{\tensorUnit}{\mathbb{G}}^{(R\text{-algebras})}$.
\item[(iv)] $(i_\rho : \SpaceH_\rho \rightarrow V)_{\rho \in \mathcal{R}}$ is a family of orthogonal isometries (i.e. $i_\rho^\dagger \circ i_\sigma = 0$ whenever $\sigma \neq \rho$).
\end{enumerate}
Then the $R$-algebra $R[\mathbb{G}]$ is isometrically isomorphic to a sub-algebra of the operator algebra given on $(V \tensor V^\star)[\mathcal{R}]$ by pointwise composition, where by $V \tensor V^\star$ we mean the $R$-module $\Hom{\CategoryC}{V \tensor V^\star}{\tensorUnit}^{(R\text{-modules})}$.
\end{theorem}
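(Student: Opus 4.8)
The plan is to exhibit the required isometric isomorphism explicitly as the family of representations evaluated on group-algebra elements, and then verify in turn that it is a module map, an algebra map, injective, and isometric. Building on the isometric representation of Lemma \ref{lemma_CategoricalGNRep}, I would define $\Phi : R[\mathbb{G}] \to (V \tensor V^\star)[\mathcal{R}]$ as follows: an element of $R[\mathbb{G}]$ is a state $\ket{f} : \tensorUnit \to \SpaceG$ (a point of the free module on the group elements, well defined by hypothesis (iii)), and for each $\rho \in \mathcal{R}$ I set $\Phi(\ket{f})(\rho)$ to be the co-state obtained from $(i_\rho \tensor i_\rho^\star) \circ \rho \circ \ket{f} : \tensorUnit \to V \tensor V^\star$ via the dagger. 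Conceptually, $\rho \circ \ket{f}$ is the operator on $\SpaceH_\rho$ that the representation assigns to $\ket{f}$ (using compact closure), and the orthogonal isometry $i_\rho$ transports it into the common operator space $V \tensor V^\star$, so that all components of $\Phi(\ket{f})$ live in the same object indexed by $\mathcal{R}$.

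First I would record that $\Phi$ is an $R$-module homomorphism: it is assembled from the fixed morphisms $\rho$ and $i_\rho$ by post-composition and tensoring, and both operations are $R$-linear for the scalar multiplication induced by the distributive $\ComMonCategory$-enrichment. The substantive step is multiplicativity. The product on $R[\mathbb{G}]$ is the convolution coming from $\XmultSym$ (Equation \ref{eqn:GrpAlgebraStructure}; in $\fdHilbCategory$ this is the convolution algebra on $\Ltwo{G}$). By the first defining equation in \ref{eqn:RepsDef}, each $\rho$ is a monoid homomorphism from $(\SpaceG, \XmultSym, \XunitSym)$ into the matrix-multiplication Frobenius algebra of Equation \ref{eqn:MatrixMultFrob}, which is precisely the assertion that $\rho$ carries convolution to operator composition: $\rho \circ (\ket{f} \bullet \ket{f'}) = (\rho \circ \ket{f}) \cdot (\rho \circ \ket{f'})$, where $\cdot$ denotes the matrix product on $\SpaceH_\rho \tensor \SpaceH_\rho^\star$. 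Since $i_\rho^\dagger \circ i_\rho = \id{\SpaceH_\rho}$, conjugation by $i_\rho$ intertwines this matrix product with operator composition on $V \tensor V^\star$; chaining the two identities yields $\Phi(\ket{f} \bullet \ket{f'})(\rho) = \Phi(\ket{f})(\rho) \circ \Phi(\ket{f'})(\rho)$ for every $\rho$, i.e. $\Phi$ respects pointwise composition, with unitality supplied by the unit axiom of \ref{eqn:RepsDef}. Hence $\Phi$ is an $R$-algebra homomorphism, and its image is automatically a sub-algebra of the pointwise-composition algebra on $(V \tensor V^\star)[\mathcal{R}]$.

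Injectivity and the isometric property I would both extract from the hypothesis that $(\rho)_{\rho \in \mathcal{R}}$ is an orthogonal resolution of the identity, $\sum_{\rho} \frac{1}{N_\rho}\, \rho^\dagger \circ \rho = \id{\SpaceG}$ (Equation \ref{eqn:RepsResolutionId}). This provides an explicit left inverse: $\ket{f} = \sum_\rho \frac{1}{N_\rho}\, \rho^\dagger \circ (\rho \circ \ket{f})$, and because the $i_\rho$ form an orthogonal family of isometries, $i_\rho^\dagger \circ i_\sigma = \delta_{\rho\sigma}\,\id{\SpaceH_\rho}$, each operator $\rho \circ \ket{f}$ is recovered from the component $\Phi(\ket{f})(\rho)$ by conjugating back along $i_\rho^\dagger$; thus $\Phi$ is injective onto its image. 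The same resolution of the identity, together with orthonormality of the $i_\rho$, shows that the inner product $\braket{f'}{f}$ on $R[\mathbb{G}]$ equals the sum over $\rho$ of the $\frac{1}{N_\rho}$-weighted inner products of the $\rho$-components, which is exactly the pairing on $(V \tensor V^\star)[\mathcal{R}]$, so $\Phi$ is isometric. This is the indexed analogue of the single-representation isometry computation already carried out in the proof of Lemma \ref{lemma_CategoricalGNRep}.

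The main obstacle I expect is bookkeeping rather than conceptual difficulty. One must make precise, under compact closure, the identification of $\rho \circ \ket{f}$ with an operator (which cups and caps, and where the conjugates and daggers go), and verify that conjugation by $i_\rho$ really intertwines the two matrix-multiplication Frobenius products of Equation \ref{eqn:MatrixMultFrob} and not merely their underlying maps. Equally delicate is tracking the normalisation factors $N_\rho$ so that multiplicativity holds on the nose and the isometry carries no stray scalars, and fixing what \emph{isometric isomorphism} should mean in the general distributively $\ComMonCategory$-enriched setting (where there is no genuine norm, only the $R$-valued pairing) rather than in $\fdHilbCategory$. Each individual verification, however, is a routine diagrammatic manipulation once the axioms of Equations \ref{eqn:RepsDef}, \ref{eqn:MatrixMultFrob} and \ref{eqn:RepsResolutionId} are in hand.
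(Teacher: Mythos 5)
Your construction is, at bottom, the same one the paper uses: the paper's Equation \ref{eqn:GNTheoremProof1} rewrites the Fourier Transform as exactly \emph{evaluate the representations, conjugate into $V \tensor V^\star$ by the isometries $i_\rho$, and bend the result into a co-state}, and your injectivity and isometry arguments (resolution of the identity plus orthogonality of the $i_\rho$, as in Lemma \ref{lemma_CategoricalGNRep}) match the paper's. But your map differs from the paper's in two details that are not mere bookkeeping, and both break the statement you need.

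First, you convert the state $(i_\rho \tensor i_\rho^\star)\circ\rho\circ\ket{f}$ into a co-state with the dagger. The dagger conjugates scalars, $(a f)^\dagger = a^\dagger f^\dagger$, so your $\Phi$ satisfies $\Phi(a\ket{f}) = a^\dagger\,\Phi(\ket{f})$ and is not an $R$-module homomorphism (in $\fdHilbCategory$ it is antilinear); the paper instead uses the compact-closure transpose $(\cdot)^T$, which is $R$-linear. Second, and more seriously, you have dropped the antipode, and this kills multiplicativity in exactly the non-abelian case the theorem exists for. Your appeal to the first equation of \ref{eqn:RepsDef} is correct at the level of \emph{states}: $\ket{f} \mapsto (i_\rho\tensor i_\rho^\star)\circ\rho\circ\ket{f}$ carries convolution to the Frobenius product of Equation \ref{eqn:MatrixMultFrob} in the forward order. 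But the target of the theorem consists of \emph{co-states} $V\tensor V^\star \to \tensorUnit$ with the pointwise-composition product, and bending a state into a co-state reverses that product (transposition is an anti-homomorphism of the matrix algebra, exactly as $(XY)^T = Y^T X^T$). Hence your $\Phi$, once made linear by replacing the dagger with the transpose, satisfies $\Phi(\ket{f}\bullet\ket{f'}) = \Phi(\ket{f'})\cdot\Phi(\ket{f})$: an anti-homomorphism, which identifies $R[\mathbb{G}]$ with the \emph{opposite} of a subalgebra, not with a subalgebra; since $\XmultSym$ is non-commutative here, this is not the claimed result. The paper's map contains two order reversals --- the antipode $\antipodeSym$ (inversion is an anti-homomorphism of convolution) and the transpose --- which cancel. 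The fix for your proposal is to precompose your $\Phi$ with $\antipodeSym$; the resulting map is then precisely the paper's Fourier Transform, and its multiplicativity is most cleanly obtained not from \ref{eqn:RepsDef} directly but from the Convolution Theorem \ref{thm_categoricalConvolutionTheorem}, together with the diagrammatic computation (Equations \ref{eqn:GNTheoremProof2} and \ref{eqn:GNTheoremProof3}) showing that the algebra induced by $(\XcomultSym,\XcounitSym)$ on $(V\tensor V^\star)[\mathcal{R}]$ is pointwise composition.
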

\begin{proof}
First of all, the $R$-module $V \tensor V^\star$ can be given a ring structure by the usual tensor composition:
\begin{equation}
M \cdot N := \left( M \tensor N \right) \cdot \left( \id{V} \tensor \cup_V \tensor \id{V^\star} \right)
\end{equation}
where $M,N \in \Hom{\CategoryC}{V \tensor V^\star}{\tensorUnit}$; the identity tesor is $\cap_V$. The free $(V \tensor V^\star)$-module $(V \tensor V^\star)[\mathcal{R}]$ is in particular an $R$-module (but not a free one), and admits the unital algebra structure of pointwise composition:
\begin{equation}
(M_\rho)_{\rho \in \mathcal{R}} \cdot (N_\rho)_{\rho \in \mathcal{R}} := (M_\rho \cdot N_\rho)_{\rho \in \mathcal{R}}
\end{equation}
where $(M_\rho)_\rho, (N_\rho)_\rho$ are $\mathcal{R}$-indexed families of co-states $V \tensor V^\star \rightarrow \tensorUnit$ of $\CategoryC$. Since the representations satisfy the hypotheses of Lemma \ref{lemma_FTTraditionalSMC2}, we can assume the Fourier Transform to take the form of Equation \ref{eqn:FTv2nonabelian}. Then using isometry of $i_\rho$ (and hence of $i_\rho^\star$), we obtain the following expression for the Fourier Transform $\tilde{f}$:
\begin{equation}\label{eqn:GNTheoremProof1}
	\hbox{\input{modules/pictures/GNTheoremProof1.tex}}
\end{equation}	
where we have defined states (which we will refer to as \textit{tensors}) $\bra{\tilde{f}(\rho)}: V \tensor V^\star \rightarrow \tensorUnit$ by:
\begin{equation}
\bra{\tilde{f}(\rho)} := \left( (i_\rho \tensor i_\rho^\star) \circ \rho \circ \hbox{\input{modules/symbols/antipodeSym.tex}}\! \circ \ket{f} \right)^T
\end{equation}
Exactly as in Equation \ref{eqn:FTSummation} one could recover each scalar $\tilde{f}(\chi)$ as the inner product $\braket{\tilde{f}}{\chi}$, i.e. as the composition $\bra{\tilde{f}} \circ \left(\bra{\chi}\right)^\dagger$, so one can recover the tensor $\bra{\tilde{f}(\rho)}$ in Equation \ref{eqn:GNTheoremProof1} as the composition $\bra{\tilde{f}} \circ \rho^\dagger$. Thus, like Equation \ref{eqn:FTSummation} turned the Fourier Transform into an injection of sets $\mathcal{F}_{\mathbb{G}}: R[G_{\hbox{\input{modules/symbols/ZdotSym.tex}}\!}] \inject R[G_{\hbox{\input{modules/symbols/XdotSym.tex}}\!}]$ (in fact, a bijection), so Equation \ref{eqn:GNTheoremProof1} turns the Fourier Transform into an injection of sets $\mathcal{F}_{\mathbb{G}}: R[\mathbb{G}] \inject (V\tensor V^\star)[\mathcal{R}]$ (but not a bijection unless the representation of Equation \ref{eqn:GNSRep} is a unitary morphism of $\CategoryC$):
\begin{align}
f \in R[\mathbb{G}] \mapsto \tilde{f} \in (V\tensor V^\star)[\mathcal{R}]\text{, where we defined } \tilde{f} = \rho \mapsto \bra{\tilde{f}(\rho)} 
\end{align}
Furthermore, the injection is necessarily $R$-linear, and isometric by Theorem \ref{lemma_CategoricalGNRep}, making it an isometry of $R$-modules. Finally, by the Convolution Theorem \ref{thm_categoricalConvolutionTheorem} it is a morphism of algebras from $R[\mathbb{G}]$ (which is defined as the $R$-algebra induced by $(\XmultSym,\XunitSym)$ on $R[G_{\hbox{\input{modules/symbols/ZdotSym.tex}}\!}]$) to the $R$-algebra on $(V \tensor V^\star)[\mathcal{R}]$ induced by $(\XcomultSym,\XcounitSym)$. All we need to show is that the latter is the algebra of pointwise composition of tensors.
\begin{equation}\label{eqn:GNTheoremProof2}
	\hbox{\input{modules/pictures/GNTheoremProof2.tex}}
\end{equation}	
\begin{equation}\label{eqn:GNTheoremProof3}
	\hbox{\input{modules/pictures/GNTheoremProof3.tex}}
\end{equation}
We have used the fact that all $i_\rho$ are isometries, and the fact that the representations form a resolution of the identity; then we have used the definition of representation to eliminate $\XcomultSym$, followed by orthogonality of representations and, finally, isometry of $i_\rho$ again. The leftmost map in Equation \ref{eqn:GNTheoremProof2} is the product of $(M_\rho)_{\rho\in R}$ and $(N_\rho)_{\rho\in R}$ in the algebra induced on $(V \tensor V^\star)[\mathcal{R}]$ by $(\XcomultSym,\XunitSym)$, while the rightmost map in Equation \ref{eqn:GNTheoremProof3} is the product of $(M_\rho)_{\rho\in R}$ and $(N_\rho)_{\rho \in R}$ in the pointwise composition algebra on $(V \tensor V^\star)[\mathcal{R}]$. Furthermore, by definition of representation and isometry of $i_\rho$, it is immediate to see that the algebra unit $\XcounitSym$ acts exactly as the algebra unit $\cup_V$. Thus $\mathcal{F}_{\mathbb{G}}$ is indeed an isometric $R$-algebra homomorphism from $R[\mathbb{G}]$ to $(V \tensor V^\star)[\mathcal{R}]$ with pointwise composition. $\mathcal{F}_{\mathbb{G}}$ restricted to its image (which is necessarily an $R$-algebra) is the desired isometric $R$-algebra isomorphism between $R[\mathbb{G}]$ and a subalgebra of $(V \tensor V^\star)[\mathcal{R}]$.
\end{proof}

\noindent Theorem \ref{thm_CategoricalGNTheorem} finally provides us with an appropriate generalisation of Pontryagin duality to the case of non-abelian internal groups in $\fdHilbCategory$. The first part goes pretty much as in the abelian case:
\begin{enumerate}
\item[1.] The Fourier Transform for an internal group $\mathbb{G} = (\,\SpaceG,\hbox{\input{modules/symbols/XdotSym.tex}}\!,\hbox{\input{modules/symbols/ZdotSym.tex}}\!)$ in $\fdHilbCategory$ is given by Definition \ref{def_FourierTransform}.
\item[2.] By Theorem \ref{thm_CategoricalPontryaginDuality}, it is a canonical isomorphism of sets between states and co-states of $\SpaceG$.
\item[3.] By Theorem \ref{thm_EnrichedPontryaginDuality}, it is furthermore an isomorphism of finite-dimensional Hilbert spaces  
\item[4.] The group elements form a basis of $\SpaceG$, inducing a canonical isomorphism $\Hom{\fdHilbCategory}{\complexs}{\SpaceG} \isom \Ltwo{G}$.
\item[4b.] This is in fact an isomorphism of C*-algebras between $(\SpaceG,\XcomultSym,\XunitSym)$ and the convolution algebra on $\Ltwo{G}$
\end{enumerate}

When things come to the $\hbox{\input{modules/symbols/XdotSym.tex}}\!$ structure, the generalisation to representations and $\complexs$-algebras kicks in:
\begin{enumerate}
\item[5.] By Theorem \ref{thm_PeterWeyl}, letting $(\rho: \SpaceG \rightarrow \SpaceH_\rho \tensor \SpaceH_\rho^\star)_{\rho \in \mathcal{R}}$ be the irreducible representations\footnote{To be more precise, a complete set of representatives from the equivalence classes of irreducible representations.} of $G$ yields a finite, normalisable orthogonal family of (necessarily unitary) representations.\footnote{With normalisation factors $d_\rho / N$, where $d_\rho$ is the dimensionality of irrep $\rho$ and $N$ is the dimensionality of $\SpaceG$, i.e. the number of elements in $G$.}
\item[6.] Taking $V = \oplus_{\rho \in \mathcal{R}} \SpaceH_\rho$ and $i_\rho: \SpaceH_\rho \monom V$ the isometric subspace injections given by the biproduct $\oplus$ of Hilbert spaces, yields an isometric representation $\oplus_{\rho \in \mathcal{R}} \rho$ as per Lemma \ref{lemma_CategoricalGNRep} (which is in fact a representation of C*-algebras).
\item[7.] A C*-algebra $(V \tensor V^\star)[\mathcal{R}]$ can be defined on the Hilbert space of $(V \tensor V^\star)$-valued functions on $\mathcal{R}$ by considering pointwise composition:
\begin{equation}
(M_\rho)_{\rho \in \mathcal{R}} \bullet (N_\rho)_{\rho \in \mathcal{R}} := (M_\rho \cdot N_\rho)_{\rho \in \mathcal{R}} 
\end{equation}
\item[7.] Theorem \ref{thm_CategoricalGNTheorem} (which uses Enriched Pontryagin Duality Theorem \ref{thm_EnrichedPontryaginDuality}, the Convolution Theorem \ref{thm_categoricalConvolutionTheorem}, Lemma \ref{lemma_FTTraditionalSMC2} and the C*-algebra representation given by Lemma \ref{lemma_CategoricalGNRep}) turns the Fourier Transform into an isomorphism of algebras between the convoluton algebra on $\Ltwo{G}$ and a subalgebra of the pointwise composition algebra on $(V \tensor V^\star)[\mathcal{R}]$. This is in fact an isomorphism of C*-algebras.
\end{enumerate}

\noindent To summarise, the Fourier Transform for an internal group $\mathbb{G} = (\,\SpaceG,\hbox{\input{modules/symbols/XdotSym.tex}}\!,\hbox{\input{modules/symbols/ZdotSym.tex}}\!)$ in $\fdHilbCategory$ is always given by Definition \ref{def_FourierTransform}, and the different forms it takes, as an isomorphism of $\LtwoSym$-spaces or of C*-algebras, all depend on which concrete decomposition is available for the abstract form. In $\fdHilbCategory$, the group elements always form a finite, normalisable orthogonal resolution of the identity, endowing the states of $\SpaceG$ with the structure of the convolution algebra on $\Ltwo{G}$, irregardles of whether $\mathbb{G}$ is abelian or not. All the internal groups in $\fdHilbCategory$ now fall within one of two categories: 
\begin{enumerate}
\item[(a)] If $\mathbb{G}$ is abelian, then the multiplicative characters $(\chi: \SpaceG \rightarrow \complexs)_{\chi \in G^\wedge}$ also form a finite, normalisable orthogonal resolution of the identity, and the Fourier Transform is a canonical isomorphism of C*-algebras between the convolution algebra on $\Ltwo{G}$ and the pointwise multiplication\footnote{Multiplication in $\complexs$, which is commutative.} algebra on $\Ltwo{G^\wedge}$:
\begin{align}
\mathcal{F}_{\mathbb{G}}[f] &:= \chi \mapsto \sum_{g \in G} f(g)\,\chi^\star(g)\\
\mathcal{F}_{\mathbb{G}}[f\star f'] &:= \chi \mapsto \mathcal{F}_{\mathbb{G}}[f](\chi)\mathcal{F}_{\mathbb{G}}[f'](\chi)
\end{align}
\item[(b)] If $\mathbb{G}$ is not abelian then the multiplicative characters cannot form an orthogonal basis. However, Theorem~\ref{thm_PeterWeyl} gives a finite, normalisable family $(\rho:\SpaceG \rightarrow \SpaceH_\rho \tensor \SpaceH_\rho^\star)_{\rho \in \mathcal{R}}$ of irreducible unitary representations of $G$ forming an orthogonal resolution of the identity. Letting $V = \oplus_{\rho \in \mathcal{R}}$, and $i_\rho: \SpaceH_\rho \monom V$ be the subspace injections, the Fourier Transform is a canonical isomorphism of C*-algebras, this time between the convolution algebra on $\Ltwo{G}$ and a subalgebra of the pointwise composition\footnote{Composition in $V \tensor V^\star$, which is non-commutative.} algebra on $(V \tensor V^\star)[\mathcal{R}]$:
\begin{align}
\mathcal{F}_{\mathbb{G}}[f] &:= \rho \mapsto \sum_{g \in G} f(f)\, \rho^\star(g)\\
\mathcal{F}_{\mathbb{G}}[f \star f'] &:= \rho \mapsto \mathcal{F}_{\mathbb{G}}[f](\rho)\circ\mathcal{F}_{\mathbb{G}}[f](\rho)
\end{align}
\end{enumerate}

\section{Conclusions}
\noindent In this work, we have provided an abstract characterisation of Fourier transform in terms of strongly complementary observables (more specifically, internal groups) in arbitrary $\dagger$-SMCs. We have then shown how, under appropriate conditions on the (at least distributively $\ComMonCategory$-enriched) category and observables involved, this definition specialises to a variety of different generalisations of the traditional Fourier transform:
\begin{enumerate}
\item[(a)] if the multiplicative characters form a resolution of the identity, we obtain a generalisation of the abelian Fourier transform from Hilbert spaces to modules over arbitrary semirings. In this case, the internal groups involved are always necessarily abelian, and in $\fdHilbCategory$ this recovers the usual Fourier transform.
\item[(b)] if the category is furthermore enriched over $R$-modules, which it can always be if we take $R$ to be its semiring of scalars, then we obtain a generalisation of Pontryagin duality from $\LtwoSym$-spaces to free modules over arbitrary semirings. In $\fdHilbCategory$, this recovers the usual Pontryagin duality.
\item[(c)] if the multiplicative characters fail to form a resolution of the identity, but some family of unitary representations (which can be defined in compact-closed categories) does, then we obtain a generalisation of the Gelfand-Naimark theorem from C*-algebras to algebras of free modules over arbitrary semirings. This is what happens for non-abelian groups in $\fdHilbCategory$, where this result reduces to the usual Gelfand-Naimark theorem.
\end{enumerate}

\noindent Aside from providing an interesting categorical generalisation of the Fourier transform, this work contributes to quantum information and computation by pin-pointing the operational features required by the quantum Fourier transform, one of the fundamental building blocks of quantum algorithms. Recent work by Aaronson and Ambainis~\cite{aaronson2014forrelation} provides a simple BQP-hard problem (FORRELATION) based entirely around the Fourier transform: we hope that the abstract tools provided in this work will provide valuable insight into the connection between strongly complementary observables and quantum speedup.\footnote{Or, more generally, the computational complexity class of the process theory in which they are embedded.}

\subparagraph*{Acknowledgements}
The authors would like to thank Bob Coecke, Aleks Kissinger, Chris Heunen, Amar Hadzihasanovic, Sukrita Chatterji and Nicol\`o Chiappori for comments, suggestions, useful discussions and support. Funding from EPSRC and Trinity College for the first author and The Rhodes Trust for the second is gratefully acknowledged. Both authors contributed equally to this work.

\bibliography{bibliography/CategoryTheory,bibliography/CategoricalQM,bibliography/NonLocalityContextuality,bibliography/QuantumComputing,bibliography/ClassicalMechanics,bibliography/LogicComputation,bibliography/Gravitation,bibliography/QFT,bibliography/StatisticalPhysics,bibliography/Misc,bibliography/StefanoGogioso,bibliography/thesis}

\appendix

\section{Background on Structures in Symmetric Monoidal Categories}
\label{app:basic}

\newcommand{\whitemonoid}[1]{\ensuremath{(#1, \ZmultSym, \ZunitSym)}}
\newcommand{\whitecomonoid}[1]{\ensuremath{(#1, \ZcomultSym, \ZcounitSym)}}
\newcommand{\blackmonoid}[1]{\ensuremath{(#1, \XmultSym, \XunitSym)}}
\newcommand{\blackcomonoid}[1]{\ensuremath{(#1, \XcomultSym, \XcounitSym)}}

In a dagger symmetric monoidal and unit object $I$, the \textbf{states} are maps $\ket{x}:I\to A$ for any object $A$.  The \textbf{co-states} are maps $(\ket{x})^{\dagger}=\bra{x}= A\to I$. The co-states are called effects by some authors. Scalars in this category are maps $s:I\to I$. This terminology is motivated by the interpretation of these concepts in $\fdHilbCategory$. See~\cite{abramsky2008categorical} as a main reference for these scalars and states in general monoidal categories. \\

\noindent We can equip any object $A$ in the category with a monoid \whitemonoid{A} and a comonoid \whitecomonoid{A}.  We can then define Frobenius algebras on that object when the monoid and comonoid interact in a particular way. A current reference for these structures is~\cite{coecke2015generalised}.

\begin{definition}
\label{def:frobenius}
In a dagger symmetric monoidal category the pair of a monoid \whitemonoid{A} and comonoid \whitecomonoid{A} form a \textbf{dagger-Frobenius algebra} ($\dagger$-FA) when the following equation holds:
\begin{equation}
\label{eq:frobenius}
\input{modules/pictures/1_Frobenius.tikz}
\end{equation}
We will often denote the $\dagger$-FA by its color $\hbox{\input{modules/symbols/ZdotSym.tex}}\!$.
\end{definition}

\noindent When $\ZmultSym$ is commutative the $\dagger$-FA is commutative (is a $\dagger$-CFA). The co-commutativity of $\ZcomultSym$ for a $\dagger$-CFA follows~\cite[Thm 3.2.8]{kissinger2012pictures}.

\begin{definition}
A $\dagger$-FA $(A, \ZmultSym,\ZunitSym,\ZcomultSym,\ZcounitSym)$ is special when
\begin{equation}
\ZmultSym\circ\ZcomultSym = \idm{A}.
\end{equation}
\end{definition}

\noindent A \textbf{classical structure} is a special commutative dagger Frobenius algebra. \\

\noindent In $\fdHilbCategory$, (special) classical structures correspond exactly to choices of (orthonormal) orthogonal bases~\cite{coecke2013new}. In particular the basis elements correspond to certain classical states of the classical structure.
\begin{definition}
\label{def:copyables}
The set of \textbf{classical states} $K_{\hbox{\input{modules/symbols/ZdotSym.tex}}\!}$ for a $\dagger$-FA $\hbox{\input{modules/symbols/ZdotSym.tex}}\!$ are all states $j:I\to A$ such that:
\begin{equation}
\label{eq:copy}
\begin{aligned}
\begin{tikzpicture}[yscale=-1]
\node [kpoint] (s) at (1.25,3) {$j$};
\draw (2,0) to [out=up, in=\seangle] (1.25,1.5);
\draw (0.5,0) to [out=up, in=\swangle] (1.25, 1.5);
\draw (1.25,1.5) to (s.north);
\node [dot, fill=\Zcolour] at (1.25,1.5) {};
\end{tikzpicture}
\end{aligned}
\quad=\quad
\begin{aligned}
\begin{tikzpicture}[yscale=-1]
\node (a) [kpoint] at (2.5,0) {$j$};
\node (b) [kpoint] at (0,0) {$j$};
\draw (a) to (2.5,2);
\draw (b) to (0,2);
\end{tikzpicture}
\end{aligned}
\end{equation}
\end{definition}

\begin{definition}\label{def:coherence}
In an dagger symmetric monoidal category, two $\dagger$-qSFA's (\, $\hbox{\input{modules/symbols/ZdotSym.tex}}\!$) and (\,$\hbox{\input{modules/symbols/XdotSym.tex}}\!$) are \textbf{coherent} when the following equations hold:
\begin{equation}
\label{eq:coherence}
    \input{modules/pictures/2_coher.tikz}
\end{equation}
\end{definition}

\noindent The usual notion of complementary (unbiased) bases in $\fdHilbCategory$ can be lifted to $\dagger$-SFA's, where it means that they obey the Hopf law:
\begin{definition}[Complementarity]
\label{def:complementarity}
In a dagger symmetric monoidal category, a $\dagger$-SFA (\,$\hbox{\input{modules/symbols/ZdotSym.tex}}\!$) and a $\dagger$-SCFA (\,$\hbox{\input{modules/symbols/XdotSym.tex}}\!$) on the same object are \textbf{complementary} when the following equation holds:
\begin{equation}
\label{eq:complementarity}
    \input{modules/pictures/2_antipodehopf.tikz}
\end{equation}
where $\hbox{\input{modules/symbols/antipodeSym.tex}}\!$ is the antipode from Definiton~\ref{def:Antipode}.
\end{definition}

\noindent The correspondence between this definition and that of mutually unbiased bases in $\fdHilbCategory$ is shown by \cite{coecke2015generalised}.

\end{document}